\newtheorem*{cor}{Corollary}
\newtheorem*{prop}{Proposition}
\newtheorem*{lemm}{Lemma}
\newtheorem*{examp}{Example}
\newcommand{\PreserveBackslash}[1]{\let\temp=\\#1\let\\=\temp}
\newcolumntype{C}[1]{>{\PreserveBackslash\centering}p{#1}}
\begin{document}

\title{\textbf{On semiparametric estimation of the intercept of the sample
selection model: \\
a kernel approach}}
\author{Zhewen Pan \\
School of Economics, Zhejiang University of Finance \& Economics \\
{\normalsize 18 Xueyuan Street, Xiasha, Hangzhou 310018, China. Email:
panzhew@zufe.edu.cn}}
\maketitle

\begin{abstract}
This paper presents a new perspective on the identification at infinity for
the intercept of the sample selection model as identification at the
boundary via a transformation of the selection index. This perspective
suggests generalizations of estimation at infinity to kernel regression
estimation at the boundary and further to local linear estimation at the
boundary. The proposed kernel-type estimators with an estimated
transformation are proven to be nonparametric-rate consistent and
asymptotically normal under mild regularity conditions. A fully data-driven
method of selecting the optimal bandwidths for the estimators is developed.
The Monte Carlo simulation shows the desirable finite sample properties of
the proposed estimators and bandwidth selection procedures.
\end{abstract}

\noindent \emph{Keywords}{\small : identification at infinity, kernel
regression estimation, boundary effect, local linear estimation, bandwidth
selection}

\noindent \emph{JEL codes}{\small : C01, C13, C14, C34}

\newpage

\section{Introduction}

\label{sec:intro}Since it was first introduced by the seminal papers of \cite%
{heckman1974shadow}, \cite{gronau1974wage}$,$ and \cite{lewis1974comments},
the sample selection model has been increasingly widely applied in empirical
studies to address potentially nonrandom samples that may arise from a
variety of causes, such as improper sampling design, self-selectivity,
nonresponse on survey questions, and attrition from social programs. Several
important applications of the sample selection model, such as the estimation
of average treatment effects and the decomposition of wage differentials,
rely on an estimate of the intercept.

The intercept of the sample selection model is conventionally estimated
along with the slope coefficients by means of the parametric maximum
likelihood approach or the likelihood-based two-step procedure %
\citep{heckman1979sample}. When the distribution of the model disturbance is
misspecified, however, the parametric likelihood-based estimators for
limited dependent variable models may possess evident bias and are likely
inconsistent for the true value of interest %
\citep[e.g.,][]{arabmazar1982investigation}. This finding spawned
considerable and influential literature on semiparametric identification and
estimation approaches that do not rest on parametric specification of the
disturbance distribution, among which \cite{chamberlain1986asymptotic}
invented the notion of \textquotedblleft identification at
infinity\textquotedblright , building identification upon the unbounded
support of the regressor distribution. According to the identification at
infinity, \cite{heckman1990varieties} proposed a semiparametric estimator
for the intercept of the sample selection model. However, %
\citeauthor{heckman1990varieties}'s estimator contains discontinuous
indicator functions, which greatly complicate the analysis of the
statistical properties. \cite{andrews1998semiparametric} suggested replacing
the indicator function with a smoothed version and established the
consistency and asymptotic normality of their modified estimator. Since
then, the identification-at-infinity method has been standard for
semiparametrically estimating the intercept of the sample selection model 
\citep[see][among others]{schafgans1998ethnic, schafgans2000gender,
hussinger2008r, mulligan2008selection, liu2009maternal, shen2013determinants}%
.

Nevertheless, identification-at-infinity estimation must choose a smoothing
parameter or bandwidth controlling for the proportion of observations used,
and no bandwidth selection algorithm that is both theoretically valid and
practically tractable in the context of the intercept estimation has yet
been developed. This is mainly because the asymptotic biases and variances
of the estimators of \cite{heckman1990varieties} and \cite%
{andrews1998semiparametric} are all implicit functions of the bandwidth. To
choose a theoretically valid bandwidth, one must impose high-level
assumptions on the tail behaviors of the regressor and disturbance
distributions in the selection equation and then carefully estimate some
sort of tail indices, as in \cite{klein2015estimation}. However, a precise
estimate of the tail index of the selection disturbance distribution is
difficult to obtain for two reasons. First, only the information in the
tails, which is limited, is useful for estimating tail indices. Second,
there is additional information loss when estimating the tail index of the
selection disturbance distribution because one can observe only the binary
selection outcome instead of the continuous selection disturbance. The
imprecise estimation of the tail index makes it difficult to choose an
eligible bandwidth in practice \citep{tan2018root}. In empirical studies,
researchers typically report several estimates of the intercept according to
different values of bandwidth selected by simple rules, such as by sample
quantiles of the selection linear index. However, these simple rules lack
theoretical justification, and an inappropriately selected bandwidth may
lead to sizable estimation bias and misleading inference results %
\citep{schafgans2004finite}. Additionally, confusion will arise if different
bandwidths lead to contradictory conclusions.

In this paper, I extend the identification-at-infinity estimators to a
kernel regression estimator at the boundary and provide a simple bandwidth
selection algorithm that is theoretically optimal in the sense of minimizing
the asymptotic mean squared estimation error. By transforming the
identification at infinity into identification at the boundary, I propose a
kernel regression estimator for the intercept of the sample selection model
that includes the estimators of \cite{heckman1990varieties} and \cite%
{andrews1998semiparametric} as special cases by taking particular forms of
transformation. I suggest using the cumulative distribution function (CDF)
of the selection linear index as the transformation, which makes the
proposed kernel estimator substantially different from the existing
identification-at-infinity estimators. Under this specific form of
transformation, the asymptotic bias and variance of the kernel estimator are
explicit functions of the bandwidth, motivating a simple plug-in procedure
of optimal bandwidth selection. In practice, however, the CDF of the linear
index is unknown. I adopt the empirical CDF estimation and show that the
induced estimation error is asymptotically negligible under regularity
conditions, implying that the kernel estimator with the empirical CDF
follows the same asymptotic distribution as if the true CDF were known. As a
result, the proposed bandwidth selection algorithm is theoretically valid.

Careful consideration of the asymptotic distribution of the kernel
regression estimator for the intercept indicates a boundary effect in that
the asymptotic bias has a larger order than the nonparametric kernel
regression estimator in the interior because the estimator is, in nature, a
Nadaraya-Watson or local constant estimator at the boundary. Hence, I
further consider the local linear estimation of the intercept for bias
reduction and show that it achieves a univariate nonparametric rate. The
consistency and asymptotic normality of the estimator are established, and
an optimal bandwidth selection algorithm tailored to the local linear case
is presented.

The main contributions of this paper are as follows. First, I provide a
novel interpretation of identification at infinity. Although infinity is
intractable or at least irregular in most econometric models, it is merely
an ordinary boundary point of the extended real line. Under the monotonic
transformation that serves to construct a metric, the identification at
infinity is converted into identification at the boundary, and kernel-type
estimators are naturally derived. This approach may be helpful in other
irregular identification problems involving infinity 
\citep[see,
e.g.,][]{khan2010irregular}. Second, I provide a theoretically justified
procedure for bandwidth selection that is easy to implement. By implementing
a special form of transformation, the asymptotic biases and variances of the
kernel-type estimators become explicit functions of the bandwidth, based on
which a fully data-driven algorithm is developed to choose the optimal
bandwidth. The regularization method in \cite{imbens2012optimal} is employed
to keep the random denominator of the estimated bandwidth away from zero.
Third, in the course of developing the statistical properties of the
estimators, I provide a\ rigorous treatment of the estimation error induced
by the empirical CDF. By means of a decomposition into an empirical process
component and a discontinuous error component, the estimation error of the
empirical CDF can be delicately controlled by the uniform rate of
convergence of the empirical process over shrinking intervals %
\citep[e.g.,][]{stute1982oscillation} and by the convergence results
developed by \cite{schafgans2002intercept}.

The rest of this paper is organized as follows. Section \ref{sec:model}
presents the sample selection model and reviews the semiparametric
estimators for the intercept in the literature. Section \ref{sec:KRE}
motivates the kernel regression estimator, provides the regularity
conditions under which the estimator is consistent and asymptotically
normal, discusses the choice of the transformation, and presents the optimal
bandwidth selection method. Section \ref{sec:LL} proposes local linear
estimation to eliminate the boundary effect. Section \ref{sec:simulation}
reports the results from a small simulation and Section \ref{sec:conclution}
concludes. All the technical proofs are left to the Appendix.

\section{The model and existing estimators}

\label{sec:model}

\subsection{The model}

The sample selection model has the following three-equation form: 
\begin{eqnarray}
D_{i} &=&1\left\{ X_{i}^{\prime }\beta _{0}>\varepsilon _{i}\right\} , 
\notag \\
Y_{i}^{\ast } &=&\mu _{0}+Z_{i}^{\prime }\theta _{0}+U_{i},  \label{model} \\
Y_{i} &=&Y_{i}^{\ast }D_{i},  \notag
\end{eqnarray}%
where $D_{i}$ is a binary selection variable, $Y_{i}^{\ast }$ is the latent
outcome, and $Y_{i}$ is the observed outcome. $X_{i}$ and $Z_{i}$ are random
vectors of regressors (excluding the constant term) of the selection and
outcome equations, respectively. $\varepsilon _{i}$ and $U_{i}$ are scalar
disturbances that are possibly correlated with each other. $\beta _{0}$ and $%
\theta _{0}$ are vectors of the slope coefficients, and $\mu _{0}$ is a
scalar intercept coefficient, both of which are the estimands of the model. $%
U_{i}$ is assumed to have zero mean to ensure identifiability of the
intercept.

This paper is primarily concerned with the estimation of $\mu _{0}$ in the
sample selection model (\ref{model}) without imposing a normal (or any other
parametric) distribution restriction on the disturbances. The estimation of $%
\mu _{0}$, however, relies on preliminary $\sqrt{n}$-consistent estimators
for $\beta _{0}$ and $\theta _{0}$, which are denoted by $\hat{\beta}$ and $%
\hat{\theta}$. Several such estimators are available in the literature. For
instance, $\hat{\beta}$ can be one of the semiparametric estimators for the
binary response model %
\citep[e.g.,][]{klein1993efficient,lewbel2000semiparametric} or for the
single-index model %
\citep[e.g.,][]{powell1989semiparametric,ichimura1993semiparametric}, and $%
\hat{\theta}$ can be one of the semiparametric slope estimators for the
sample selection model 
\citep[e.g.,][to name a few]{gallant1987semi, chen1998efficient,
powell2001semiparametric, lewbel2007endogenous, newey2009twostep, chen2010semiparametric}%
. For notational convenience, denote $W_{i}=X_{i}^{\prime }\beta _{0}$ and $%
\hat{W}_{i}=X_{i}^{\prime }\hat{\beta}$, hence $D_{i}=1\left\{
W_{i}>\varepsilon _{i}\right\} $.

\subsection{The identification-at-infinity estimators}

A natural strategy for identifying the intercept $\mu _{0}$ starts from the
expectation of $Y_{i}-Z_{i}^{\prime }\theta _{0}$ conditional on the
observation being selected:%
\begin{equation*}
E\left[ Y_{i}-Z_{i}^{\prime }\theta _{0}\left\vert D_{i}=1\right. \right] =E%
\left[ Y_{i}^{\ast }-Z_{i}^{\prime }\theta _{0}\left\vert D_{i}=1\right. %
\right] =\mu _{0}+E\left[ U_{i}\left\vert D_{i}=1\right. \right] .
\end{equation*}%
In the context of sample selection, where $U_{i}$ is correlated with $%
\varepsilon _{i}$, the selectivity bias term $E\left[ U_{i}\left\vert
D_{i}=1\right. \right] $ is generally nonzero and contaminates the
identification of $\mu _{0}$. Inspired by the fact that $E\left[
U_{i}\left\vert D_{i}=1\right. \right] $ will become arbitrarily close to
zero for those $W_{i}$ such that $\Pr \left( D_{i}=1\left\vert W_{i}\right.
\right) $ is arbitrarily close to unity, \cite{heckman1990varieties}
suggested identifying $\mu _{0}$ at infinity, i.e., when $W_{i}$ takes
arbitrarily large values. This idea can be heuristically formulated as%
\begin{equation*}
E\left[ Y_{i}-Z_{i}^{\prime }\theta _{0}\left\vert D_{i}=1,W_{i}=+\infty
\right. \right] =\mu _{0}+E\left[ U_{i}\left\vert D_{i}=1\right.
,W_{i}=+\infty \right] =\mu _{0}+E\left[ U_{i}\left\vert W_{i}=+\infty
\right. \right] =\mu _{0},
\end{equation*}%
provided the conditional mean independence is assumed. An intuitive
estimator based on the above identification strategy is%
\begin{equation}
\hat{\mu}^{H}=\frac{\sum_{i=1}^{n}\left( Y_{i}-Z_{i}^{\prime }\hat{\theta}%
\right) D_{i}\cdot 1\left\{ \hat{W}_{i}>\gamma _{n}\right\} }{%
\sum_{i=1}^{n}D_{i}\cdot 1\left\{ \hat{W}_{i}>\gamma _{n}\right\} },
\label{H1990Est}
\end{equation}%
where $\left\{ \gamma _{n}\right\} $ is a sequence of positive smoothing
parameters such that $\gamma _{n}\rightarrow \infty $ as $n\rightarrow
\infty $. The estimator $\hat{\mu}^{H}$ is essentially a sample average of $%
\mu _{0}+U_{i}$ over a decreasingly small fraction of all observations. The
effective sample size depends on the proportion of data censoring, the
degree of tail heaviness of $\hat{W}_{i}$'s distribution, and the choice of $%
\gamma _{n}$.

To facilitate the development of distribution theory of the
identification-at-infinity estimation, \cite{andrews1998semiparametric}
replaced the indicator function in \citeauthor{heckman1990varieties}'s
estimator (\ref{H1990Est}) with a smoothed function and proposed a weighted
sample average estimator:%
\begin{equation}
\hat{\mu}^{AS}=\frac{\sum_{i=1}^{n}\left( Y_{i}-Z_{i}^{\prime }\hat{\theta}%
\right) D_{i}\cdot s\left( \hat{W}_{i}-\gamma _{n}\right) }{%
\sum_{i=1}^{n}D_{i}\cdot s\left( \hat{W}_{i}-\gamma _{n}\right) },
\label{AS1998Est}
\end{equation}%
where $s\left( \cdot \right) $ is a nondecreasing $\left[ 0,1\right] $%
-valued function that has a third derivative bounded over $R$ and satisfies $%
s\left( w\right) =0$ for $w\leq 0$ and $s\left( w\right) =1$ for $w\geq b$,
for some $b>0$. By taking advantage of the smoothness of $s\left( \cdot
\right) $, \cite{andrews1998semiparametric} showed the asymptotic
negligibility of the estimation error induced by the preliminary estimators $%
\hat{\beta}$ and $\hat{\theta}$ and established the consistency and
asymptotic normality of $\hat{\mu}^{AS}$ under mild regularity conditions.
The rate of convergence of $\hat{\mu}^{AS}$ depends both upon the tail
heaviness of $W_{i}$'s distribution and upon the rate of divergence of $%
\gamma _{n}$. However, theoretically valid procedures for choosing $\gamma
_{n}$ have not yet been developed.

\subsection{Other existing estimators}

The semiparametric literature has neglected, to a certain extent, the
intercept estimation of the sample selection model. This is mainly because
the intercept is absorbed into the selectivity bias correction term in the
course of estimating the slope coefficients. The only exceptions, besides $%
\hat{\mu}^{H}$ and $\hat{\mu}^{AS}$, are the estimators of \cite%
{gallant1987semi}, \cite{lewbel2007endogenous}, and \cite%
{chen2010semiparametric}.

By employing Hermite series to approximate the unknown bivariate density
function of the disturbances, \cite{gallant1987semi} considered
semiparametric maximum likelihood estimation of the intercept along with the
slopes. The consistency of their estimator requires complicated continuity
conditions on the distributions of the disturbances and regressors that are
difficult to verify. Moreover, the asymptotic distribution of their
estimator has not been established. \cite{lewbel2007endogenous} achieved
identification of the intercept by the presence of a special regressor that
has a large support. When the support is infinite, his identification
strategy would be essentially equivalent to the identification at infinity. 
\cite{chen2010semiparametric} proposed a kernel-weighted
pairwise-difference-type estimator for the intercept and showed its
consistency and asymptotic normality. However, their estimation method
requires the disturbances to be jointly symmetrically distributed; this
joint symmetry condition may not hold in practice.

\section{Kernel regression estimation}

\label{sec:KRE}The kernel approach to estimating the intercept $\mu _{0}$ in
the sample selection model (\ref{model}) is grounded in the identification
at infinity:%
\begin{equation}
\mu _{0}=E\left[ Y_{i}-Z_{i}^{\prime }\theta _{0}\left\vert
D_{i}=1,W_{i}=+\infty \right. \right] .  \label{IdenAtInf}
\end{equation}%
Let $F\left( \cdot \right) $ be an absolutely continuous CDF that is
strictly increasing over $R$; then, the infinity condition $W_{i}=+\infty $
is equivalent to a boundary condition $F\left( W_{i}\right) =F\left( +\infty
\right) =1$. Therefore, the identification at infinity (\ref{IdenAtInf}) can
be written as identification at the boundary:%
\begin{equation}
\mu _{0}=E\left[ Y_{i}-Z_{i}^{\prime }\theta _{0}\left\vert D_{i}=1,F\left(
W_{i}\right) =1\right. \right] .  \label{IdenAtBoundary}
\end{equation}%
In other words, $\mu _{0}$ is identified by the conditional expectation of $%
Y_{i}-Z_{i}^{\prime }\theta _{0}$ given that the observation is selected and
that the transformation of $W_{i}$, $F\left( W_{i}\right) $, takes a
boundary value. The identification-at-boundary equation (\ref{IdenAtBoundary}%
) motivates the kernel regression estimation of $\mu _{0}$:%
\begin{equation}
\tilde{\mu}=\frac{\displaystyle\sum_{i=1}^{n}\left( Y_{i}-Z_{i}^{\prime }%
\hat{\theta}\right) D_{i}k\left( \frac{1-F\left( \hat{W}_{i}\right) }{h_{n}}%
\right) }{\displaystyle\sum_{i=1}^{n}D_{i}k\left( \frac{1-F\left( \hat{W}%
_{i}\right) }{h_{n}}\right) },  \label{LC}
\end{equation}%
where $\hat{W}_{i}=X_{i}^{\prime }\hat{\beta}$, $\hat{\beta}$ and $\hat{%
\theta}$ are preliminary estimators for $\beta _{0}$ and $\theta _{0}$, $%
k\left( \cdot \right) $ is a kernel function defined on $\left[ 0,\infty
\right) $, and $\left\{ h_{n}\right\} $ is a sequence of positive bandwidth
parameters such that $h_{n}\rightarrow 0$ as $n\rightarrow \infty $. The
values of the kernel over the negative reals are irrelevant since the
argument of it is always positive. Some examples of $k\left( \cdot \right) $
are given in Table \ref{table:kernel}. $\tilde{\mu}$ is a kernel regression
estimator at the boundary. Alternatively, it can be viewed as a kernel
regression estimator at infinity over the extended real line $\bar{R}=\left[
-\infty ,+\infty \right] =R\cup \left\{ -\infty ,+\infty \right\} $. A
distance function with which $\bar{R}$ would become a (compact) metric space
can be defined as $d\left( w_{1},w_{2}\right) =\left\vert F\left(
w_{1}\right) -F\left( w_{2}\right) \right\vert $\ for any $w_{1},w_{2}\in 
\bar{R}$. Accordingly, the numerator within the kernel is the distance
between $\hat{W}_{i}$ and infinity since $d\left( \hat{W}_{i},+\infty
\right) =\left\vert F\left( \hat{W}_{i}\right) -1\right\vert =1-F\left( \hat{%
W}_{i}\right) $. From this perspective, our estimation method is an
extension of kernel regression to the extended real line.

Similar to the identification-at-infinity estimators $\hat{\mu}^{H}$ and $%
\hat{\mu}^{AS}$ defined in (\ref{H1990Est}) and (\ref{AS1998Est}), the
kernel estimator $\tilde{\mu}$ is also a (weighted) sample average over a
vanishingly small subset of the entire data set. However, the local average
in $\tilde{\mu}$ is over observations for $F\left( \hat{W}_{i}\right) $ in a
left neighborhood of the right boundary point, rather than for $\hat{W}_{i}$
near positive infinity. Notably, $\tilde{\mu}$ generalizes the
identification-at-infinity estimators in the sense that it nests $\hat{\mu}%
^{H}$ and $\hat{\mu}^{AS}$ as special cases by taking specific forms of $%
F\left( \cdot \right) $, $k\left( \cdot \right) $ and $h_{n}$.

\begin{prop}
(i) For any strictly increasing $F\left( \cdot \right) $, if $k\left(
u\right) =1\left\{ 0\leq u\leq 1\right\} $ and $h_{n}=1-F\left( \gamma
_{n}\right) $, then $\tilde{\mu}=\hat{\mu}^{H}$. (ii) If $F\left( \cdot
\right) $ is the standard Laplacian CDF such that $F\left( w\right)
=1-\left( 1\left/ 2\right. \right) \exp \left( -w\right) $ for $w\geq 0$, if 
$k\left( u\right) =s\left( -\log u\right) $, and if $h_{n}=1-F\left( \gamma
_{n}\right) =\left( 1\left/ 2\right. \right) \exp \left( -\gamma _{n}\right) 
$, then $\tilde{\mu}=\hat{\mu}^{AS}$.
\end{prop}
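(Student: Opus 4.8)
The plan is to reduce both parts to a term-by-term comparison of weights. Note that $\tilde{\mu}$, $\hat{\mu}^{H}$, and $\hat{\mu}^{AS}$ all share the common ratio structure $\sum_{i}(Y_{i}-Z_{i}^{\prime}\hat{\theta})D_{i}\omega_{i}\big/\sum_{i}D_{i}\omega_{i}$, differing only in the weight $\omega_{i}$ attached to observation $i$. Hence it suffices to show that, under the stated choices of $F$, $k$, and $h_{n}$, the kernel weight $\omega_{i}=k\!\left((1-F(\hat{W}_{i}))/h_{n}\right)$ coincides observation by observation with $1\{\hat{W}_{i}>\gamma_{n}\}$ in part (i) and with $s(\hat{W}_{i}-\gamma_{n})$ in part (ii); once the weights agree, the numerators and denominators agree and the ratios are equal.

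For part (i), I would substitute $k(u)=1\{0\leq u\leq 1\}$ and $h_{n}=1-F(\gamma_{n})$, giving $\omega_{i}=1\{0\leq(1-F(\hat{W}_{i}))/(1-F(\gamma_{n}))\leq 1\}$. The lower bound holds automatically, since $F\leq 1$ forces the numerator to be nonnegative and strict monotonicity at a finite $\gamma_{n}$ gives $F(\gamma_{n})<1$, so the denominator is positive. The upper bound rearranges to $F(\hat{W}_{i})\geq F(\gamma_{n})$, which by the strict monotonicity of $F$ is equivalent to $\hat{W}_{i}\geq\gamma_{n}$. Thus $\omega_{i}=1\{\hat{W}_{i}\geq\gamma_{n}\}$, which differs from the Heckman weight $1\{\hat{W}_{i}>\gamma_{n}\}$ only on the event $\{\hat{W}_{i}=\gamma_{n}\}$; since $\hat{W}_{i}$ carries no atom at $\gamma_{n}$ this event is negligible, and the two estimators coincide.

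For part (ii), I would insert the Laplacian CDF, $k(u)=s(-\log u)$, and $h_{n}=(1/2)\exp(-\gamma_{n})$. The key computation is on the region $\hat{W}_{i}\geq 0$, where the identity $1-F(w)=(1/2)\exp(-w)$ holds, so that $(1-F(\hat{W}_{i}))/h_{n}=\exp(\gamma_{n}-\hat{W}_{i})$ and hence $-\log\!\left((1-F(\hat{W}_{i}))/h_{n}\right)=\hat{W}_{i}-\gamma_{n}$; this yields $\omega_{i}=s(\hat{W}_{i}-\gamma_{n})$ exactly, matching the Andrews--Schafgans weight. The step that needs care is the complementary region $\hat{W}_{i}<0$, where this exponential identity fails. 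There I would argue that both weights vanish: on the one hand $s(\hat{W}_{i}-\gamma_{n})=0$ because $\hat{W}_{i}-\gamma_{n}<0$ and $s(w)=0$ for $w\leq 0$; on the other hand, $(1-F(\hat{W}_{i}))/h_{n}\geq 1$ is equivalent to $F(\hat{W}_{i})\leq F(\gamma_{n})$, i.e. $\hat{W}_{i}\leq\gamma_{n}$, which certainly holds when $\hat{W}_{i}<0<\gamma_{n}$, so the argument $-\log\!\left((1-F(\hat{W}_{i}))/h_{n}\right)$ is nonpositive and $s$ of it is again $0$. Combining the two regions gives $\omega_{i}=s(\hat{W}_{i}-\gamma_{n})$ for every $i$, and therefore $\tilde{\mu}=\hat{\mu}^{AS}$.

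The only genuine obstacle is the bookkeeping at the two kinds of boundary: the strict-versus-weak inequality on the null set $\{\hat{W}_{i}=\gamma_{n}\}$ in part (i), and the breakdown of the clean exponential form of $1-F$ for negative arguments in part (ii), which must be patched using the vanishing of $s$ on the negative half-line rather than a direct algebraic identity.
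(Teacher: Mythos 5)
Your proof is correct: the paper states this Proposition without proof, treating it as immediate from substituting the specified $F$, $k$, and $h_{n}$ into the definition of $\tilde{\mu}$, and that direct weight-by-weight verification is exactly what you carry out. Your extra care at the boundaries --- the null event $\{\hat{W}_{i}=\gamma_{n}\}$ in part (i), where the closed-interval kernel gives the weight $1\{\hat{W}_{i}\geq \gamma_{n}\}$ rather than Heckman's $1\{\hat{W}_{i}>\gamma_{n}\}$ (so equality holds almost surely under a continuously distributed index), and the region $\hat{W}_{i}<0$ in part (ii), where the exponential identity for $1-F$ fails but both weights vanish --- goes beyond what the paper records and is handled correctly.
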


\subsection{Asymptotic property}

This subsection investigates the consistency and asymptotic normality of the
kernel estimator $\tilde{\mu}$ under general $F\left( \cdot \right) $.
Several regularity assumptions are made first. Denote $k_{ni}=k\left( \left.
\left( 1-F\left( W_{i}\right) \right) \right/ h_{n}\right) $, where $%
W_{i}=X_{i}^{\prime }\beta _{0}$.

\bigskip

\noindent\textbf{Assumption 1.} (i) $\left\{ \left(
Y_{i},D_{i},X_{i},Z_{i}\right) \right\} _{i=1}^{n}$ is a random sample of $n$
observations from Model (\ref{model}). (ii) The model disturbance $\left(
\varepsilon _{i},U_{i}\right) $ is independent of the regressors $X_{i}$ and 
$Z_{i}$. (iii) $EU_{i}=0$. (iv) There exists a positive constant $c_{1}$
such that $E\left| U_{i}\right| ^{2+c_{1}}<\infty $, $E\left\| X_{i}\right\|
^{4+c_{1}}<\infty $, and $E\left\| Z_{i}\right\| ^{2+c_{1}}<\infty $.

\noindent\textbf{Assumption 2.} The kernel function $k\left( \cdot \right) $
is defined on $\left[ 0,\infty \right) $ and satisfies that (i) it is
nonnegative and supported on $\left[ 0,1\right] $, (ii) it is bounded above
by $\bar{k}>0$, and (iii) it is twice continuously differentiable over $%
\left[ 0,\infty \right) $ and its derivatives $k^{\prime }\left( \cdot
\right) $ and $k^{\prime \prime }\left( \cdot \right) $ are bounded above by 
$\bar{k^{\prime }}$ and $\bar{k^{\prime \prime }}$, respectively.

\noindent\textbf{Assumption 3.} (i) The transformation function $F\left(
\cdot \right) $ is a CDF for a continuously distributed random variable
whose support is right-unbounded. (ii) The probability density function
(PDF) $f\left( \cdot \right) $ corresponding to $F\left( \cdot \right) $ is
absolutely continuous with respect to the Lebesgue measure, with its
derivative $f^{\prime }\left( \cdot \right) $ bounded almost everywhere by $%
\bar{f^{\prime }}$. (iii) For a large positive constant $C$, the tail hazard
rate $H_{C}\left( \cdot \right) =1\left\{ \cdot >C\right\} f\left( \cdot
\right) \left/ \left[ 1-F\left( \cdot \right) \right] \right. $ and the tail
score (of location parameter) $S_{C}\left( \cdot \right) =1\left\{ \cdot
>C\right\} \left. f^{\prime }\left( \cdot \right) \right/ f\left( \cdot
\right) $ corresponding to $F\left( \cdot \right) $ satisfy $E\left[
\sup_{\beta \in \mathcal{N}\left( \beta _{0}\right) }H_{C}^{4}\left(
X_{i}^{\prime }\beta \right) \right] <\infty $ and $E\left[ \sup_{\beta \in 
\mathcal{N}\left( \beta _{0}\right) }S_{C}^{4}\left( X_{i}^{\prime }\beta
\right) \right] <\infty $ for a neighborhood $\mathcal{N}\left( \beta
_{0}\right) $ of $\beta _{0}$.

\noindent\textbf{Assumption 4.} The preliminary estimators $\hat{\beta}$ and 
$\hat{\theta}$ are $\sqrt{n}$-consistent, that is, $\sqrt{n}\left\| \hat{%
\beta}-\beta _{0}\right\| =O_{p}\left( 1\right) $ and $\sqrt{n}\left\| \hat{%
\theta}-\theta _{0}\right\| =O_{p}\left( 1\right) $.

\noindent \textbf{Assumption 5.} The bandwidth satisfies that $0<h_{n}\leq
1/2$ and that as $n\rightarrow \infty $, (i) $h_{n}\rightarrow 0$, (ii) $%
nEk_{ni}^{2}\rightarrow \infty $, and (iii) $\left. \left[ \Pr \left(
F\left( W_{i}\right) >1-h_{n}\right) \right] ^{1+c}\right/
Ek_{ni}^{2}\rightarrow 0$ for any $c>0$.

\bigskip

Assumption 1 describes the model and data. Assumptions 2 and 3 impose
smoothness and boundedness conditions on the kernel function $k\left( \cdot
\right) $ and the transformation function $F\left( \cdot \right) $. The
compactness of the kernel's support is assumed to simplify the technical
proofs, and the theoretical results in this paper are still supposed to hold
when using kernels that decay sufficiently fast in the tails. Assumption
3.(iii) is a joint condition on the tail behaviors of $F\left( \cdot \right) 
$ and $X_{i}$'s distribution. When $F\left( \cdot \right) $ has a power-type
upper tail, e.g., Pareto($\lambda $) tail such that\footnote{%
Here and below, $g\left( t\right) \sim h\left( t\right) $ means that the
ratios $g\left( t\right) \left/ h\left( t\right) \right. $ and $h\left(
t\right) \left/ g\left( t\right) \right. $ are $O\left( 1\right) $ as $%
t\rightarrow +\infty $.} $1-F\left( t\right) \sim t^{-\lambda }$ for some $%
\lambda >0$, we have $H_{C}\left( t\right) \sim t^{-1}$ and $S_{C}\left(
t\right) \sim t^{-1}$, and Assumption 3.(iii) holds for any distribution of $%
X_{i}$. When $F\left( \cdot \right) $ has an exponential-type upper tail,
e.g., Weibull($\lambda $) tail such that $1-F\left( t\right) \sim \exp
\left( -c_{0}t^{\lambda }\right) $ for some $\lambda >0$ and $c_{0}>0$, we
have $H_{C}\left( t\right) \sim t^{\lambda -1}$ and $S_{C}\left( t\right)
\sim t^{\lambda -1}$. If $0<\lambda \leq 1$, Assumption 3.(iii) still holds
for any distribution of $X_{i}$. For example, in the special case of \cite%
{andrews1998semiparametric}'s estimator that corresponds to a Laplacian $%
F\left( \cdot \right) $ (i.e., $\lambda =1$), Assumption 3.(iii) is
automatically satisfied. If $\lambda >1$, then Assumption 3.(iii) will
require $E\left\Vert X_{i}\right\Vert ^{4\left( \lambda -1\right) }<\infty $%
. For example, if $F\left( \cdot \right) $ is the normal CDF, we need $%
E\left\Vert X_{i}\right\Vert ^{4}<\infty $, which is already guaranteed by
Assumption 1.(iv). When the tail of $F\left( \cdot \right) $ decays more
rapidly as $1-F\left( t\right) \sim \exp \left( -\exp \left( c_{0}t^{\lambda
}\right) \right) $, a sufficient condition of Assumption 3.(iii) is that the
distribution of $\left\Vert X_{i}\right\Vert $ has a Weibull($\theta $) tail
with $\theta >\lambda $. Assumption 4 imposes the $\sqrt{n}$-consistency of
the preliminary slope estimators. Examples of such estimators are listed in
Subsection \ref{sec:model}.1.

According to Assumption 5, the bandwidth $h_{n}$ is required to go to zero
as the sample size goes to infinity, but the speed of the decline is not
allowed to be excessively fast. To see this, note that $Ek_{ni}^{2}\leq \bar{%
k}^{2}\Pr \left( F\left( W_{i}\right) >1-h_{n}\right) $ by Assumption 2. As $%
h_{n}$ goes to zero, $\Pr \left( F\left( W_{i}\right) >1-h_{n}\right) $ will
also go to zero. If the speed at which $h_{n}$ declines is so fast that $\Pr
\left( F\left( W_{i}\right) >1-h_{n}\right) $ goes to zero at a rate faster
than $n^{-1}$, we will have $nEk_{ni}^{2}\rightarrow 0$, which violates
Assumption 5.(ii). Another implicit requirement of Assumption 5.(ii) is the
upper unboundedness of $W_{i}$'s support, for if $W_{i}$ has bounded support
from above, $\Pr \left( F\left( W_{i}\right) >1-h_{n}\right) $ will be
exactly zero for sufficiently small $h_{n}$. Assumption 5.(iii) requires the
distribution of $W_{i}$ to be not too thin upper tailed relative to $F\left(
\cdot \right) $, as illustrated by the following example.

\begin{examp}
Suppose $F\left( \cdot \right) $ is the standard Laplacian CDF, as is the
case with $\hat{\mu}^{AS}$. Then, Assumption 5.(iii) holds if

(i) the distribution of $W_{i}$ has a power-type upper tail;

(ii) the distribution of $W_{i}$ has an exponential-type upper tail;

(iii) the distribution of $W_{i}$ has a ``double''-exponential-type but not
excessively thin upper tail, namely, $\Pr \left( W_{i}>t\right) \sim \exp
\left( -\exp \left( c_{0}t^{\lambda }\right) \right) $ with $\lambda <1$.
\end{examp}

\begin{theorem}
\label{Thm:lite}Under Assumptions 1-5, the kernel regression estimator $%
\tilde{\mu}$ defined in (\ref{LC}) is consistent and asymptotically normal:%
\begin{equation*}
\frac{\sqrt{n}Ek_{ni}}{\sqrt{Ek_{ni}^{2}}}\left( \tilde{\mu}-\mu _{0}-\frac{%
EU_{i}D_{i}k_{ni}}{ED_{i}k_{ni}}\right) \rightarrow N\left( 0,\sigma
_{U}^{2}\right) ,
\end{equation*}%
where $\sigma _{U}^{2}=EU_{i}^{2}$.
\end{theorem}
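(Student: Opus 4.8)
The plan is to use the selection structure to linearize $\tilde{\mu}$ around $\mu _{0}$ and reduce the problem to a single sum to which a triangular-array central limit theorem applies. Because $Y_{i}=Y_{i}^{\ast }D_{i}$ and $D_{i}^{2}=D_{i}$, on the selected set $(Y_{i}-Z_{i}^{\prime }\hat{\theta})D_{i}=(\mu _{0}+U_{i}+Z_{i}^{\prime }(\theta _{0}-\hat{\theta}))D_{i}$. Writing $\hat{k}_{ni}=k((1-F(\hat{W}_{i}))/h_{n})$, $R_{U}=\sum_{i}U_{i}D_{i}k_{ni}/\sum_{i}D_{i}k_{ni}$, $\hat{R}_{U}$ for the same ratio with $\hat{k}_{ni}$ replacing $k_{ni}$, and $\hat{R}_{\theta }=(\theta _{0}-\hat{\theta})^{\prime }\sum_{i}Z_{i}D_{i}\hat{k}_{ni}/\sum_{i}D_{i}\hat{k}_{ni}$, I obtain the exact decomposition
\begin{equation*}
\tilde{\mu}-\mu _{0}-\frac{EU_{i}D_{i}k_{ni}}{ED_{i}k_{ni}}=\left( \hat{R}_{U}-R_{U}\right) +\left( R_{U}-\frac{EU_{i}D_{i}k_{ni}}{ED_{i}k_{ni}}\right) +\hat{R}_{\theta }.
\end{equation*}
The middle term is the genuine stochastic term; the first is the error from inserting $\hat{\beta}$ into the kernel; the last absorbs the preliminary slope estimator. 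Throughout I write $\bar{N}=EU_{i}D_{i}k_{ni}$, $\bar{D}=ED_{i}k_{ni}$, $p_{n}=\Pr (F(W_{i})>1-h_{n})$, and $\tau _{n}=\sqrt{n}\,Ek_{ni}/(Ek_{ni}^{2})^{1/2}$ for the normalizing factor, and I will repeatedly use the Cauchy--Schwarz bound $Ek_{ni}\leq (Ek_{ni}^{2})^{1/2}p_{n}^{1/2}$ together with $np_{n}\rightarrow \infty $ (which follows from Assumption~5(ii) since $Ek_{ni}^{2}\leq \bar{k}^{2}p_{n}$).

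First I would dispose of the easy pieces. The ratio multiplying $(\theta _{0}-\hat{\theta})$ in $\hat{R}_{\theta }$ is $O_{p}(1)$ by a law of large numbers, so by Assumption~4 $\hat{R}_{\theta }=O_{p}(n^{-1/2})$ and $\tau _{n}\hat{R}_{\theta }=O_{p}(Ek_{ni}/(Ek_{ni}^{2})^{1/2})=O_{p}(p_{n}^{1/2})=o_{p}(1)$. For the main term I linearize the ratio as
\begin{equation*}
R_{U}-\frac{\bar{N}}{\bar{D}}=\frac{1}{D_{n}}\cdot \frac{1}{n}\sum_{i=1}^{n}\xi _{ni},\qquad \xi _{ni}=\left( U_{i}-\frac{\bar{N}}{\bar{D}}\right) D_{i}k_{ni},\qquad D_{n}=\frac{1}{n}\sum_{i=1}^{n}D_{i}k_{ni},
\end{equation*}
where $\xi _{ni}$ is i.i.d.\ with mean zero. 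A Lyapunov central limit theorem with exponent $2+c_{1}$ applies: using $k_{ni}\leq \bar{k}$ and Assumption~1(iv), $E|\xi _{ni}|^{2+c_{1}}\lesssim Ek_{ni}^{2}$, while $\mathrm{Var}(\xi _{ni})\asymp Ek_{ni}^{2}$, so the Lyapunov ratio is $O((nEk_{ni}^{2})^{-c_{1}/2})\rightarrow 0$ by Assumption~5(ii). Since $(\varepsilon _{i},U_{i})$ is independent of $X_{i}$ and $EU_{i}=0$, dominated convergence gives $E[U_{i}D_{i}\mid W_{i}=w]\rightarrow 0$, $E[U_{i}^{2}D_{i}\mid W_{i}=w]\rightarrow \sigma _{U}^{2}$, and $\Pr (D_{i}=1\mid W_{i}=w)\rightarrow 1$ as $w\rightarrow \infty $; because $k_{ni}$ concentrates on $W_{i}\rightarrow \infty $, these yield $\bar{N}/\bar{D}\rightarrow 0$, $\mathrm{Var}(\xi _{ni})/Ek_{ni}^{2}\rightarrow \sigma _{U}^{2}$, and $\bar{D}/Ek_{ni}\rightarrow 1$. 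Combined with $D_{n}/\bar{D}\rightarrow 1$ in probability (its variance is $O((nEk_{ni}^{2})^{-1})$), Slutsky's theorem delivers $\tau _{n}(R_{U}-\bar{N}/\bar{D})\rightarrow N(0,\sigma _{U}^{2})$.

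The hard part, and where I expect the real work to lie, is showing $\tau _{n}(\hat{R}_{U}-R_{U})=o_{p}(1)$, i.e.\ that substituting $\hat{\beta}$ for $\beta _{0}$ inside the kernel is harmless. I would expand by the mean value theorem, $\hat{k}_{ni}-k_{ni}=-k^{\prime }(\cdot )(f(W_{i}^{\ast })/h_{n})X_{i}^{\prime }(\hat{\beta}-\beta _{0})$ plus a second-order remainder. The crucial observation is that $k^{\prime }$ is supported on $[0,1]$, so on the relevant region $1-F(W_{i})\leq h_{n}$ and therefore $f/h_{n}=[f/(1-F)]\cdot \lbrack (1-F)/h_{n}]\leq H_{C}$, the tail hazard rate; the remainder is likewise dominated by $H_{C}^{2}$ and $H_{C}S_{C}$ through the tail score $S_{C}=f^{\prime }/f$. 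Assumption~3(iii) (finite fourth moments of $\sup_{\beta \in \mathcal{N}(\beta _{0})}H_{C}$ and $S_{C}$), Assumption~1(iv), and the $\sqrt{n}$-consistency of $\hat{\beta}$ then bound the increments uniformly over a neighborhood of $\beta _{0}$. Writing $\hat{R}_{U}-R_{U}=(\hat{A}-A)/\hat{B}-A(\hat{B}-B)/(\hat{B}B)$ with $A=n^{-1}\sum_{i}U_{i}D_{i}k_{ni}$ and $B=D_{n}$, two cancellations make each piece negligible: in the numerator perturbation $\hat{A}-A$ the factor $E[U_{i}D_{i}\mid W_{i}=w]\rightarrow 0$ shrinks it below its nominal order, and the denominator perturbation enters only through the coefficient $A$, whose order is governed by $\bar{N}=(\bar{N}/\bar{D})\bar{D}=o(Ek_{ni})$. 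The delicate point is to track these orders against $\tau _{n}$ by relating the hazard-weighted increments back to $Ek_{ni}$ and $Ek_{ni}^{2}$ (rather than to the cruder $p_{n}$), using the fourth-moment tail conditions and $np_{n}\rightarrow \infty $; once both pieces are shown to be $o_{p}(\tau _{n}^{-1})$, assembling the three terms — the middle one converging to $N(0,\sigma _{U}^{2})$ and the other two vanishing — yields the claim.
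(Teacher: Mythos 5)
Your architecture is the same as the paper's — an exact decomposition into an infeasible term plus perturbations, a triangular-array CLT for the infeasible term, and negligibility of the rest — and your CLT step is correct and complete: the Lyapunov argument with exponent $2+c_{1}$, the centering $\xi _{ni}=(U_{i}-\bar{N}/\bar{D})D_{i}k_{ni}$, and the Cauchy--Schwarz device $Ek_{ni}\leq (Ek_{ni}^{2})^{1/2}p_{n}^{1/2}$ all work. The gaps are in the two perturbation terms, and they are genuine. First, the claim that $\sum_{i}Z_{i}D_{i}\hat{k}_{ni}\left/ \sum_{i}D_{i}\hat{k}_{ni}\right. =O_{p}(1)$ ``by a law of large numbers'' is false in general: only the disturbances are independent of the regressors (Assumption 1.(ii)); $Z_{i}$ is typically correlated with $W_{i}=X_{i}^{\prime }\beta _{0}$ (they may share components), so this is a weighted average of $Z_{i}$ over the shrinking tail event $\left\{ F(W_{i})>1-h_{n}\right\} $, whose magnitude is driven by $E\left[ \left. \left\Vert Z_{i}\right\Vert \right\vert F(W_{i})>1-h_{n}\right] $ and can diverge as $h_{n}\rightarrow 0$ (take a component of $Z_{i}$ equal to $W_{i}$). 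The paper handles this term by H\"{o}lder's inequality, $E\left[ \left\Vert Z_{i}\right\Vert 1\left\{ F(W_{i})>1-h_{n}\right\} \right] \leq \left( E\left\Vert Z_{i}\right\Vert ^{2+c_{1}}\right) ^{1/(2+c_{1})}p_{n}^{(1+c_{1})/(2+c_{1})}$, and then needs Assumption 5.(iii) — $p_{n}^{1+c}\left/ Ek_{ni}^{2}\right. \rightarrow 0$ — to make the resulting ratio vanish. Your final conclusion survives, but only via that route; tellingly, your proposal never invokes Assumption 5.(iii) at all, even though it is precisely the assumption built for these steps.

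Second, and more seriously, the $\hat{\beta}$-perturbation, which you yourself flag as the hard part, is left unproved at exactly the point where the content lies. You correctly identify the mechanism (Taylor expansion, $f/h_{n}\leq H_{C}$ and $f^{\prime }/f=S_{C}$ on the support of $k^{\prime }$, Assumption 3.(iii)), but ``track these orders against $\tau _{n}$'' is the entire difficulty, and the two ``cancellations'' you propose are not what closes it. In the paper's proof no cancellation from $E\left[ \left. U_{i}D_{i}\right\vert W_{i}\right] \rightarrow 0$ is used for this term: after the expansion one faces hazard-weighted sums such as $n^{-1}\sum_{i}\left( \left\vert \mu _{0}\right\vert +\left\vert U_{i}\right\vert \right) 1\left\{ F(W_{i})>1-h_{n}\right\} H_{C}(W_{i})\left\Vert X_{i}\right\Vert $, which are bounded in expectation by H\"{o}lder so that, squared and divided by $Ek_{ni}^{2}$, they take the form $p_{n}^{1+c}\left/ Ek_{ni}^{2}\right. $ for some $c>0$ depending on the moment exponents — again only Assumption 5.(iii) kills this. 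Likewise, your treatment of the denominator perturbation through the ``small coefficient'' $A$ still requires $\tau _{n}(\hat{B}-B)/\hat{B}$ to be at least $O_{p}(1)$, which is the same kernel-perturbation bound (the paper proves it is $o_{p}(1)$ by repeating the argument with $\mu _{0}+U_{i}$ replaced by $1$); the observation $\bar{N}=o(Ek_{ni})$ does not substitute for it. Without carrying out these H\"{o}lder/Assumption 5.(iii) bounds, the proof does not go through.
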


Theorem \ref{Thm:lite} generalizes Theorems 1-2 of \cite%
{andrews1998semiparametric} by allowing general forms of the transformation
function $F\left( \cdot \right) $. At first sight, introducing a general $%
F\left( \cdot \right) $ appears to help little in the choice of $h_{n}$
because the asymptotic bias and variance of $\tilde{\mu}$ are still unknown
functions of $h_{n}$. However, as will be seen below, a particular
data-dependent choice of $F\left( \cdot \right) $ can largely facilitate the
subsequent choice of $h_{n}$.

\subsection{Choice of the transformation $F\left( \cdot \right) $}

A natural choice of $F\left( \cdot \right) $ would be the CDF of $%
W_{i}=X_{i}^{\prime }\beta _{0}$, under which $F\left( W_{i}\right) $
follows a uniform distribution and several assumptions of Theorem \ref%
{Thm:lite} have a simpler form. For example, Assumption 3.(iii) will be
automatically fulfilled in this case as long as $W_{i}$ is continuously
distributed, and Assumption 3 is implied by

\bigskip

\noindent\textbf{Assumption 3'.} $W_{i}$ is continuously distributed with
right-unbounded support. In addition, $W_{i}$'s PDF $f_{W}\left( \cdot
\right) $ is absolutely continuous with a derivative that is bounded almost
everywhere.

\bigskip

Assumption 3' presumes that the underlying data is continuous, as in
standard kernel methods. When encountering discrete regressors, the kernel
estimator can be adapted using the frequency-based method or the smoothing
method \citep{racine2004nonparametric} at the cost of more complicated
notations. Assumption 5.(iii) will also be automatically fulfilled for $%
F\left( \cdot \right) =F_{W}\left( \cdot \right) $ because in this case $\Pr
\left( F\left( W_{i}\right) >1-h_{n}\right) =h_{n}$ and%
\begin{equation}
Ek_{ni}^{r}=\int_{0}^{1}k^{r}\left( \frac{1-t}{h_{n}}\right)
dt=h_{n}\int_{0}^{1\left/ h_{n}\right. }k^{r}\left( s\right)
ds=h_{n}\int_{0}^{1}k^{r}\left( s\right) ds  \label{knir}
\end{equation}%
for any $r>0$. Therefore, Assumption 5 is simplified to $h_{n}\rightarrow 0$
and $nh_{n}\rightarrow \infty $ as $n\rightarrow \infty $.

Now, consider the asymptotic bias of $\tilde{\mu}$, $EU_{i}D_{i}k_{ni}\left/
ED_{i}k_{ni}\right. $, under $F\left( \cdot \right) =F_{W}\left( \cdot
\right) $. Define%
\begin{equation}
G\left( t\right) =E\left[ U_{i}\left| D_{i}=1,F_{W}\left( W_{i}\right)
=t\right. \right] =E\left[ U_{i}\left| \varepsilon _{i}<F_{W}^{-1}\left(
t\right) \right. \right]  \label{G(t)}
\end{equation}%
for $t\in \left( 0,1\right) $, and define $G\left( 1\right)
=\lim_{t\rightarrow 1}G\left( t\right) =0$.

\begin{lemm}
Let $F\left( \cdot \right) =F_{W}\left( \cdot \right) =\Pr \left( W_{i}\leq
\cdot \right) $. Suppose $G\left( t\right) $ is continuously differentiable
over $t\in \left( 1-\delta ,1\right] $ for a small $\delta >0$, with its
derivative function $g\left( t\right) $ being bounded above over $t\in
\left( 1-\delta ,1\right] $. Then, we have%
\begin{equation*}
\frac{EU_{i}D_{i}k_{ni}}{ED_{i}k_{ni}}=-\frac{\kappa _{1}g\left( 1\right) }{%
\kappa _{0}}h_{n}+o\left( h_{n}\right) ,
\end{equation*}%
where $\kappa _{r}=\int_{0}^{1}t^{r}k\left( t\right) dt$.
\end{lemm}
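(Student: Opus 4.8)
The plan is to condition on the uniform random variable $T_i=F_W(W_i)$ so as to reduce both the numerator $EU_iD_ik_{ni}$ and the denominator $ED_ik_{ni}$ to one‑dimensional integrals against the kernel, and then to extract the leading term by a change of variables together with a first‑order expansion of $G$ at the boundary $t=1$. Writing $p(t)=\Pr(D_i=1\mid W_i=F_W^{-1}(t))=\Pr(\varepsilon_i<F_W^{-1}(t))$, the independence in Assumption 1.(ii) and the relation $D_i=1\{W_i>\varepsilon_i\}$ give $E[D_i\mid T_i=t]=p(t)$ and $E[U_iD_i\mid T_i=t]=E[U_i1\{\varepsilon_i<F_W^{-1}(t)\}]=G(t)\,p(t)$, where the last identity is exactly the definition (\ref{G(t)}) of $G$ multiplied by $p(t)$. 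Since $F=F_W$ makes $T_i$ uniform on $[0,1]$, iterated expectations yield
\[
ED_ik_{ni}=\int_0^1 p(t)\,k\!\left(\tfrac{1-t}{h_n}\right)dt,\qquad
EU_iD_ik_{ni}=\int_0^1 G(t)\,p(t)\,k\!\left(\tfrac{1-t}{h_n}\right)dt.
\]

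Next I would substitute $s=(1-t)/h_n$ and invoke the kernel support of Assumption 2, so that $s$ effectively ranges only over $[0,1]$; this truncation is legitimate because $h_n\le 1/2$ forces $1/h_n\ge 2$. Both integrals then carry a common factor $h_n$ that cancels in the ratio, leaving
\[
\frac{EU_iD_ik_{ni}}{ED_ik_{ni}}
=\frac{\displaystyle\int_0^1 G(1-h_ns)\,p(1-h_ns)\,k(s)\,ds}{\displaystyle\int_0^1 p(1-h_ns)\,k(s)\,ds}.
\]
Because $W_i$ has right‑unbounded support, $F_W^{-1}(t)\to+\infty$ and hence $p(t)\to 1$ as $t\uparrow 1$; combined with $0\le p\le 1$ and the integrability of $k$ on $[0,1]$, dominated convergence then gives $\int_0^1 p(1-h_ns)k(s)\,ds\to\kappa_0$ and $\int_0^1 s\,p(1-h_ns)k(s)\,ds\to\kappa_1$.

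For the numerator I expand $G$ at the right endpoint. Using the hypothesis that $G$ is continuously differentiable over $(1-\delta,1]$ with bounded derivative $g$, together with $G(1)=0$, the mean value theorem gives $G(1-h_ns)=-h_ns\,g(\xi_{n,s})$ for some $\xi_{n,s}\in(1-h_ns,1)$. Since $|1-\xi_{n,s}|\le h_n s\le h_n$, continuity of $g$ at $1$ makes $g(\xi_{n,s})\to g(1)$ uniformly in $s\in[0,1]$, so that $G(1-h_ns)=-h_ns\,g(1)+h_n\,o(1)$ uniformly. Multiplying by $p(1-h_ns)k(s)$, integrating, and inserting the two limits above yields numerator $=-h_n g(1)\kappa_1+o(h_n)$ and denominator $=\kappa_0+o(1)$, and dividing produces the claimed expansion $-\kappa_1 g(1)h_n/\kappa_0+o(h_n)$, with $\kappa_0>0$ since $k$ is a nonnegative kernel that is not identically zero.

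The only genuinely delicate step is making this first‑order expansion of $G$ uniform across the shrinking window $t\in[1-h_n,1]$ while simultaneously controlling $p$: the boundary value $G(1)=0$ is precisely what annihilates the would‑be leading constant term and promotes the bias to order $h_n$, so the crux is verifying that $G$ extends continuously differentiably up to the endpoint (the lemma's hypothesis) and that $g(\xi_{n,s})\to g(1)$ uniformly over $s$. Everything else is a routine dominated‑convergence argument, and I would take care to confirm that the kernel‑support truncation and the restriction $h_n\le 1/2$ are exactly what license replacing the upper limit $1/h_n$ by $1$ after the substitution.
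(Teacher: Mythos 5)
Your proposal is correct and follows essentially the same route as the paper: both condition on $F_W(W_i)$ to factor $E[U_iD_i\mid F_W(W_i)=t]=G(t)\,G_1(t)$ (your $p$ is the paper's $G_1$), change variables $s=(1-t)/h_n$ using the uniformity of $F_W(W_i)$, and apply a mean-value expansion of $G$ at $t=1$ with $G(1)=0$ to extract the $-\kappa_1 g(1)h_n$ term. The only cosmetic difference is that you handle the denominator by dominated convergence with $p(t)\to 1$, whereas the paper cites its Lemma A.1 together with the identity $Ek_{ni}=\kappa_0 h_n$; these amount to the same fact that the selection probability tends to one near the boundary.
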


Since by a calculation%
\begin{equation*}
g\left( t\right) =\frac{dG\left( t\right) }{dt}=\frac{f_{\varepsilon }\left(
F_{W}^{-1}\left( t\right) \right) }{f_{W}\left( F_{W}^{-1}\left( t\right)
\right) }\cdot \frac{E\left[ U_{i}\left\vert \varepsilon
_{i}=F_{W}^{-1}\left( t\right) \right. \right] -G\left( t\right) }{%
F_{\varepsilon }\left( F_{W}^{-1}\left( t\right) \right) },
\end{equation*}%
it can be seen that the finiteness of $g\left( 1\right) =\lim_{t\rightarrow
1}g\left( t\right) $ assumed by the Lemma essentially requires the selection
index $W_{i}$ to have a heavier upper tail than the selection disturbance $%
\varepsilon _{i}$. The relatively heavy tail of $W_{i}$ is also required by $%
\hat{\mu}^{AS}$, as illustrated by the examples of \cite%
{andrews1998semiparametric}.

\begin{cor}
Let $F\left( \cdot \right) =F_{W}\left( \cdot \right) =\Pr \left( W_{i}\leq
\cdot \right) $. Suppose (i) Assumptions 1-5 hold, (ii) the assumption of
the Lemma holds, and (iii) $nh_{n}^{3}=O\left( 1\right) $. Then, the kernel
regression estimator $\tilde{\mu}$ defined in (\ref{LC})\ is consistent and
asymptotically normal:%
\begin{equation*}
\sqrt{nh_{n}}\left( \tilde{\mu}-\mu _{0}+\frac{\kappa _{1}g\left( 1\right) }{%
\kappa _{0}}h_{n}\right) \rightarrow N\left( 0,\frac{\chi _{0}\sigma _{U}^{2}%
}{\kappa _{0}^{2}}\right) ,
\end{equation*}%
where $\chi _{0}=\int_{0}^{1}k^{2}\left( t\right) dt$.
\end{cor}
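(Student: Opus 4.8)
The plan is to combine Theorem~\ref{Thm:lite}, which is already available, with the Lemma's explicit bias expansion and an explicit evaluation of the variance normalizing factor under the special choice $F=F_{W}$. Theorem~\ref{Thm:lite} gives the general statement
\begin{equation*}
\frac{\sqrt{n}\,Ek_{ni}}{\sqrt{Ek_{ni}^{2}}}\left(\tilde{\mu}-\mu_{0}-\frac{EU_{i}D_{i}k_{ni}}{ED_{i}k_{ni}}\right)\rightarrow N\!\left(0,\sigma_{U}^{2}\right),
\end{equation*}
so the entire task reduces to (a) substituting the bias term $EU_{i}D_{i}k_{ni}/ED_{i}k_{ni}$ from the Lemma, and (b) rewriting the random scaling factor $\sqrt{n}\,Ek_{ni}/\sqrt{Ek_{ni}^{2}}$ in terms of $\sqrt{nh_{n}}$ using formula~(\ref{knir}). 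First I would verify that the hypotheses of Theorem~\ref{Thm:lite} are in force: conditions (i) and (ii) of the Corollary supply Assumptions 1--5 and the Lemma's differentiability/boundedness requirement, and since $F=F_{W}$ makes $F(W_{i})$ uniform on $(0,1)$, Assumption~3 follows from continuity of $W_{i}$ and Assumption~5.(iii) holds automatically (as noted after Assumption~3'), so the theorem applies verbatim.

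Next I would compute the moments of $k_{ni}$ explicitly. By~(\ref{knir}), for $F=F_{W}$ we have $Ek_{ni}^{r}=h_{n}\int_{0}^{1}k^{r}(s)\,ds$ for every $r>0$; in particular $Ek_{ni}=h_{n}\int_{0}^{1}k(s)\,ds=\kappa_{0}h_{n}$ and $Ek_{ni}^{2}=h_{n}\int_{0}^{1}k^{2}(s)\,ds=\chi_{0}h_{n}$, using the definitions $\kappa_{0}=\int_{0}^{1}k(t)\,dt$ (the $r=0$ moment from the Lemma) and $\chi_{0}=\int_{0}^{1}k^{2}(t)\,dt$. Substituting these into the normalizing factor gives
\begin{equation*}
\frac{\sqrt{n}\,Ek_{ni}}{\sqrt{Ek_{ni}^{2}}}=\frac{\sqrt{n}\,\kappa_{0}h_{n}}{\sqrt{\chi_{0}h_{n}}}=\frac{\kappa_{0}}{\sqrt{\chi_{0}}}\sqrt{nh_{n}}.
\end{equation*}
Then I would insert the Lemma's bias expansion $EU_{i}D_{i}k_{ni}/ED_{i}k_{ni}=-\kappa_{1}g(1)h_{n}/\kappa_{0}+o(h_{n})$ into the centering term, so that the quantity inside the parentheses of Theorem~\ref{Thm:lite} becomes $\tilde{\mu}-\mu_{0}+\kappa_{1}g(1)h_{n}/\kappa_{0}+o(h_{n})$.

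The only genuine subtlety is handling the remainder $o(h_{n})$ from the bias expansion after the change of scale; this is where condition~(iii), $nh_{n}^{3}=O(1)$, does its work. Multiplying through by $\kappa_{0}/\sqrt{\chi_{0}}$ and rearranging, the $o(h_{n})$ term contributes $\sqrt{nh_{n}}\cdot o(h_{n})=o(\sqrt{nh_{n}^{3}})=o(1)$ to the limiting statement, so it is asymptotically negligible and absorbed without affecting the limit. The explicit bias term $\kappa_{1}g(1)h_{n}/\kappa_{0}$, by contrast, survives because $\sqrt{nh_{n}}\cdot h_{n}=\sqrt{nh_{n}^{3}}=O(1)$, leaving a (possibly nonzero) asymptotic bias of exactly the stated form. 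Finally, dividing the Gaussian limit by the factor $\kappa_{0}/\sqrt{\chi_{0}}$ rescales the variance from $\sigma_{U}^{2}$ to $(\chi_{0}/\kappa_{0}^{2})\sigma_{U}^{2}$, yielding
\begin{equation*}
\sqrt{nh_{n}}\left(\tilde{\mu}-\mu_{0}+\frac{\kappa_{1}g(1)}{\kappa_{0}}h_{n}\right)\rightarrow N\!\left(0,\frac{\chi_{0}\sigma_{U}^{2}}{\kappa_{0}^{2}}\right)
\end{equation*}
by Slutsky's theorem. The main obstacle is thus bookkeeping rather than analysis: one must track the exact powers of $h_{n}$ so that condition~(iii) is seen to be precisely the balance that keeps the bias contribution $O(1)$ while killing the remainder, and confirm that the variance rescaling is carried through correctly under the factor change.
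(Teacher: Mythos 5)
Your proposal is correct and follows exactly the paper's own (one-line) proof: combine Theorem \ref{Thm:lite}, the moment identity (\ref{knir}) giving $Ek_{ni}=\kappa_{0}h_{n}$ and $Ek_{ni}^{2}=\chi_{0}h_{n}$, and the Lemma's bias expansion, with $nh_{n}^{3}=O(1)$ absorbing the $o(h_{n})$ remainder via Slutsky. The variance rescaling by $\kappa_{0}/\sqrt{\chi_{0}}$ and the sign/order bookkeeping are all handled correctly, so nothing is missing.
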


The corollary shows that, by setting $F\left( \cdot \right) =F_{W}\left(
\cdot \right) $, the asymptotic bias and variance of the kernel estimator
become explicit functions of $h_{n}$, based on which we can select a
theoretically optimal bandwidth. In practice, however, $F_{W}\left( \cdot
\right) $ is unknown and must be estimated beforehand. A simple estimator
for $F_{W}\left( \cdot \right) $ is the empirical CDF of $\hat{W}%
_{i}=X_{i}^{\prime }\hat{\beta}$. With this specific choice, the kernel
estimator becomes 
\begin{equation}
\hat{\mu}=\frac{\displaystyle\sum_{i=1}^{n}\left( Y_{i}-Z_{i}^{\prime }\hat{%
\theta}\right) D_{i}k\left( \frac{1-\hat{F}_{n}\left( \hat{W}_{i}\right) }{%
h_{n}}\right) }{\displaystyle\sum_{i=1}^{n}D_{i}k\left( \frac{1-\hat{F}%
_{n}\left( \hat{W}_{i}\right) }{h_{n}}\right) },  \label{FLC}
\end{equation}%
where $\hat{F}_{n}\left( \hat{W}_{i}\right) =\frac{1}{n-1}\sum_{j\neq
i}1\left\{ \hat{W}_{j}\leq \hat{W}_{i}\right\} $. To control the estimation
error induced by the empirical CDF, the existing assumptions should be
strengthened, and one new assumption should be imposed. Partition $%
X_{i}=\left( X_{i1},X_{i,(-1)}^{\prime }\right) ^{\prime }$, $\beta
_{0}=\left( \beta _{01},\beta _{0,(-1)}^{\prime }\right) ^{\prime }$, and $%
\hat{\beta}=\left( \hat{\beta}_{1},\hat{\beta}_{(-1)}^{\prime }\right)
^{\prime }$, with $X_{i1}$, $\beta _{01}$, and $\hat{\beta}_{1}$\ being the
respective first components.

\bigskip

\noindent\textbf{Assumption 1'.} Assumption 1 holds and $E\left\|
X_{i}\right\| ^{6}<\infty $; $\beta _{01}\neq 0$. Without loss of
generality, set $\beta _{01}=1$ to achieve scale normalization of $\beta
_{0} $.

\noindent \textbf{Assumption 2'.} Assumption 2 holds and $k\left( \cdot
\right) $ is six times continuously differentiable over $\left[ 0,\infty
\right) $ with all of its derivatives bounded above.

\noindent\textbf{Assumption 4'.} Assumption 4 holds and $\hat{\beta}_{1}=1$.

\noindent \textbf{Assumption 5'.} The bandwidth $h_{n}\in \left( 0,1/2\right]
$ satisfies (i) $h_{n}\rightarrow 0$, (ii) $nh_{n}^{3}=O\left( 1\right) $,
and (iii) $nh_{n}^{13/5}\rightarrow \infty $, as $n\rightarrow \infty $.

\noindent \textbf{Assumption 6.} The PDF of $W_{i}$, $f_{W}\left( \cdot
\right) $, and the conditional PDF of $W_{i}$ given $X_{i,(-1)}$, $%
f_{W\left| X_{(-1)}\right. }\left( \cdot \left| X_{i,(-1)}\right. \right) $,
satisfies that (i) $\left. f_{W}\left( F_{W}^{-1}\left( 1-h_{n}\right)
-n^{-1/10}h_{n}^{-1/18}\right) \right/ h_{n}^{2/3}\rightarrow 0$, (ii) $%
f_{W\left| X_{(-1)}\right. }\left( \cdot \left| X_{i,(-1)}\right. \right) $
is uniformly bounded by $\overline{f_{W\left| X_{(-1)}\right. }}$, and (iii)
there exists a large constant $C$ such that $f_{W\left| X_{(-1)}\right.
}\left( w\left| X_{i,(-1)}\right. \right) $ declines monotonically for $w>C$
almost surely.

\bigskip

Since $\beta _{0}$ is identified only up to scale, the normalization $\beta
_{01}=\hat{\beta}_{1}=1$ is postulated by most semiparametric estimators for
the binary response model. Assumption 6, which is comparable to Assumption A
of \cite{schafgans2002intercept}, is imposed to address the
non-differentiable indicator function contained in the empirical CDF.
Because the expectation of the generalized derivative of the indicator
function equals the value of the PDF, Assumption 6 involves restrictions on
the PDF and conditional PDF of $W_{i}$. Assumption 6.(i) relates to the
upper tail behavior of the distribution of $W_{i}$. Under the assumption of $%
nh_{n}^{13/5}\rightarrow \infty $ that implies $n^{-1/10}h_{n}^{-1/18}%
\rightarrow 0$, it can be shown that Assumption 6.(i) is satisfied if $W_{i}$
has a power- or exponential-type upper tail.

\begin{theorem}
\label{Thm:FLC}Under Assumptions 1'-5' and 6, the kernel regression
estimator $\hat{\mu}$ defined in (\ref{FLC}) is consistent and
asymptotically normal:%
\begin{equation*}
\sqrt{nh_{n}}\left( \hat{\mu}-\mu _{0}+\frac{\kappa _{1}g\left( 1\right) }{%
\kappa _{0}}h_{n}\right) \rightarrow N\left( 0,\frac{\chi _{0}\sigma _{U}^{2}%
}{\kappa _{0}^{2}}\right) ,
\end{equation*}%
where $\kappa _{r}=\int_{0}^{1}t^{r}k\left( t\right) dt$ and $\chi
_{r}=\int_{0}^{1}t^{r}k^{2}\left( t\right) dt$.
\end{theorem}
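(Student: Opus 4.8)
\emph{Proof proposal.} The plan is to lean on the Corollary, which already establishes the stated limit distribution for the infeasible estimator $\tilde{\mu}$ defined in (\ref{LC}) with $F=F_{W}$; note that $\tilde{\mu}$ is built from the estimated index $\hat{W}_{i}$, so the effect of $\hat{\beta}$ on the index is already subsumed there, and only the replacement of the true transformation $F_{W}$ by the empirical CDF $\hat{F}_{n}$ remains to be handled. It therefore suffices to show
$$\sqrt{nh_{n}}\left( \hat{\mu}-\tilde{\mu}\right) =o_{p}\left( 1\right) .$$
Both estimators are ratios; writing $\hat{\mu}=\hat{A}/\hat{B}$, $\tilde{\mu}=\tilde{A}/\tilde{B}$, with $\hat{B}=n^{-1}\sum_{i}D_{i}k((1-\hat{F}_{n}(\hat{W}_{i}))/h_{n})$ and $\tilde{B}=n^{-1}\sum_{i}D_{i}k((1-F_{W}(\hat{W}_{i}))/h_{n})$, the arguments underlying the Corollary give $\tilde{B}=\kappa_{0}h_{n}(1+o_{p}(1))$, so the identity $\hat{\mu}-\tilde{\mu}=[(\hat{A}-\tilde{A})-\tilde{\mu}(\hat{B}-\tilde{B})]/\hat{B}$ and the consistency of $\tilde{\mu}$ reduce the claim to the two bounds
$$\hat{A}-\tilde{A}=o_{p}\!\left( \sqrt{h_{n}/n}\right) ,\qquad \hat{B}-\tilde{B}=o_{p}\!\left( \sqrt{h_{n}/n}\right) .$$

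Next I would Taylor-expand the kernel. Setting $\delta_{i}=\hat{F}_{n}(\hat{W}_{i})-F_{W}(\hat{W}_{i})$ and invoking Assumption 2',
$$k\!\left( \tfrac{1-\hat{F}_{n}(\hat{W}_{i})}{h_{n}}\right) -k\!\left( \tfrac{1-F_{W}(\hat{W}_{i})}{h_{n}}\right) =-k^{\prime }\!\left( \tfrac{1-F_{W}(\hat{W}_{i})}{h_{n}}\right) \frac{\delta _{i}}{h_{n}}+\tfrac{1}{2}k^{\prime \prime }(\xi _{i})\frac{\delta _{i}^{2}}{h_{n}^{2}},$$
so that $\hat{B}-\tilde{B}$ (and likewise $\hat{A}-\tilde{A}$) splits into a first-order term linear in $\delta_{i}$ and a quadratic remainder. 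Since $k^{\prime}$ is supported on the shrinking boundary region $\{F_{W}(\hat{W}_{i})>1-h_{n}\}$, only $O_{p}(nh_{n})$ summands are active. Following the route flagged in the Introduction, I would then decompose
$$\delta _{i}=\underbrace{\big[ \hat{F}_{n}(\hat{W}_{i})-F_{X^{\prime }\hat{\beta}}(\hat{W}_{i})\big] }_{\text{empirical-process part }\delta _{i}^{E}}+\underbrace{\big[ F_{X^{\prime }\hat{\beta}}(\hat{W}_{i})-F_{W}(\hat{W}_{i})\big] }_{\text{discontinuous-error part }\delta _{i}^{D}},$$
where $F_{X^{\prime }\hat{\beta}}$ is the population CDF of $X^{\prime }\hat{\beta}$ with $\hat{\beta}$ held fixed.

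For $\delta_{i}^{D}$ I would write it as an expectation over an independent copy of a difference of two indicators evaluated at $\hat{\beta}$ versus $\beta_{0}$; taking that expectation smooths the discontinuity into the density, and a Taylor expansion in $\hat{\beta}-\beta_{0}$ controlled by the conditional density $f_{W|X_{(-1)}}$ (Assumption 6(ii)--(iii)), its monotone tail decay, Assumption 6(i), and the $\sqrt{n}$-rate of $\hat{\beta}$ (Assumption 4') yields, via the convergence results of \cite{schafgans2002intercept}, the bound making the corresponding weighted sum $o_{p}(\sqrt{h_{n}/n})$. The quadratic remainder $n^{-1}\sum_{i}D_{i}k^{\prime\prime}(\xi_{i})\delta_{i}^{2}/h_{n}^{2}$ is disposed of by the crude bound $|\delta_{i}|=o_{p}(h_{n})$ on the boundary region, which gives order $(\log h_{n}^{-1})/n=o(\sqrt{h_{n}/n})$ because $nh_{n}\rightarrow\infty$.

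The hard part will be the empirical-process part $\delta_{i}^{E}$. A naive estimate — worst-case size of $\delta_{i}^{E}$ times the $O(nh_{n})$ count — loses a $\sqrt{\log h_{n}^{-1}}$ factor and is \emph{not} enough to beat $\sqrt{h_{n}/n}$, so the delicate step is to exploit cancellation: the first-order contribution $n^{-1}h_{n}^{-1}\sum_{i}D_{i}k^{\prime}(\cdots)\delta_{i}^{E}$ is, after writing $\delta_{i}^{E}$ as a leave-one-out average of centered indicators, a nearly degenerate second-order U-statistic localized to the boundary, whose second moment I would bound directly rather than its supremum. The uniform control of $\hat{F}_{n}-F_{X^{\prime}\hat{\beta}}$ over the shrinking interval $[1-h_{n},1]$, uniformly in $\hat{\beta}$ over a shrinking neighborhood of $\beta_{0}$, comes from the oscillation results of \cite{stute1982oscillation}, and matching the exponents in these variance and oscillation bounds is precisely what forces the rate conditions of Assumption 5' --- in particular $nh_{n}^{13/5}\rightarrow\infty$ --- together with the sharp $n^{-1/10}h_{n}^{-1/18}$ displacement in Assumption 6(i). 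Once both parts are shown to be $o_{p}(\sqrt{h_{n}/n})$, the linearization delivers $\sqrt{nh_{n}}(\hat{\mu}-\tilde{\mu})=o_{p}(1)$, and the conclusion follows from the Corollary.
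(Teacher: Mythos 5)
Your skeleton --- couple $\hat{\mu}$ to an infeasible estimator, Taylor-expand the kernel, split the empirical-CDF error into an empirical-process part and a coefficient-error part, and control them with Stute's oscillation lemma, U-statistic projections, and the \cite{schafgans2002intercept}-type bounds --- is the same as the paper's. But your reduction step is a genuine, fatal gap: the two separate bounds $\hat{A}-\tilde{A}=o_{p}(\sqrt{h_{n}/n})$ and $\hat{B}-\tilde{B}=o_{p}(\sqrt{h_{n}/n})$ are not merely hard, they are \emph{false}. The first-order empirical-process contribution to $\hat{B}-\tilde{B}$ is a U-statistic with kernel $m(\xi_{i},\xi_{j})=D_{i}k^{\prime}\left( \frac{1-F_{W}(W_{i})}{h_{n}}\right) \left[ 1\left\{ W_{j}>W_{i}\right\} -\left( 1-F_{W}(W_{i})\right) \right]$ that is degenerate only in the $\xi_{i}$ direction: $E\left[ m|\xi_{i}\right] =0$, but $E\left[ m|\xi_{j}\right] \neq 0$. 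The paper computes this projection explicitly (term (I) in the proof of Theorem \ref{Thm:FLC}) and finds, after $\sqrt{nh_{n}}$-normalization, the leading piece $-\frac{G_{\pi }(1)}{\sqrt{nh_{n}}}\sum_{j}\left( k_{nj}-h_{n}\kappa _{0}\right)$, whose variance tends to $G_{\pi }^{2}(1)\chi _{0}$. For the denominator the weight is $\pi _{i}\equiv 1$, so $G_{1}(1)=1$ and this term is $O_{p}(\sqrt{h_{n}/n})$ with a \emph{nondegenerate} Gaussian limit; a direct second-moment bound (your proposed "delicate step") returns exactly $h_{n}/n$, never $o(h_{n}/n)$. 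The same applies to $\hat{A}-\tilde{A}$, whose leading part is $\mu _{0}$ times the same quantity. The cancellation you need is not inside each difference but \emph{between} them: in the combination $(\hat{A}-\tilde{A})-\mu _{0}(\hat{B}-\tilde{B})$ the weight $\mu _{0}+U_{i}$ collapses to $U_{i}$, and the offending projection then carries the factor $G_{U}(1)=EU_{i}=0$. This is exactly how the paper organizes the argument: it proves the $o_{p}$ statement only for the joint combination (its condition (b)), and for the denominator alone proves merely $\sqrt{nh_{n}}(\hat{B}_{n}-B_{n})/B_{n}=O_{p}(1)$ (its condition (c)), which suffices because that term is multiplied by $\mu _{n}-\mu _{0}=o_{p}(1)$. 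Your decoupled reduction cannot be completed without restoring this joint structure.

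A second, more technical gap is the order of your Taylor expansion. With only a second-order expansion, the quadratic remainder cannot be "disposed of by a crude bound" under Assumption 5$^{\prime}$: for the coefficient-error part you only have $|\delta _{i}^{D}|\leq O_{p}(n^{-1/2})(\left\Vert X_{i}\right\Vert +1)$, and the resulting contribution is of order $(nh_{n}^{3})^{-1/2}$ after $\sqrt{nh_{n}}$-normalization (equivalently $1/(nh_{n})$ against your target $\sqrt{h_{n}/n}$). Since Assumption 5$^{\prime}$.(ii) imposes $nh_{n}^{3}=O(1)$ --- and the optimal bandwidth has $h_{n}\asymp n^{-1/3}$, so $nh_{n}^{3}$ is bounded, not divergent --- this term is of exact order one, not negligible. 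Your $(\log h_{n}^{-1})/n$ rate is valid only for the pure $(\delta _{i}^{E})^{2}$ part via oscillation bounds. This is precisely why the paper requires six bounded derivatives of $k$ (Assumption 2$^{\prime}$), expands to sixth order so that the crude uniform bound is invoked only for the final remainder (yielding the $(nh_{n}^{13/5})^{-5/2}$ rate behind Assumption 5$^{\prime}$.(iii)), and treats every intermediate power of the coefficient-error part through the $J_{rsti}$ bounds of Lemma \ref{Lemma:alpha_ij}, whose key input is the smallness of $E\left\vert \alpha _{ij}(\beta )\right\vert$ via the tail-density condition of Assumption 6.(i), rather than any worst-case bound on $\delta _{i}^{D}$.
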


Theorem \ref{Thm:FLC} reveals that, under slightly stronger conditions, the
estimation error induced by the empirical CDF is asymptotically negligible;
thus, the asymptotic distribution of $\hat{\mu}$ is the same as if the true
CDF $F_{W}\left( \cdot \right) $ were known.

\subsection{Bandwidth selection}

An important implication of Theorem \ref{Thm:FLC} is that, under the
specific choice of $F\left( \cdot \right) =F_{W}\left( \cdot \right) $, the
kernel estimator for the intercept follows a standard asymptotic
distribution as an ordinary kernel regression estimator at the boundary
point. As a result, we can borrow approaches of bandwidth selection from the
nonparametric regression literature \citep[see, e.g.,][]{li2007nonparametric}%
. However, the widely used cross-validation method based on the integrated
mean squared error (MSE) criteria takes into account the global performance
of the regression function estimation and is thus not suitable for the
problem at hand. A closely related problem is the choice of bandwidth for
the nonparametric regression discontinuity estimator, which is the
difference between two regression estimators evaluated at boundary points. I
follow the plug-in method proposed by \cite{imbens2012optimal} and suggest a
data-dependent bandwidth selection procedure that is tailored to $\hat{\mu}$.

By Theorem \ref{Thm:FLC}, we know that%
\begin{equation*}
\text{Abias}\left( \hat{\mu}\right) =\left( -\frac{\kappa _{1}g\left(
1\right) }{\kappa _{0}}\right) h_{n},\text{ \ Avar}\left( \hat{\mu}\right)
=\left( \frac{\chi _{0}\sigma _{U}^{2}}{\kappa _{0}^{2}}\right) \frac{1}{%
nh_{n}},
\end{equation*}%
where $g\left( 1\right) =\left. \left. dG\left( t\right) \right/
dt\right\vert _{t=1}$ with $G\left( t\right) $ defined in (\ref{G(t)}), and $%
\sigma _{U}^{2}=Var\left( U_{i}\right) $. Provided $g\left( 1\right) \neq 0$%
, the optimal bandwidth for $\hat{\mu}$ is defined as a minimizer of its
asymptotic MSE:%
\begin{eqnarray*}
h_{opt} &=&\arg \min_{h}\left\{ \text{AMSE}_{h}\left( \hat{\mu}\right)
=\left( \frac{\kappa _{1}g\left( 1\right) }{\kappa _{0}}\right)
^{2}h^{2}+\left( \frac{\chi _{0}\sigma _{U}^{2}}{\kappa _{0}^{2}}\right) 
\frac{1}{nh}\right\} \\
&=&c_{k}\left( \sigma _{U}^{2}\left/ g^{2}\left( 1\right) \right. \right)
^{1/3}n^{-1/3},
\end{eqnarray*}%
where $c_{k}=\left( \chi _{0}\left/ \left( 2\kappa _{1}^{2}\right) \right.
\right) ^{1/3}$ is a functional of the kernel $k\left( \cdot \right) $. If $%
g\left( 1\right) =0$, the bias converges to zero faster, allowing for
estimation of the intercept at a faster rate of convergence. However, it is
difficult to exploit the improved convergence rate resulting from this
condition in practice; hence, I focus on the optimal bandwidth given $%
g\left( 1\right) \neq 0$.

A natural choice of the estimator for the optimal bandwidth $h_{opt}$ is to
replace $\sigma _{U}^{2}$ and $g\left( 1\right) $ with their consistent
estimators $\hat{\sigma}_{U}^{2}$ and $\hat{g}\left( 1\right) $,
respectively. One potential problem with this choice is that $\hat{g}\left(
1\right) $ may occasionally be very close to zero due to the stochastic
estimation error, even if $g\left( 1\right) \neq 0$. In such cases, the
estimated bandwidth will be imprecisely and unstably large, which may in
turn lead to large finite sample bias of $\hat{\mu}$ because observations
that are far from the boundary will be included in the kernel estimation. To
alleviate this problem, I employ the regularization method 
\citep[Subsection
4.1.1]{imbens2012optimal} and propose the following bandwidth estimator: 
\begin{equation}
\hat{h}_{opt}=c_{k}\left( \frac{\hat{\sigma}_{U}^{2}}{\hat{g}^{2}\left(
1\right) +3\widehat{Var}\left( \hat{g}\left( 1\right) \right) }\right)
^{1/3}n^{-1/3}.  \label{hat_h}
\end{equation}%
By means of regularization, $\hat{h}_{opt}$ will not become infinite even in
the case of $\hat{g}\left( 1\right) =0$. Moreover, the leading term of $E%
\left[ 1\left/ \left( \hat{g}^{2}\left( 1\right) +3Var\left( \hat{g}\left(
1\right) \right) \right) \right. -1\left/ \hat{g}^{2}\left( 1\right) \right. %
\right] $ cancels out the leading term of $E\left[ 1\left/ \hat{g}^{2}\left(
1\right) \right. -1\left/ g^{2}\left( 1\right) \right. \right] $. Therefore,
the bias of $1\left/ \left( \hat{g}^{2}\left( 1\right) +3Var\left( \hat{g}%
\left( 1\right) \right) \right) \right. $ for the reciprocal of $g^{2}\left(
1\right) $ is of lower order than the bias of the naive $1\left/ \hat{g}%
^{2}\left( 1\right) \right. $. It remains to construct consistent estimators
for the components of the plug-in bandwidth, namely, $\hat{\sigma}_{U}^{2}$, 
$\hat{g}^{2}\left( 1\right) $, and $\widehat{Var}\left( \hat{g}\left(
1\right) \right) $, which is deferred to the next section for expositional
convenience.

\section{Local linear estimation}

\label{sec:LL}Theorem \ref{Thm:FLC} finds that the kernel regression
estimator $\hat{\mu}$ for the intercept suffers from a boundary effect in
the sense that its order of bias, $O\left( h_{n}\right) $, is larger than
that of the nonparametric regression estimator in the interior, which is
typically $O\left( h_{n}^{2}\right) $. This is simply because $\hat{\mu}$ is
in nature a Nadaraya-Watson or local constant estimator at the boundary. In
this section, I resort to the local polynomial regression method %
\citep{fan1996local} for bias reduction. In particular, I focus on the local
linear estimation because of its asymptotic minimax efficiency properties %
\citep{cheng1997automatic} and attractive practical performance %
\citep{gelman2019why}.

Denote $\hat{F}_{n}\left( \hat{W}_{i}\right) =\frac{1}{n-1}\sum_{j\neq
i}1\left\{ \hat{W}_{j}\leq \hat{W}_{i}\right\} $ as the empirical estimate
of $F_{W}\left( W_{i}\right) $ as before. The local linear estimator $\hat{%
\mu}^{L}$ for the intercept is defined via locally weighted least squares
regression:%
\begin{equation}
\left( \hat{\mu}^{L},\hat{b}^{L}\right) =\arg \min_{\mu ,b}\sum_{i=1}^{n}%
\left[ Y_{i}-Z_{i}^{\prime }\hat{\theta}-\mu -\left( \hat{F}_{n}\left( \hat{W%
}_{i}\right) -1\right) b\right] ^{2}D_{i}k\left( \frac{1-\hat{F}_{n}\left( 
\hat{W}_{i}\right) }{h_{n}}\right) .  \label{FLL}
\end{equation}%
To establish the asymptotic properties of $\hat{\mu}^{L}$, Assumption 5'
must be modified to accommodate the local linear case.

\bigskip

\noindent\textbf{Assumption 5''.} The bandwidth $h_{n}\in \left( 0,1/2\right]
$ satisfies (i) $h_{n}\rightarrow 0$, (ii) $nh_{n}^{5}=O\left( 1\right) $,
and (iii) $nh_{n}^{3}\rightarrow \infty $, as $n\rightarrow \infty $.

\begin{theorem}
\label{Thm:FLL}Suppose Assumptions 1'-4', 5\textquotedblright , and 6 hold.
In addition, suppose $G\left( t\right) $ given in (\ref{G(t)}) is twice
continuously differentiable over $t\in \left( 1-\delta ,1\right] $ for a
small $\delta >0$, with its first and second derivative functions $g\left(
t\right) $ and $g^{\prime }\left( t\right) $ being bounded above over $t\in
\left( 1-\delta ,1\right] $. Then, the local linear estimator defined in (%
\ref{FLL}) is consistent for $\left( \mu _{0},g\left( 1\right) \right) $ and
asymptotically normal:%
\begin{equation*}
\left( 
\begin{array}{cc}
\sqrt{nh_{n}} &  \\ 
& \sqrt{nh_{n}^{3}}%
\end{array}%
\right) \left[ \left( 
\begin{array}{c}
\hat{\mu}^{L} \\ 
\hat{b}^{L}%
\end{array}%
\right) -\left( 
\begin{array}{c}
\mu _{0} \\ 
g\left( 1\right)%
\end{array}%
\right) +\left( 
\begin{array}{c}
\frac{\left( \kappa _{1}\kappa _{3}-\kappa _{2}^{2}\right) g^{\prime }\left(
1\right) }{2\left( \kappa _{0}\kappa _{2}-\kappa _{1}^{2}\right) }h_{n}^{2}
\\ 
\frac{\left( \kappa _{0}\kappa _{3}-\kappa _{1}\kappa _{2}\right) g^{\prime
}\left( 1\right) }{2\left( \kappa _{0}\kappa _{2}-\kappa _{1}^{2}\right) }%
h_{n}%
\end{array}%
\right) \right] \rightarrow N\left( 0,\sigma _{U}^{2}\Omega ^{L}\right) ,
\end{equation*}%
where $\sigma _{U}^{2}=EU_{i}^{2}$ and%
\begin{equation*}
\Omega ^{L}=\frac{1}{\left( \kappa _{0}\kappa _{2}-\kappa _{1}^{2}\right)
^{2}}\left( 
\begin{array}{cc}
\kappa _{2}^{2}\chi _{0}+\kappa _{1}^{2}\chi _{2}-2\kappa _{1}\kappa
_{2}\chi _{1} & \kappa _{1}\kappa _{2}\chi _{0}+\kappa _{0}\kappa _{1}\chi
_{2}-\left( \kappa _{0}\kappa _{2}+\kappa _{1}^{2}\right) \chi _{1} \\ 
\kappa _{1}\kappa _{2}\chi _{0}+\kappa _{0}\kappa _{1}\chi _{2}-\left(
\kappa _{0}\kappa _{2}+\kappa _{1}^{2}\right) \chi _{1} & \kappa
_{1}^{2}\chi _{0}+\kappa _{0}^{2}\chi _{2}-2\kappa _{0}\kappa _{1}\chi _{1}%
\end{array}%
\right) .
\end{equation*}
\end{theorem}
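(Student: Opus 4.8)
The plan is to mimic the analysis behind Theorem~\ref{Thm:FLC} but for the two-dimensional local linear system, disposing of the estimated ingredients first and then carrying out a routine boundary local-linear computation. Writing $\hat T_i=\hat F_n(\hat W_i)-1$ and $\hat k_i=k((1-\hat F_n(\hat W_i))/h_n)$, the normal equations for (\ref{FLL}) give the closed form
\[
\begin{pmatrix}\hat\mu^L\\\hat b^L\end{pmatrix}=\hat S_n^{-1}\sum_{i=1}^n D_i\hat k_i\begin{pmatrix}1\\\hat T_i\end{pmatrix}(Y_i-Z_i'\hat\theta),\qquad \hat S_n=\sum_{i=1}^n D_i\hat k_i\begin{pmatrix}1&\hat T_i\\\hat T_i&\hat T_i^2\end{pmatrix}.
\]
Because the two coordinates converge at the distinct rates $\sqrt{nh_n}$ and $\sqrt{nh_n^3}$, I would pre- and post-multiply by $H=\mathrm{diag}(1,h_n)$ and analyze $(nh_n)^{-1}H^{-1}\hat S_nH^{-1}$ together with $(nh_n)^{-1}H^{-1}\sum_i D_i\hat k_i(1,\hat T_i)'(Y_i-Z_i'\hat\theta)$; the announced scaling $\mathrm{diag}(\sqrt{nh_n},\sqrt{nh_n^3})=\sqrt{nh_n}\,H$ then emerges from the sandwich.

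First I would replace the three estimated objects---$\hat\theta$ by $\theta_0$, $\hat W_i=X_i'\hat\beta$ by $W_i=X_i'\beta_0$, and $\hat F_n$ by $F_W$---and show each replacement is asymptotically negligible, reducing to the infeasible statistic built from $V_i:=F_W(W_i)$, $\theta_0$, and $\beta_0$. The $\hat\theta$ term enters the response as $Z_i'(\hat\theta-\theta_0)$; since $\hat\theta-\theta_0=O_p(n^{-1/2})$ by Assumption~4$'$ and the effective sample size is of order $nh_n$, its contribution to the $H$-normalized statistic is $o_p(1)$ after the final scaling. This is the main obstacle: the empirical CDF now enters not only the kernel weight $\hat k_i$ but also the design regressor $\hat T_i$ and the curvature term $\hat T_i^2$, so the delicate control used for Theorem~\ref{Thm:FLC} must be extended to the extra coordinates of $\hat S_n$ and of the moment vector. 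I would reuse the decomposition of $\hat F_n(\hat W_i)-F_W(W_i)$ into an empirical-process component and a discontinuous (indicator) error component, bounding the former by the uniform oscillation rate of the empirical process over the shrinking interval $\{F_W(W_i)>1-h_n\}$ \citep[cf.][]{stute1982oscillation} and the latter through Assumption~6 and the results of \cite{schafgans2002intercept}; Assumptions~1$'$, 2$'$, and 5$''$ supply the additional moment, smoothness, and rate conditions (with $nh_n^3\to\infty$) that render the error in each matrix and vector entry $o_p$ of its leading term.

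For the infeasible statistic the computation is standard. Conditioning on $V_i=t$ gives $E[D_i\mid V_i=t]=F_\varepsilon(F_W^{-1}(t))\to1$ as $t\to1$, and writing the selected response as $\mu_0+U_i=[\mu_0+G(V_i)]+\eta_i$ with $\eta_i:=U_i-G(V_i)$ satisfying $E[D_i\eta_i\mid V_i]=0$, the problem splits into a bias part driven by $\mu_0+G(V_i)$ and a variance part driven by $\eta_i$. The change of variables $u=(1-t)/h_n$ turns each moment into a kernel integral, $E[D_ik_{ni}(V_i-1)^r]=(-1)^rh_n^{r+1}\kappa_r(1+o(1))$ and $E[D_ik_{ni}^2\eta_i^2(V_i-1)^r]=(-1)^rh_n^{r+1}\sigma_U^2\chi_r(1+o(1))$, using $F_\varepsilon(F_W^{-1}(t))\to1$, $G(t)\to0$, and $E[U_i^2\mid\varepsilon_i<F_W^{-1}(t)]\to\sigma_U^2$. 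Hence $(nh_n)^{-1}H^{-1}\hat S_nH^{-1}\xrightarrow{p}\mathcal K:=\left(\begin{smallmatrix}\kappa_0&-\kappa_1\\-\kappa_1&\kappa_2\end{smallmatrix}\right)$, invertible by Cauchy--Schwarz ($\kappa_0\kappa_2-\kappa_1^2>0$). The bias is computed from the identity that local linear regression reproduces any affine function exactly: $\hat S_n(\mu_0,g(1))'=\sum_iD_ik_{ni}(1,V_i-1)'[\mu_0+g(1)(V_i-1)]$. A second-order expansion $G(V_i)=g(1)(V_i-1)+\tfrac12 g'(1)(V_i-1)^2+o((V_i-1)^2)$---legitimate since $G(1)=0$ and $G$ is twice continuously differentiable near $1$ with bounded $g,g'$---then leaves the curvature term, which after multiplication by $H^{-1}\mathcal K^{-1}H^{-1}$ and the vector $(\kappa_2,-\kappa_3)'$ yields exactly the stated $O(h_n^2)$ and $O(h_n)$ bias entries.

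Finally I would establish the joint CLT for the variance part $\hat S_n^{-1}\sum_iD_ik_{ni}(1,V_i-1)'\eta_i$. These summands form a mean-zero i.i.d.\ triangular array (depending on $n$ through $h_n$); by the Cram\'er--Wold device I reduce to a scalar and verify the Lyapunov condition using the $(2+c_1)$ moment of $U_i$ from Assumption~1.(iv), the relevant ratio being of order $(nh_n)^{-c_1/2}\to0$ for both coordinates once $nh_n^3\to\infty$ forces $nh_n\to\infty$. Combining the variance $nh_n\sigma_U^2\,H\mathcal XH(1+o(1))$, with $\mathcal X:=\left(\begin{smallmatrix}\chi_0&-\chi_1\\-\chi_1&\chi_2\end{smallmatrix}\right)$, with the design limit gives, after the $\sqrt{nh_n}\,H$ scaling, the sandwich covariance $\sigma_U^2\,\mathcal K^{-1}\mathcal X\mathcal K^{-1}=\sigma_U^2\Omega^L$. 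The explicit bias terms are retained in the centering because $nh_n^5=O(1)$ (Assumption~5$''$(ii)) makes the scaled leading biases $\sqrt{nh_n}\,h_n^2$ and $\sqrt{nh_n^3}\,h_n$ both $O(1)$, while the expansion remainders, scaled identically, are $\sqrt{nh_n^5}\cdot o(1)=o(1)$; subtracting the stated leading biases thus leaves an asymptotically centered, asymptotically normal statistic, completing the proof.
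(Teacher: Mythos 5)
Your proposal is correct and follows essentially the same route as the paper's proof: the same two-part structure (a standard boundary local-linear analysis of the infeasible estimator built from $F_W$, $\theta_0$, $\beta_0$, combined with a negligibility argument for the estimated ingredients using the empirical-process/indicator decomposition of $\hat{F}_n(\hat{W}_i)-F_W(W_i)$, Stute's oscillation lemma, and the Schafgans-type argument under Assumption 6), the same decomposition of the response into the affine part reproduced exactly by local linear regression, the curvature bias term, and the mean-zero part $U_i-G(F_W(W_i))$, and the same Cram\'er--Wold/Lyapunov CLT yielding the sandwich covariance $\sigma_U^2\mathcal{K}^{-1}\mathcal{X}\mathcal{K}^{-1}=\sigma_U^2\Omega^L$. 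The only differences are cosmetic: you handle negligibility before the infeasible CLT rather than after, and you express the design and bias algebra through the $H=\mathrm{diag}(1,h_n)$ normalization rather than the paper's explicit $\Lambda_r$ quantities, both of which produce identical bias and variance expressions.
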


\medskip

Theorem \ref{Thm:FLL} shows that the local linear estimation of the
intercept automatically eliminates the boundary effect, and its asymptotic
bias is of the same order as that in the interior. More interestingly, the
local linear procedure generates a consistent estimate of $g\left( 1\right) $
as a byproduct, which is a key ingredient of the optimal bandwidth for the
kernel estimator $\hat{\mu}$. There would be no technical difficulty in
extending the results of Theorem \ref{Thm:FLL} to local polynomial
estimation with higher order, except for more complicated notation and more
involved mathematical derivations.

\subsection{Bandwidth selection}

Provided $g^{\prime }\left( 1\right) \neq 0$, the optimal bandwidth for $%
\hat{\mu}^{L}$ is analogously defined by minimizing the asymptotic MSE of $%
\hat{\mu}^{L}$:%
\begin{eqnarray*}
h_{opt}^{L} &=&\arg \min_{h}\left\{ \text{AMSE}_{h}\left( \hat{\mu}%
^{L}\right) =\left[ \frac{\left( \kappa _{1}\kappa _{3}-\kappa
_{2}^{2}\right) g^{\prime }\left( 1\right) }{2\left( \kappa _{0}\kappa
_{2}-\kappa _{1}^{2}\right) }\right] ^{2}h^{4}+\frac{\left( \kappa
_{2}^{2}\chi _{0}+\kappa _{1}^{2}\chi _{2}-2\kappa _{1}\kappa _{2}\chi
_{1}\right) \sigma _{U}^{2}}{\left( \kappa _{0}\kappa _{2}-\kappa
_{1}^{2}\right) ^{2}}\frac{1}{nh}\right\} \\
&=&c_{k}^{L}\left( \sigma _{U}^{2}\left/ \left[ g^{\prime }\left( 1\right) %
\right] ^{2}\right. \right) ^{1/5}n^{-1/5},
\end{eqnarray*}%
where%
\begin{equation}
c_{k}^{L}=\left[ \frac{\kappa _{2}^{2}\chi _{0}+\kappa _{1}^{2}\chi
_{2}-2\kappa _{1}\kappa _{2}\chi _{1}}{\left( \kappa _{1}\kappa _{3}-\kappa
_{2}^{2}\right) ^{2}}\right] ^{1/5}.  \label{cL_k}
\end{equation}%
As in Subsection \ref{sec:KRE}.3, I adopt the regularization method and
propose the following estimator for $h_{opt}^{L}$: 
\begin{equation}
\hat{h}_{opt}^{L}=c_{k}^{L}\left( \frac{\hat{\sigma}_{U}^{2}}{\left[ \hat{g}%
^{\prime }\left( 1\right) \right] ^{2}+3\widehat{Var}\left( \hat{g}^{\prime
}\left( 1\right) \right) }\right) ^{1/5}n^{-1/5}.  \label{hat_h_L}
\end{equation}

To implement the plug-in bandwidth selection procedures (\ref{hat_h}) and (%
\ref{hat_h_L}), one must estimate the limits of the derivative functions, $%
g\left( 1\right) $ and $g^{\prime }\left( 1\right) $, the regularization
terms, $Var\left( \hat{g}\left( 1\right) \right) $ and $Var\left( \hat{g}%
^{\prime }\left( 1\right) \right) $, and the variance of the disturbance, $%
\sigma _{U}^{2}$. I first construct $\hat{g}\left( 1\right) $ and $\hat{g}%
^{\prime }\left( 1\right) $ by fitting a quadratic function to the
observations near the boundary:%
\begin{equation}
\left( \hat{\mu}^{Q},\hat{g}\left( 1\right) ,\hat{g}^{\prime }\left(
1\right) \right) =\arg \min_{b_{0},b_{1},b_{2}}\sum_{i=1}^{n}\left[
Y_{i}-Z_{i}^{\prime }\hat{\theta}-\sum_{r=0}^{2}\frac{\left( \hat{F}%
_{n}\left( \hat{W}_{i}\right) -1\right) ^{r}}{r!}b_{r}\right]
^{2}D_{i}k\left( \frac{1-\hat{F}_{n}\left( \hat{W}_{i}\right) }{h_{1n}}%
\right) ,  \label{LQE}
\end{equation}%
where $h_{1n}$ is a pilot bandwidth. Similar to Theorem \ref{Thm:FLL},\ one
can show that under $nh_{1n}^{7}=O\left( 1\right) $, $nh_{1n}^{5}\rightarrow
\infty $, and the regularity conditions, the local quadratic estimator is
consistent and asymptotically normal:%
\begin{equation*}
\left( 
\begin{array}{ccc}
\sqrt{nh_{1n}} &  &  \\ 
& \sqrt{nh_{1n}^{3}} &  \\ 
&  & \sqrt{nh_{1n}^{5}}%
\end{array}%
\right) \left[ \left( 
\begin{array}{c}
\hat{\mu}^{Q} \\ 
\hat{g}\left( 1\right) \\ 
\hat{g}^{\prime }\left( 1\right)%
\end{array}%
\right) -\left( 
\begin{array}{c}
\mu _{0} \\ 
g\left( 1\right) \\ 
g^{\prime }\left( 1\right)%
\end{array}%
\right) +\left( 
\begin{array}{c}
B_{1}^{Q}g^{\prime \prime }\left( 1\right) h_{1n}^{3} \\ 
B_{2}^{Q}g^{\prime \prime }\left( 1\right) h_{1n}^{2} \\ 
B_{3}^{Q}g^{\prime \prime }\left( 1\right) h_{1n}%
\end{array}%
\right) \right] \rightarrow N\left( 0,\sigma _{U}^{2}\Omega ^{Q}\right) .
\end{equation*}%
As a result, the regularization terms can be estimated by%
\begin{equation*}
\widehat{Var}\left( \hat{g}\left( 1\right) \right) =\frac{1}{nh_{1n}^{3}}%
\hat{\sigma}_{U}^{2}\Omega _{22}^{Q},\text{ \ }\widehat{Var}\left( \hat{g}%
^{\prime }\left( 1\right) \right) =\frac{1}{nh_{1n}^{5}}\hat{\sigma}%
_{U}^{2}\Omega _{33}^{Q},
\end{equation*}%
where%
\begin{eqnarray}
\Omega _{22}^{Q} &=&\frac{\left\{ 
\begin{array}{c}
\left( \kappa _{1}\kappa _{4}-\kappa _{2}\kappa _{3}\right) ^{2}\chi
_{0}-2\left( \kappa _{1}\kappa _{4}-\kappa _{2}\kappa _{3}\right) \left(
\kappa _{0}\kappa _{4}-\kappa _{2}^{2}\right) \chi _{1} \\ 
+\left[ \left( \kappa _{0}\kappa _{4}-\kappa _{2}^{2}\right) ^{2}+2\left(
\kappa _{1}\kappa _{4}-\kappa _{2}\kappa _{3}\right) \left( \kappa
_{0}\kappa _{3}-\kappa _{1}\kappa _{2}\right) \right] \chi _{2} \\ 
-\left( \kappa _{0}\kappa _{4}-\kappa _{2}^{2}\right) \left( \kappa
_{0}\kappa _{3}-\kappa _{1}\kappa _{2}\right) \chi _{3}+\left( \kappa
_{0}\kappa _{3}-\kappa _{1}\kappa _{2}\right) ^{2}\chi _{4}%
\end{array}%
\right\} }{\left( \kappa _{0}\kappa _{2}\kappa _{4}-\kappa _{0}\kappa
_{3}^{2}-\kappa _{1}^{2}\kappa _{4}+2\kappa _{1}\kappa _{2}\kappa
_{3}-\kappa _{2}^{3}\right) ^{2}},  \label{omigaQ_22} \\
\Omega _{33}^{Q} &=&\frac{\left\{ 
\begin{array}{c}
\left( \kappa _{1}\kappa _{3}-\kappa _{2}^{2}\right) ^{2}\chi _{0}-2\left(
\kappa _{1}\kappa _{3}-\kappa _{2}^{2}\right) \left( \kappa _{0}\kappa
_{3}-\kappa _{1}\kappa _{2}\right) \chi _{1} \\ 
+\left[ \left( \kappa _{0}\kappa _{3}-\kappa _{1}\kappa _{2}\right)
^{2}+2\left( \kappa _{1}\kappa _{3}-\kappa _{2}^{2}\right) \left( \kappa
_{0}\kappa _{2}-\kappa _{1}^{2}\right) \right] \chi _{2} \\ 
-\left( \kappa _{0}\kappa _{3}-\kappa _{1}\kappa _{2}\right) \left( \kappa
_{0}\kappa _{2}-\kappa _{1}^{2}\right) \chi _{3}+\left( \kappa _{0}\kappa
_{2}-\kappa _{1}^{2}\right) ^{2}\chi _{4}%
\end{array}%
\right\} }{\left( \kappa _{0}\kappa _{2}\kappa _{4}-\kappa _{0}\kappa
_{3}^{2}-\kappa _{1}^{2}\kappa _{4}+2\kappa _{1}\kappa _{2}\kappa
_{3}-\kappa _{2}^{3}\right) ^{2}}.  \label{omigaQ_33}
\end{eqnarray}%
Last, I estimate the variance of the disturbance by 
\begin{equation}
\hat{\sigma}_{U}^{2}=\frac{\displaystyle\sum_{i=1}^{n}\left( Y_{i}-\hat{\mu}%
^{Q}-Z_{i}^{\prime }\hat{\theta}\right) ^{2}D_{i}k\left( \frac{1-\hat{F}%
_{n}\left( \hat{W}_{i}\right) }{h_{2n}}\right) }{\displaystyle%
\sum_{i=1}^{n}D_{i}k\left( \frac{1-\hat{F}_{n}\left( \hat{W}_{i}\right) }{%
h_{2n}}\right) },  \label{sigma2_hat}
\end{equation}%
where $\hat{\mu}^{Q}$ is the initial estimate of $\mu _{0}$ given in (\ref%
{LQE}) and $h_{2n}$ is another pilot bandwidth. Following Theorem \ref%
{Thm:FLC}, it can be shown that%
\begin{equation*}
\hat{\sigma}_{U}^{2}\overset{p}{\rightarrow }E\left[ \left. \left( Y_{i}-\mu
_{0}-Z_{i}^{\prime }\theta _{0}\right) ^{2}\right\vert D_{i}=1,F_{W}\left(
W_{i}\right) =1\right] =EU_{i}^{2}=\sigma _{U}^{2}.
\end{equation*}%
For the pilot bandwidths, simply setting%
\begin{equation*}
h_{1n}=n^{-1/7},\text{ }h_{2n}=n^{-1/3}
\end{equation*}%
is sufficient to ensure the consistency of $\hat{h}_{opt}$ for $h_{opt}$ and
of $\hat{h}_{opt}^{L}$ for $h_{opt}^{L}$. In practice, the suggested
bandwidth selection algorithm is fairly robust to the choice of pilot
bandwidth, which is not surprising given the presence of the power $1/3$ or $%
1/5$ in the expressions for the optimal bandwidths.

\section{Simulation}

\label{sec:simulation}This section examines the finite sample properties of
the kernel regression estimator $\hat{\mu}$ defined in (\ref{FLC}) and the
local linear estimator $\hat{\mu}^{L}$ defined in (\ref{FLL}), in comparison
with the parametric two-step estimator \citep{heckman1979sample} and the
semiparametric identification-at-infinity estimators %
\citep{heckman1990varieties, andrews1998semiparametric}. The parametric
two-step procedure implements probit estimation for the selection equation
in the first step and least squares estimation for the outcome equation with
a correction term using the uncensored observations in the second step. This
approach is commonly applied in empirical studies due to its computational
ease. However, it is likely to be inconsistent when the true distribution of
the model disturbance is nonnormal. In contrast, the consistency of \cite%
{heckman1990varieties}'s estimator (henceforth the Heckman estimator) and 
\cite{andrews1998semiparametric}'s estimator (henceforth the AS estimator)
does not rely on parametric specification of the disturbance distribution,
but it is difficult to choose an appropriate smoothing parameter for these
estimators. To investigate the robustness of their practical performance, a
wide range of smoothing parameters is considered. Following the literature,
the choices considered are based on the percentage of uncensored
observations used in the estimation. Specifically, the smoothing parameter
takes the values of various quantiles of the selection linear index in the
uncensored subsample.

The simulation setting mainly follows \cite{schafgans2004finite}, and the
data generating process is%
\begin{eqnarray*}
D_{i} &=&1\left\{ c_{0}+X_{1i}+X_{2i}>\varepsilon _{i}\right\} , \\
Y_{i}^{\ast } &=&\mu _{0}+U_{i}, \\
Y_{i} &=&Y_{i}^{\ast }D_{i},\text{ }i=1,2,\cdots ,n,
\end{eqnarray*}%
where $\mu _{0}=0$, $X_{1i}$ follows the standard normal distribution, $%
X_{2i}$ follows the standardized (zero-mean, unit-variance) Student's $t$
distribution with three degrees of freedom, $\varepsilon _{i}$ and $U_{i}$
are zero-mean random variables described below, and only $\left(
Y_{i},D_{i},X_{1i},X_{2i}\right) $ is observed. In the simulation, the
outcome equation does not contain any nonconstant regressors, implying that
the intercept $\mu _{0}$ of primary concern represents the population mean
of the latent outcome $Y_{i}^{\ast }$. Different designs are constructed by
varying the disturbance distribution and the value of $c_{0}$. $\varepsilon
_{i}$ follows three different distributions, namely, the standard normal
distribution, the standardized Student's $t$ distribution with three degrees
of freedom, and the standardized chi-square distribution with three degrees
of freedom. $U_{i}$ is generated by $U_{i}=\varepsilon _{i}+e_{i}$, where $%
e_{i}$ is a standard normal random variable independent of $\varepsilon _{i}$%
. The constant $c_{0}$ controls for the amount of censoring. In the
benchmark design, $c_{0}=0$, producing approximately 50\% censoring.
Different values of $c_{0}$ are chosen so that $\Pr \left(
c_{0}+X_{1i}+X_{2i}>\varepsilon _{i}\right) $ is equal to 0.8 and 0.2,
corresponding to 20\% and 80\% proportions of zero observations,
respectively. The sample size $n$ is set to 250, 1000, 4000, and the
simulation is replicated 1000 times for each design.

Before calculating the semiparametric estimators for $\mu _{0}$, a
distribution-free estimate for the selection equation is necessary. Since
the simulation results of \cite{schafgans2004finite} show little sensitivity
to the particular choice of this estimate, I employ the average derivative
estimation \citep{powell1989semiparametric} for computational convenience.
When implementing the Heckman and AS estimators, the smoothing parameter is
equal to the 0.99, 0.95, 0.9, 0.8, 0.7, and 0.5 quantiles of $\hat{W}_{i}=%
\hat{\beta}_{1}X_{1i}+\hat{\beta}_{2}X_{2i}$ in the uncensored subsample,
corresponding to 1\%, 5\%, 10\%, 20\%, 30\%, and 50\% uncensored
observations used in the estimation. The value of the smoothing parameter
declines with the proportion of uncensored observations. For consistency,
the smoothing parameter is required to approach infinity as $n$ goes to
infinity such that the estimation is based on only the observations for
which $\Pr \left( \left. D_{i}=1\right\vert X_{i}\right) $ is close to one
and in the limit is equal to one. Following the suggestion of \cite%
{andrews1998semiparametric}, the smoothed function in the AS estimator is%
\begin{equation*}
s\left( w\right) =\left\{ 
\begin{array}{c}
0 \\ 
1-\exp \left( -\frac{w}{b-w}\right) \\ 
1%
\end{array}%
\begin{array}{l}
\text{for }w\leq 0, \\ 
\text{for }0<w\leq b, \\ 
\text{for }w>b,%
\end{array}%
\right.
\end{equation*}%
where $b$ is set equal to 1. Note that the AS estimator with $b=0$ is
equivalent to the Heckman estimator. For the proposed kernel-type
estimators, the bandwidth is chosen by the plug-in algorithm given in
Subsection \ref{sec:LL}.1. The commonly used Gaussian and Epanechnikov
kernel functions are applied. However, the Gaussian kernel is not compactly
supported, and the Epanechnikov kernel is not smooth. Therefore, I also
consider the polynomial and polyweight kernels, both of degree seven, which
possess sufficient smoothness required by Assumption 2'. The definition of
these kernel functions and their relevant functionals are presented in Table %
\ref{table:kernel}.

\begin{sidewaystable}[ptb]
\caption{Several kernel functions and their relevant functionals}
\bigskip
\label{table:kernel}
\renewcommand\arraystretch{2}
\resizebox{\linewidth}{!}{
\begin{tabular}{C{2.2cm}ccc}
\hline\hline
Kernel & Definition & $\kappa _{r}=\int_{0}^{\infty }t^{r}k\left( t\right) dt
$, $r\in \mathbb{N}$ & $\chi _{r}=\int_{0}^{\infty }t^{r}k^{2}\left(
t\right) dt$, $r\in \mathbb{N}$ \\ \hline
Gaussian & $\displaystyle k\left( t\right) =\frac{1}{\sqrt{2\pi }}\exp
\left( -\frac{t^{2}}{2}\right)1\left\{t\geq 0\right\} $ & $\displaystyle\sqrt{\frac{2^{r-2}}{\pi }}%
\gamma \left( \frac{r+1}{2}\right) $ & $\displaystyle\frac{1}{4\pi }\gamma
\left( \frac{r+1}{2}\right) $ \\ 
Epanechnikov & $\displaystyle k\left( t\right) =\left\{ \QATOP{1-t^{2}\text{
if }0\leq t\leq 1}{0\text{ \ if }t>1}\right. $ & $\displaystyle\frac{2}{%
\left( r+1\right) \left( r+3\right) }$ & $\displaystyle\frac{8}{\left(
r+1\right) \left( r+3\right) \left( r+5\right) }$ \\ 
Polynomial of degree 7 & $\displaystyle k\left( t\right) =\left\{ \QATOP{%
\left( 1-t\right) ^{7}\text{ if }0\leq t\leq 1}{\text{ \ }0\text{\ \ \ if }%
t>1}\right. $ & $\displaystyle B\left( r+1,8\right) =\frac{7!r!}{\left(
r+8\right) !}$ & $\displaystyle B\left( r+1,15\right) =\frac{14!r!}{\left(
r+15\right) !}$ \\ 
Polyweight of degree 7 & $\displaystyle k\left( t\right) =\left\{ \QATOP{%
\left( 1-t^{2}\right) ^{7}\text{ if }0\leq t\leq 1}{\text{ \ }0\text{\ \ \ \
if }t>1}\right. $ & $\displaystyle\frac{1}{2}B\left( \frac{r+1}{2},8\right) =%
\frac{2^{7}7!\left( r-1\right) !!}{\left( r+15\right) !!}$ & $\displaystyle%
\frac{1}{2}B\left( \frac{r+1}{2},15\right) =\frac{2^{14}14!\left( r-1\right)
!!}{\left( r+29\right) !!}$ \\ \hline
Kernel & $c_{k}^{L}$ in (\ref{cL_k}) & $\Omega _{22}^{Q}$ in (\ref{omigaQ_22}%
) & $\Omega _{33}^{Q}$ in (\ref{omigaQ_33}) \\ \hline
Gaussian & $\left[ \frac{\left( \pi +1-2\sqrt{2}\right) \sqrt{\pi }}{\left(
4-\pi \right) ^{2}}\right] ^{1/5}\approx \allowbreak 1.\allowbreak 259$ & $%
\frac{\left( 4\pi +11-12\sqrt{2}\right) \sqrt{\pi }}{8\left( \pi -3\right)
^{2}}\approx \allowbreak 72.\allowbreak 89$ & $\frac{3\pi ^{2}-\left( 16-4%
\sqrt{2}\right) \pi +\left( 44-24\sqrt{2}\right) }{16\sqrt{\pi }\left( \pi
-3\right) ^{2}}\approx \allowbreak 12.\allowbreak 62$ \\ 
Epanechnikov & $3.200$ & $4913.0$ & $6327.8$ \\ 
Polynomial of degree 7 & $8.175$ & $8477.5$ & $48857.0$ \\ 
Polyweight of degree 7 & $5.396$ & $8645.0$ & $29139.5$ \\ \hline\hline
\end{tabular}
}
{\small {Note: $r!!$, $\gamma \left( r\right) $, and $B\left(
r_{1},r_{2}\right) $ denote the double factorial, gamma function, and beta
function, respectively.} }
\end{sidewaystable}

The summary statistics for the simulation are the estimators' bias, standard
deviation (SD), root mean squared error (RMSE) ratio, and rejection rate of
the t test, over 1000 replications. The RMSE can be calculated from the bias
and SD and is thus omitted from the tables. Instead, the RMSE ratio defined
by the RMSE of the semiparametric estimators over that of the parametric
two-step estimator is reported. Although the RMSE (ratio) is the primary
criterion used to compare consistent estimators, it is useless if there are
both consistent and inconsistent estimators. An inconsistent estimator
having a small RMSE implies that the estimator is highly concentrated in a
narrow interval centered at a biased value, consequently leading to
incorrect inference. As a complement to the RMSE (ratio), I consider the
simulated probability of rejecting the null hypothesis $H_{0}:\mu _{0}=0$
against $H_{1}:\mu _{0}\neq 0$ at a 5\% level of significance using t tests
(or, more precisely, z tests), based on the asymptotic variances given in 
\cite{heckman1979sample}, \cite{schafgans2002intercept}, \cite%
{andrews1998semiparametric}, and Theorems \ref{Thm:FLC}-\ref{Thm:FLL}.

Tables \ref{table:normal}-\ref{table:chi2} report the simulation results
when the model disturbance follows a normal distribution, a $t\left(
3\right) $ distribution that is symmetric but fat-tailed, and a $\chi
^{2}\left( 3\right) $ distribution that is skewed, respectively, under
approximately 50\% censoring. Table \ref{table:normal} shows that, under
normal disturbance, the parametric two-step estimator is asymptotically
unbiased, as expected, and converges at a $\sqrt{n}$ rate (its SD halves
when the sample size quadruples). In contrast, all the considered
semiparametric estimators converge at slower than $\sqrt{n}$ rates, as their
RMSE ratios increase with the sample size. For the Heckman and AS
estimators, the bias increases and the SD decreases when the proportion of
uncensored observations used in the estimation increases or, equivalently,
the smoothing parameter decreases. The optimal smoothing parameter in terms
of RMSE depends upon the estimator, the sample size, and, by comparing
across Tables \ref{table:normal}-\ref{table:chi2}, the disturbance
distribution. If the smoothing parameter is improperly chosen, the RMSE may
be several times larger than the smallest RMSE, and the rejection rate may
be far larger than the specified level of significance. Moreover, the
optimal smoothing parameter in terms of RMSE usually disagrees with that in
terms of the rejection rate. For instance, in the case of $n=1000$ for the
Heckman estimator and the case of $n=4000$ for the AS estimator, the optimal
smoothing parameter in terms of RMSE would use 30\% uncensored observations
in the estimation, but the corresponding simulated rejection rates are more
than twice the real level. In these two cases, a more sensible choice would
be to use 20\% uncensored observations, sacrificing a little RMSE but
leading to a rejection rate very close to 0.05. In practice, however, the
RMSE and rejection rate are not known; therefore, we are never aware of
whether we have made a good choice.

\begin{sidewaystable}[ptb]
\caption{Simulation results when $\varepsilon _{i}\sim N\left(0,1\right) $}
\medskip
\label{table:normal}
\resizebox{\linewidth}{!}{
\begin{tabular}{ccccC{1.2cm}C{1.5cm}cccC{1.2cm}C{1.5cm}cccC{1.2cm}C{1.5cm}}
\hline\hline
$\Pr \left( Y_{i}=0\right) =0.5$ &  & \multicolumn{4}{c}{$n=250$} &  & 
\multicolumn{4}{c}{$n=1000$} &  & \multicolumn{4}{c}{$n=4000$} \\ 
\cline{1-1}\cline{3-6}\cline{8-11}\cline{13-16}
Estimator &  & Bias & SD & RMSE ratio & Rejection rate &  & Bias & SD & RMSE
ratio & Rejection rate &  & Bias & SD & RMSE ratio & Rejection rate \\ \hline
\multicolumn{16}{l}{\cite{heckman1979sample}'s parametric two-step estimator}
\\ 
&  & -0.004 & 0.186 & 1 & 0.052 &  & -0.002 & 0.093 & 1 & 0.048 &  & 0.000 & 
0.044 & 1 & 0.043 \\ 
\multicolumn{16}{l}{\cite{heckman1990varieties}'s semiparametric estimator
with various smoothing parameters} \\ 
$1\%$ observations &  & -0.007 & 1.025 & 5.519 & 0.390 &  & 0.028 & 0.599 & 
6.444 & 0.128 &  & 0.006 & 0.301 & 6.812 & 0.062 \\ 
$5\%$ observations &  & -0.003 & 0.549 & 2.955 & 0.125 &  & 0.001 & 0.286 & 
3.074 & 0.076 &  & -0.006 & 0.140 & 3.173 & 0.056 \\ 
$10\%$ observations &  & -0.016 & 0.389 & 2.093 & 0.088 &  & -0.021 & 0.195
& 2.108 & 0.064 &  & -0.016 & 0.096 & 2.209 & 0.062 \\ 
$20\%$ observations &  & -0.046 & 0.275 & 1.503 & 0.070 &  & -0.049 & 0.136
& 1.558 & 0.059 &  & -0.044 & 0.067 & 1.801 & 0.088 \\ 
$30\%$ observations &  & -0.077 & 0.228 & 1.294 & 0.082 &  & -0.085 & 0.111
& 1.499 & 0.114 &  & -0.079 & 0.055 & 2.185 & 0.278 \\ 
$50\%$ observations &  & -0.168 & 0.175 & 1.306 & 0.178 &  & -0.164 & 0.084
& 1.982 & 0.518 &  & -0.164 & 0.041 & 3.821 & 0.983 \\ 
\multicolumn{16}{l}{\cite{andrews1998semiparametric}'s semiparametric
estimator with various smoothing parameters} \\ 
$1\%$ observations &  & -0.055 & 1.302 & 7.013 & 0.616 &  & 0.037 & 0.697 & 
7.498 & 0.144 &  & 0.010 & 0.341 & 7.710 & 0.065 \\ 
$5\%$ observations &  & -0.007 & 0.689 & 3.711 & 0.165 &  & 0.007 & 0.338 & 
3.632 & 0.077 &  & -0.005 & 0.165 & 3.737 & 0.048 \\ 
$10\%$ observations &  & -0.012 & 0.480 & 2.585 & 0.106 &  & -0.007 & 0.241
& 2.589 & 0.074 &  & -0.008 & 0.116 & 2.628 & 0.050 \\ 
$20\%$ observations &  & -0.027 & 0.331 & 1.787 & 0.076 &  & -0.029 & 0.163
& 1.775 & 0.064 &  & -0.024 & 0.079 & 1.857 & 0.051 \\ 
$30\%$ observations &  & -0.045 & 0.263 & 1.438 & 0.076 &  & -0.052 & 0.130
& 1.508 & 0.070 &  & -0.046 & 0.063 & 1.769 & 0.101 \\ 
$50\%$ observations &  & -0.106 & 0.198 & 1.211 & 0.111 &  & -0.110 & 0.096
& 1.569 & 0.214 &  & -0.106 & 0.047 & 2.629 & 0.596 \\ 
\multicolumn{16}{l}{Kernel regression (local constant) estimator with
various kernel functions} \\ 
Gaussian &  & -0.031 & 0.302 & 1.635 & 0.058 &  & -0.027 & 0.165 & 1.797 & 
0.072 &  & -0.016 & 0.090 & 2.073 & 0.058 \\ 
Epanechnikov &  & -0.009 & 0.550 & 2.959 & 0.038 &  & 0.004 & 0.312 & 3.357
& 0.052 &  & -0.005 & 0.171 & 3.874 & 0.042 \\ 
7th polynomial &  & -0.017 & 0.653 & 3.514 & 0.064 &  & 0.012 & 0.357 & 3.843
& 0.052 &  & -0.001 & 0.192 & 4.333 & 0.044 \\ 
7th polyweight &  & -0.012 & 0.605 & 3.259 & 0.046 &  & 0.009 & 0.344 & 3.696
& 0.049 &  & -0.004 & 0.187 & 4.222 & 0.039 \\ 
\multicolumn{16}{l}{Local linear estimator with various kernel functions} \\ 
Gaussian &  & 0.032 & 0.284 & 1.539 & 0.092 &  & 0.029 & 0.148 & 1.616 & 
0.091 &  & 0.032 & 0.077 & 1.894 & 0.102 \\ 
Epanechnikov &  & 0.019 & 0.424 & 2.284 & 0.057 &  & 0.014 & 0.235 & 2.535 & 
0.053 &  & 0.014 & 0.125 & 2.843 & 0.045 \\ 
7th polynomial &  & -0.001 & 0.553 & 2.979 & 0.089 &  & 0.015 & 0.312 & 3.353
& 0.076 &  & 0.006 & 0.167 & 3.791 & 0.053 \\ 
7th polyweight &  & 0.005 & 0.491 & 2.643 & 0.065 &  & 0.013 & 0.278 & 2.989
& 0.059 &  & 0.008 & 0.151 & 3.424 & 0.063 \\ \hline\hline
\end{tabular}
}
\end{sidewaystable}

\begin{sidewaystable}[ptb]
\caption{Simulation results when $\varepsilon _{i}\sim t\left(3\right) $}
\medskip
\label{table:t}
\resizebox{\linewidth}{!}{
\begin{tabular}{ccccC{1.2cm}C{1.5cm}cccC{1.2cm}C{1.5cm}cccC{1.2cm}C{1.5cm}}
\hline\hline
$\Pr \left( Y_{i}=0\right) =0.5$ &  & \multicolumn{4}{c}{$n=250$} &  & 
\multicolumn{4}{c}{$n=1000$} &  & \multicolumn{4}{c}{$n=4000$} \\ 
\cline{1-1}\cline{3-6}\cline{8-11}\cline{13-16}
Estimator &  & Bias & SD & RMSE ratio & Rejection rate &  & Bias & SD & RMSE
ratio & Rejection rate &  & Bias & SD & RMSE ratio & Rejection rate \\ \hline
\multicolumn{16}{l}{\cite{heckman1979sample}'s parametric two-step estimator}
\\ 
&  & 0.014 & 0.181 & 1 & 0.059 &  & 0.027 & 0.089 & 1 & 0.058 &  & 0.025 & 
0.047 & 1 & 0.114 \\ 
\multicolumn{16}{l}{\cite{heckman1990varieties}'s semiparametric estimator
with various smoothing parameters} \\ 
$1\%$ observations &  & -0.052 & 0.981 & 5.403 & 0.386 &  & 0.003 & 0.592 & 
6.363 & 0.155 &  & -0.001 & 0.296 & 5.594 & 0.069 \\ 
$5\%$ observations &  & -0.050 & 0.533 & 2.945 & 0.121 &  & -0.030 & 0.271 & 
2.934 & 0.070 &  & -0.029 & 0.135 & 2.603 & 0.056 \\ 
$10\%$ observations &  & -0.049 & 0.367 & 2.035 & 0.073 &  & -0.033 & 0.186
& 2.029 & 0.056 &  & -0.039 & 0.097 & 1.975 & 0.081 \\ 
$20\%$ observations &  & -0.061 & 0.260 & 1.468 & 0.063 &  & -0.052 & 0.133
& 1.532 & 0.065 &  & -0.056 & 0.067 & 1.654 & 0.137 \\ 
$30\%$ observations &  & -0.076 & 0.210 & 1.226 & 0.073 &  & -0.068 & 0.105
& 1.351 & 0.092 &  & -0.072 & 0.054 & 1.701 & 0.274 \\ 
$50\%$ observations &  & -0.120 & 0.163 & 1.115 & 0.113 &  & -0.109 & 0.080
& 1.452 & 0.257 &  & -0.111 & 0.042 & 2.238 & 0.770 \\ 
\multicolumn{16}{l}{\cite{andrews1998semiparametric}'s semiparametric
estimator with various smoothing parameters} \\ 
$1\%$ observations &  & -0.089 & 1.285 & 7.083 & 0.627 &  & 0.006 & 0.711 & 
7.646 & 0.197 &  & 0.005 & 0.346 & 6.541 & 0.072 \\ 
$5\%$ observations &  & -0.059 & 0.659 & 3.641 & 0.137 &  & -0.018 & 0.325 & 
3.503 & 0.069 &  & -0.022 & 0.157 & 2.999 & 0.060 \\ 
$10\%$ observations &  & -0.054 & 0.476 & 2.633 & 0.089 &  & -0.028 & 0.227
& 2.460 & 0.060 &  & -0.031 & 0.113 & 2.209 & 0.068 \\ 
$20\%$ observations &  & -0.051 & 0.311 & 1.733 & 0.069 &  & -0.039 & 0.156
& 1.727 & 0.058 &  & -0.045 & 0.080 & 1.727 & 0.087 \\ 
$30\%$ observations &  & -0.061 & 0.246 & 1.395 & 0.063 &  & -0.052 & 0.123
& 1.440 & 0.062 &  & -0.057 & 0.064 & 1.616 & 0.161 \\ 
$50\%$ observations &  & -0.089 & 0.184 & 1.123 & 0.075 &  & -0.080 & 0.093
& 1.319 & 0.142 &  & -0.083 & 0.048 & 1.813 & 0.438 \\ 
\multicolumn{16}{l}{Kernel regression (local constant) estimator with
various kernel functions} \\ 
Gaussian &  & -0.061 & 0.278 & 1.564 & 0.051 &  & -0.041 & 0.153 & 1.705 & 
0.066 &  & -0.042 & 0.086 & 1.798 & 0.090 \\ 
Epanechnikov &  & -0.053 & 0.547 & 3.021 & 0.041 &  & -0.021 & 0.303 & 3.264
& 0.048 &  & -0.020 & 0.163 & 3.099 & 0.058 \\ 
7th polynomial &  & -0.058 & 0.642 & 3.545 & 0.061 &  & -0.013 & 0.355 & 
3.824 & 0.059 &  & -0.014 & 0.190 & 3.596 & 0.055 \\ 
7th polyweight &  & -0.055 & 0.605 & 3.341 & 0.037 &  & -0.016 & 0.335 & 
3.610 & 0.053 &  & -0.017 & 0.181 & 3.427 & 0.054 \\ 
\multicolumn{16}{l}{Local linear estimator with various kernel functions} \\ 
Gaussian &  & -0.020 & 0.270 & 1.488 & 0.072 &  & -0.009 & 0.147 & 1.583 & 
0.069 &  & -0.018 & 0.082 & 1.587 & 0.067 \\ 
Epanechnikov &  & -0.034 & 0.399 & 2.200 & 0.046 &  & -0.016 & 0.222 & 2.397
& 0.046 &  & -0.022 & 0.125 & 2.403 & 0.060 \\ 
7th polynomial &  & -0.044 & 0.541 & 2.987 & 0.076 &  & -0.013 & 0.301 & 
3.238 & 0.071 &  & -0.015 & 0.164 & 3.112 & 0.070 \\ 
7th polyweight &  & -0.039 & 0.474 & 2.616 & 0.050 &  & -0.014 & 0.265 & 
2.850 & 0.056 &  & -0.019 & 0.147 & 2.803 & 0.068 \\ \hline\hline
\end{tabular}
}
\end{sidewaystable}

\begin{sidewaystable}[ptb]
\caption{Simulation results when $\varepsilon _{i}\sim \chi^2\left(3\right) $}
\medskip
\label{table:chi2}
\resizebox{\linewidth}{!}{
\begin{tabular}{ccccC{1.2cm}C{1.5cm}cccC{1.2cm}C{1.5cm}cccC{1.2cm}C{1.5cm}}
\hline\hline
$\Pr \left( Y_{i}=0\right) =0.5$ &  & \multicolumn{4}{c}{$n=250$} &  & 
\multicolumn{4}{c}{$n=1000$} &  & \multicolumn{4}{c}{$n=4000$} \\ 
\cline{1-1}\cline{3-6}\cline{8-11}\cline{13-16}
Estimator &  & Bias & SD & RMSE ratio & Rejection rate &  & Bias & SD & RMSE
ratio & Rejection rate &  & Bias & SD & RMSE ratio & Rejection rate \\ \hline
\multicolumn{16}{l}{\cite{heckman1979sample}'s parametric two-step estimator}
\\ 
&  & -0.146 & 0.167 & 1 & 0.180 &  & -0.145 & 0.082 & 1 & 0.459 &  & -0.146
& 0.041 & 1 & 0.913 \\ 
\multicolumn{16}{l}{\cite{heckman1990varieties}'s semiparametric estimator
with various smoothing parameters} \\ 
$1\%$ observations &  & -0.008 & 0.977 & 4.405 & 0.413 &  & 0.004 & 0.578 & 
3.465 & 0.134 &  & -0.024 & 0.299 & 1.978 & 0.076 \\ 
$5\%$ observations &  & -0.016 & 0.514 & 2.321 & 0.125 &  & -0.047 & 0.258 & 
1.569 & 0.067 &  & -0.052 & 0.128 & 0.913 & 0.073 \\ 
$10\%$ observations &  & -0.057 & 0.351 & 1.603 & 0.080 &  & -0.084 & 0.183
& 1.206 & 0.087 &  & -0.083 & 0.088 & 0.797 & 0.140 \\ 
$20\%$ observations &  & -0.117 & 0.243 & 1.215 & 0.099 &  & -0.129 & 0.124
& 1.072 & 0.189 &  & -0.126 & 0.059 & 0.917 & 0.530 \\ 
$30\%$ observations &  & -0.162 & 0.200 & 1.160 & 0.152 &  & -0.165 & 0.098
& 1.148 & 0.395 &  & -0.163 & 0.048 & 1.124 & 0.911 \\ 
$50\%$ observations &  & -0.229 & 0.149 & 1.232 & 0.354 &  & -0.230 & 0.073
& 1.445 & 0.869 &  & -0.229 & 0.037 & 1.535 & 1.000 \\ 
\multicolumn{16}{l}{\cite{andrews1998semiparametric}'s semiparametric
estimator with various smoothing parameters} \\ 
$1\%$ observations &  & -0.009 & 1.219 & 5.497 & 0.605 &  & 0.021 & 0.694 & 
4.162 & 0.180 &  & -0.009 & 0.350 & 2.315 & 0.078 \\ 
$5\%$ observations &  & -0.009 & 0.624 & 2.816 & 0.153 &  & -0.026 & 0.315 & 
1.893 & 0.067 &  & -0.040 & 0.156 & 1.061 & 0.068 \\ 
$10\%$ observations &  & -0.037 & 0.440 & 1.990 & 0.102 &  & -0.057 & 0.217
& 1.343 & 0.066 &  & -0.063 & 0.108 & 0.822 & 0.097 \\ 
$20\%$ observations &  & -0.084 & 0.291 & 1.365 & 0.081 &  & -0.099 & 0.148
& 1.068 & 0.113 &  & -0.097 & 0.072 & 0.799 & 0.251 \\ 
$30\%$ observations &  & -0.119 & 0.231 & 1.173 & 0.106 &  & -0.131 & 0.116
& 1.048 & 0.210 &  & -0.127 & 0.056 & 0.919 & 0.609 \\ 
$50\%$ observations &  & -0.182 & 0.171 & 1.124 & 0.206 &  & -0.185 & 0.084
& 1.217 & 0.582 &  & -0.185 & 0.042 & 1.250 & 0.991 \\ 
\multicolumn{16}{l}{Kernel regression (local constant) estimator with
various kernel functions} \\ 
Gaussian &  & -0.087 & 0.278 & 1.316 & 0.093 &  & -0.092 & 0.157 & 1.091 & 
0.131 &  & -0.079 & 0.088 & 0.783 & 0.199 \\ 
Epanechnikov &  & -0.019 & 0.526 & 2.372 & 0.051 &  & -0.028 & 0.302 & 1.817
& 0.067 &  & -0.038 & 0.167 & 1.128 & 0.068 \\ 
7th polynomial &  & -0.015 & 0.624 & 2.816 & 0.071 &  & -0.017 & 0.355 & 
2.131 & 0.072 &  & -0.030 & 0.196 & 1.308 & 0.062 \\ 
7th polyweight &  & -0.016 & 0.580 & 2.615 & 0.045 &  & -0.020 & 0.337 & 
2.024 & 0.069 &  & -0.033 & 0.186 & 1.245 & 0.066 \\ 
\multicolumn{16}{l}{Local linear estimator with various kernel functions} \\ 
Gaussian &  & -0.066 & 0.258 & 1.202 & 0.095 &  & -0.070 & 0.134 & 0.907 & 
0.122 &  & -0.065 & 0.069 & 0.627 & 0.183 \\ 
Epanechnikov &  & -0.027 & 0.387 & 1.751 & 0.056 &  & -0.044 & 0.222 & 1.355
& 0.066 &  & -0.039 & 0.122 & 0.846 & 0.071 \\ 
7th polynomial &  & -0.007 & 0.530 & 2.392 & 0.090 &  & -0.019 & 0.302 & 
1.813 & 0.072 &  & -0.026 & 0.167 & 1.116 & 0.071 \\ 
7th polyweight &  & -0.012 & 0.466 & 2.102 & 0.063 &  & -0.029 & 0.266 & 
1.605 & 0.059 &  & -0.031 & 0.148 & 0.997 & 0.072 \\ \hline\hline
\end{tabular}
}
\end{sidewaystable}

By contrast, the kernel regression and local linear estimators have no
difficulty in choosing the bandwidth parameter because fully data-driven
procedures of selecting the optimal bandwidths have been explicitly
proposed. Table \ref{table:normal} shows that the proposed bandwidths
perform satisfactorily in that the rejection rates for the kernel regression
and local linear estimators are all close to 0.05 for different sample sizes
and different kernel functions. In terms of RMSE, the local linear estimator
is superior to the kernel regression estimator, as expected. The RMSE of the
local linear estimator with Gaussian kernel is comparable to the optimal
RMSE of the Heckman and AS estimators. However, the Gaussian kernel may not
be the best choice because over-rejection for the t test appears to be a
problem. As an alternative, the local linear estimator with Epanechnikov
kernel achieves a satisfactory balance between RMSE and rejection rate. The
polynomial and polyweight kernels also do not lead to size distortion of the
t test but are clearly outperformed by the simple Epanechnikov kernel in
terms of RMSE.

Table \ref{table:t} investigates the finite sample behavior of the
estimators under $t\left( 3\right) $ distributed disturbance. The parametric
two-step estimator performs well for small sample sizes but deteriorates as
the sample size increases because its nonvanishing bias due to nonnormality
becomes large relative to its declining SD. By contrast, the performance of
the semiparametric estimators is robust to nonnormality, and the main
findings are almost the same as in the normal case. First, if the smoothing
parameter of the Heckman and AS estimators is improperly chosen, their RMSEs
may be increased by several factors and the t tests may yield misleading
inferences. Second, the smoothing parameter giving rise to the smallest RMSE
may lead to evident over-rejection for the t test. These facts indicate the
importance of selecting a proper smoothing parameter for the
identification-at-infinity estimators, but this task is difficult since the
RMSE and rejection rate are unobserved in practice. Third, the bandwidth
selection algorithms proposed for the kernel-type estimators perform well in
terms of rejection rate. Fourth, the local linear estimator dominates the
kernel regression estimator. Fifth, the RMSE of the local linear estimator
with Gaussian kernel is comparable to (when $n=4000$, even smaller than) the
smallest RMSE of the identification-at-infinity estimators. Sixth, using the
Epanechnikov kernel results in more accurate rejection rate.

Table \ref{table:chi2} considers the case of $\chi ^{2}\left( 3\right) $
distributed disturbance. In this case, the parametric two-step estimator has
notable bias, and the probability of making a type I error rapidly
approaches one. Comparison with Table \ref{table:t} indicates that the
skewness of the nonnormal disturbance exerts a worse influence on the
parametric estimator than does the fat tail. The Heckman and AS estimators
become less robust to the smoothing parameter under this design, in the
sense that the rejection rate is below 0.1 for only a narrow range of
smoothing parameters. On the other hand, the local linear estimator with
Epanechnikov kernel still has desirable finite sample properties in terms of
both RMSE and rejection rate.

Tables \ref{table:small}-\ref{table:large} investigate the effect of the
proportion of censoring. Table \ref{table:small} shows that, under mild
censoring, all the considered estimators behave better. The parametric
two-step estimator becomes less biased in nonnormal designs. The Heckman and
AS estimators are more robust to the smoothing parameter, and the rejection
rates for the kernel regression and local linear estimators are closer to
the true level, even when the Gaussian kernel is used. Table \ref%
{table:large} reveals the reverse side of the coin, with an unchanged
conclusion being the superiority of the local linear estimator with
Epanechnikov kernel.

\begin{sidewaystable}[ptb]
\caption{Simulation results when $n=1000$ and the amount of zero observations is small}
\medskip
\label{table:small}
\resizebox{\linewidth}{!}{
\begin{tabular}{ccccC{1.2cm}C{1.5cm}cccC{1.2cm}C{1.5cm}cccC{1.2cm}C{1.5cm}}
\hline\hline
$\Pr \left( Y_{i}=0\right) =0.2$ &  & \multicolumn{4}{c}{$\varepsilon
_{i}\sim N\left( 0,1\right) $} &  & \multicolumn{4}{c}{$\varepsilon _{i}\sim
t\left( 3\right) $} &  & \multicolumn{4}{c}{$\varepsilon _{i}\sim \chi
^{2}\left( 3\right) $} \\ \cline{1-1}\cline{3-6}\cline{8-11}\cline{13-16}
Estimator &  & Bias & SD & RMSE ratio & Rejection rate &  & Bias & SD & RMSE
ratio & Rejection rate &  & Bias & SD & RMSE ratio & Rejection rate \\ \hline
\multicolumn{16}{l}{\cite{heckman1979sample}'s parametric two-step estimator}
\\ 
&  & -0.001 & 0.065 & 1 & 0.055 &  & -0.015 & 0.068 & 1 & 0.070 &  & -0.071
& 0.066 & 1 & 0.239 \\ 
\multicolumn{16}{l}{\cite{heckman1990varieties}'s semiparametric estimator
with various smoothing parameters} \\ 
$1\%$ observations &  & -0.006 & 0.483 & 7.447 & 0.103 &  & -0.014 & 0.466 & 
6.719 & 0.099 &  & -0.004 & 0.477 & 4.920 & 0.102 \\ 
$5\%$ observations &  & 0.006 & 0.224 & 3.449 & 0.065 &  & -0.017 & 0.218 & 
3.145 & 0.061 &  & -0.021 & 0.206 & 2.134 & 0.057 \\ 
$10\%$ observations &  & 0.002 & 0.163 & 2.510 & 0.063 &  & -0.027 & 0.153 & 
2.231 & 0.049 &  & -0.036 & 0.145 & 1.545 & 0.054 \\ 
$20\%$ observations &  & -0.001 & 0.113 & 1.743 & 0.056 &  & -0.035 & 0.132
& 1.970 & 0.058 &  & -0.056 & 0.106 & 1.242 & 0.081 \\ 
$30\%$ observations &  & -0.012 & 0.095 & 1.474 & 0.057 &  & -0.042 & 0.102
& 1.589 & 0.082 &  & -0.075 & 0.087 & 1.183 & 0.153 \\ 
$50\%$ observations &  & -0.036 & 0.073 & 1.252 & 0.093 &  & -0.054 & 0.074
& 1.317 & 0.130 &  & -0.111 & 0.067 & 1.336 & 0.434 \\ 
\multicolumn{16}{l}{\cite{andrews1998semiparametric}'s semiparametric
estimator with various smoothing parameters} \\ 
$1\%$ observations &  & -0.006 & 0.573 & 8.841 & 0.126 &  & 0.002 & 0.536 & 
7.723 & 0.109 &  & -0.016 & 0.573 & 5.917 & 0.136 \\ 
$5\%$ observations &  & 0.002 & 0.264 & 4.077 & 0.061 &  & -0.012 & 0.258 & 
3.714 & 0.056 &  & -0.010 & 0.258 & 2.661 & 0.070 \\ 
$10\%$ observations &  & 0.006 & 0.190 & 2.924 & 0.061 &  & -0.021 & 0.182 & 
2.642 & 0.061 &  & -0.026 & 0.172 & 1.797 & 0.051 \\ 
$20\%$ observations &  & 0.000 & 0.132 & 2.037 & 0.060 &  & -0.030 & 0.130 & 
1.918 & 0.060 &  & -0.042 & 0.121 & 1.325 & 0.064 \\ 
$30\%$ observations &  & -0.005 & 0.107 & 1.651 & 0.054 &  & -0.036 & 0.116
& 1.755 & 0.074 &  & -0.057 & 0.100 & 1.185 & 0.095 \\ 
$50\%$ observations &  & -0.020 & 0.081 & 1.285 & 0.066 &  & -0.046 & 0.088
& 1.423 & 0.091 &  & -0.088 & 0.074 & 1.189 & 0.243 \\ 
\multicolumn{16}{l}{Kernel regression (local constant) estimator with
various kernel functions} \\ 
Gaussian &  & 0.002 & 0.160 & 2.466 & 0.051 &  & -0.025 & 0.155 & 2.259 & 
0.063 &  & -0.031 & 0.147 & 1.553 & 0.049 \\ 
Epanechnikov &  & -0.002 & 0.302 & 4.663 & 0.043 &  & -0.011 & 0.300 & 4.322
& 0.054 &  & -0.005 & 0.305 & 3.150 & 0.059 \\ 
7th polynomial &  & -0.004 & 0.345 & 5.321 & 0.056 &  & -0.007 & 0.342 & 
4.920 & 0.057 &  & -0.010 & 0.359 & 3.711 & 0.074 \\ 
7th polyweight &  & -0.004 & 0.330 & 5.094 & 0.043 &  & -0.012 & 0.329 & 
4.748 & 0.056 &  & -0.005 & 0.337 & 3.482 & 0.063 \\ 
\multicolumn{16}{l}{Local linear estimator with various kernel functions} \\ 
Gaussian &  & 0.017 & 0.143 & 2.222 & 0.070 &  & -0.024 & 0.147 & 2.151 & 
0.067 &  & -0.020 & 0.128 & 1.334 & 0.057 \\ 
Epanechnikov &  & 0.005 & 0.234 & 3.608 & 0.057 &  & -0.016 & 0.227 & 3.280
& 0.053 &  & -0.011 & 0.217 & 2.247 & 0.044 \\ 
7th polynomial &  & 0.003 & 0.301 & 4.635 & 0.056 &  & -0.008 & 0.299 & 4.310
& 0.063 &  & -0.005 & 0.304 & 3.135 & 0.077 \\ 
7th polyweight &  & 0.008 & 0.274 & 4.227 & 0.054 &  & -0.010 & 0.273 & 3.933
& 0.056 &  & -0.008 & 0.265 & 2.734 & 0.061 \\ \hline\hline
\end{tabular}
}
\end{sidewaystable}

\begin{sidewaystable}[ptb]
\caption{Simulation results when $n=1000$ and the amount of zero observations is large}
\medskip
\label{table:large}
\resizebox{\linewidth}{!}{
\begin{tabular}{ccccC{1.2cm}C{1.5cm}cccC{1.2cm}C{1.5cm}cccC{1.2cm}C{1.5cm}}
\hline\hline
$\Pr \left( Y_{i}=0\right) =0.8$ &  & \multicolumn{4}{c}{$\varepsilon
_{i}\sim N\left( 0,1\right) $} &  & \multicolumn{4}{c}{$\varepsilon _{i}\sim
t\left( 3\right) $} &  & \multicolumn{4}{c}{$\varepsilon _{i}\sim \chi
^{2}\left( 3\right) $} \\ \cline{1-1}\cline{3-6}\cline{8-11}\cline{13-16}
Estimator &  & Bias & SD & RMSE ratio & Rejection rate &  & Bias & SD & RMSE
ratio & Rejection rate &  & Bias & SD & RMSE ratio & Rejection rate \\ \hline
\multicolumn{16}{l}{\cite{heckman1979sample}'s parametric two-step estimator}
\\ 
&  & 0.001 & 0.168 & 1 & 0.045 &  & 0.195 & 0.231 & 1 & 0.193 &  & -0.234 & 
0.132 & 1 & 0.428 \\ 
\multicolumn{16}{l}{\cite{heckman1990varieties}'s semiparametric estimator
with various smoothing parameters} \\ 
$1\%$ observations &  & -0.040 & 0.889 & 5.285 & 0.316 &  & 0.030 & 0.879 & 
2.907 & 0.343 &  & -0.067 & 0.855 & 3.186 & 0.325 \\ 
$5\%$ observations &  & -0.035 & 0.403 & 2.404 & 0.086 &  & -0.044 & 0.415 & 
1.380 & 0.097 &  & -0.081 & 0.391 & 1.485 & 0.098 \\ 
$10\%$ observations &  & -0.066 & 0.289 & 1.762 & 0.080 &  & -0.056 & 0.292
& 0.984 & 0.064 &  & -0.133 & 0.277 & 1.141 & 0.114 \\ 
$20\%$ observations &  & -0.154 & 0.208 & 1.539 & 0.149 &  & -0.099 & 0.212
& 0.772 & 0.089 &  & -0.212 & 0.188 & 1.053 & 0.219 \\ 
$30\%$ observations &  & -0.237 & 0.166 & 1.716 & 0.324 &  & -0.143 & 0.171
& 0.737 & 0.145 &  & -0.269 & 0.148 & 1.140 & 0.445 \\ 
$50\%$ observations &  & -0.388 & 0.124 & 2.419 & 0.876 &  & -0.228 & 0.134
& 0.875 & 0.452 &  & -0.362 & 0.113 & 1.409 & 0.891 \\ 
\multicolumn{16}{l}{\cite{andrews1998semiparametric}'s semiparametric
estimator with various smoothing parameters} \\ 
$1\%$ observations &  & -0.021 & 1.085 & 6.449 & 0.428 &  & 0.024 & 1.011 & 
3.345 & 0.442 &  & -0.062 & 1.039 & 3.869 & 0.436 \\ 
$5\%$ observations &  & -0.022 & 0.497 & 2.955 & 0.105 &  & -0.032 & 0.487 & 
1.614 & 0.104 &  & -0.056 & 0.485 & 1.814 & 0.108 \\ 
$10\%$ observations &  & -0.039 & 0.347 & 2.072 & 0.075 &  & -0.048 & 0.353
& 1.178 & 0.077 &  & -0.094 & 0.335 & 1.294 & 0.094 \\ 
$20\%$ observations &  & -0.093 & 0.243 & 1.545 & 0.080 &  & -0.070 & 0.251
& 0.860 & 0.071 &  & -0.159 & 0.230 & 1.040 & 0.136 \\ 
$30\%$ observations &  & -0.155 & 0.195 & 1.480 & 0.150 &  & -0.098 & 0.203
& 0.744 & 0.088 &  & -0.209 & 0.181 & 1.027 & 0.233 \\ 
$50\%$ observations &  & -0.282 & 0.144 & 1.878 & 0.526 &  & -0.162 & 0.152
& 0.736 & 0.199 &  & -0.291 & 0.131 & 1.187 & 0.610 \\ 
\multicolumn{16}{l}{Kernel regression (local constant) estimator with
various kernel functions} \\ 
Gaussian &  & -0.142 & 0.216 & 1.535 & 0.207 &  & -0.110 & 0.193 & 0.734 & 
0.156 &  & -0.201 & 0.198 & 1.050 & 0.349 \\ 
Epanechnikov &  & -0.052 & 0.306 & 1.843 & 0.062 &  & -0.055 & 0.296 & 0.994
& 0.066 &  & -0.115 & 0.294 & 1.174 & 0.098 \\ 
7th polynomial &  & -0.041 & 0.355 & 2.122 & 0.069 &  & -0.043 & 0.343 & 
1.142 & 0.062 &  & -0.094 & 0.359 & 1.380 & 0.094 \\ 
7th polyweight &  & -0.042 & 0.335 & 2.005 & 0.062 &  & -0.050 & 0.325 & 
1.087 & 0.060 &  & -0.098 & 0.329 & 1.276 & 0.089 \\ 
\multicolumn{16}{l}{Local linear estimator with various kernel functions} \\ 
Gaussian &  & -0.101 & 0.188 & 1.264 & 0.224 &  & -0.012 & 0.182 & 0.602 & 
0.139 &  & -0.169 & 0.181 & 0.920 & 0.328 \\ 
Epanechnikov &  & -0.035 & 0.244 & 1.467 & 0.087 &  & -0.023 & 0.230 & 0.764
& 0.068 &  & -0.134 & 0.217 & 0.947 & 0.136 \\ 
7th polynomial &  & -0.005 & 0.317 & 1.885 & 0.082 &  & -0.022 & 0.306 & 
1.015 & 0.078 &  & -0.087 & 0.306 & 1.182 & 0.104 \\ 
7th polyweight &  & -0.011 & 0.284 & 1.686 & 0.081 &  & -0.023 & 0.272 & 
0.904 & 0.067 &  & -0.106 & 0.263 & 1.054 & 0.112 \\ \hline\hline
\end{tabular}
}
\end{sidewaystable}

\section{Conclusion}

\label{sec:conclution}This paper rephrases the identification at infinity
into an identification at the boundary via a CDF transformation and
accordingly proposes a kernel approach to semiparametrically estimate the
intercept of the sample selection model. The proposed kernel regression
estimator with generic transformation is a generalization of the
identification-at-infinity estimators and thus inherits the disadvantage
that the asymptotic bias and variance are implicit functions of the
bandwidth parameter. To select a bandwidth that minimizes the asymptotic
mean squared error, I use a specific transformation, namely, the empirical
CDF of the selection index, under which the asymptotic bias and variance
become explicit with respect to the bandwidth. A plug-in bandwidth selection
algorithm with regularization is therefore suggested. For the purpose of
bias reduction, I further propose a local linear estimator and an associated
analogous bandwidth selection algorithm. A simulation study illustrates the
effectiveness of the selected bandwidths. Comparison of the finite sample
performance of the estimators indicates that the local linear estimator with
Epanechnikov kernel is superior to the parametric two-step estimator under
nonnormal disturbance and to the identification-at-infinity estimators in
most cases.

\renewcommand{\thetheorem}{A.\arabic{theorem}} \setcounter{theorem}{0}

\section{Appendix}

\subsection{Proofs of the Example, Lemma, and Corollary}

\begin{proof}[Proof of the Example]
By Assumption 2, there exists a $\delta \in \left( 0,1\right) $
and a positive $C_{\delta }$ such that $k\left( u\right) \geq C_{\delta }$
for any $u\in \left[ 0,\delta \right] $. Therefore,%
\begin{eqnarray*}
Ek_{ni}^{2} &=&\int_{0}^{1}k^{2}\left( \frac{1-t}{h_{n}}\right) f_{F\left(
W\right) }\left( t\right) dt=h_{n}\int_{0}^{1\left/ h_{n}\right.
}k^{2}\left( u\right) f_{F\left( W\right) }\left( 1-uh_{n}\right) du \\
&=&h_{n}\int_{0}^{1}k^{2}\left( u\right) f_{F\left( W\right) }\left(
1-uh_{n}\right) du\geq C_{\delta }^{2}h_{n}\int_{0}^{\delta }f_{F\left(
W\right) }\left( 1-uh_{n}\right) du \\
&=&C_{\delta }^{2}\int_{1-\delta h_{n}}^{1}f_{F\left( W\right) }\left(
s\right) ds=C_{\delta }^{2}\Pr \left( F\left( W_{i}\right) >1-\delta
h_{n}\right) .
\end{eqnarray*}%
Since $F\left( \cdot \right) $ is the standard Laplacian CDF, the ratio in
Assumption 5.(iii) is bounded by%
\begin{equation*}
\frac{\left[ \Pr \left( F\left( W_{i}\right) >1-h_{n}\right) \right] ^{1+c}}{%
Ek_{ni}^{2}}\leq \frac{\left[ \Pr \left( F\left( W_{i}\right)
>1-h_{n}\right) \right] ^{1+c}}{C_{\delta }^{2}\Pr \left( F\left(
W_{i}\right) >1-\delta h_{n}\right) }=\frac{\left[ \Pr \left( W_{i}>-\log
\left( 2h_{n}\right) \right) \right] ^{1+c}}{C_{\delta }^{2}\Pr \left(
W_{i}>-\log \left( 2\delta h_{n}\right) \right) }.
\end{equation*}%
Denote $t_{n}=-\log \left( 2h_{n}\right) \rightarrow +\infty $.

(i) If the distribution of $W_{i}$ has a power-type upper tail such that $%
\Pr \left( W_{i}>t\right) \sim t^{-\lambda }$ for some $\lambda >0$, we have%
\begin{equation*}
\frac{\left[ \Pr \left( W_{i}>-\log \left( 2h_{n}\right) \right) \right]
^{1+c}}{C_{\delta }^{2}\Pr \left( W_{i}>-\log \left( 2\delta h_{n}\right)
\right) }\sim \frac{\left( t_{n}-\log \delta \right) ^{\lambda }}{%
t_{n}^{\lambda \left( 1+c\right) }}\rightarrow 0
\end{equation*}%
for any $c>0$ and Assumption 5.(iii) holds.

(ii) If the distribution of $W_{i}$ has an exponential-type upper tail such
that $\Pr \left( W_{i}>t\right) \sim \exp \left( -c_{0}t^{\lambda }\right) $
for some $\lambda >0$ and $c_{0}>0$, we have%
\begin{eqnarray*}
\frac{\left[ \Pr \left( W_{i}>-\log \left( 2h_{n}\right) \right) \right]
^{1+c}}{C_{\delta }^{2}\Pr \left( W_{i}>-\log \left( 2\delta h_{n}\right)
\right) } &\sim &\frac{\exp \left\{ -\left( 1+c\right) c_{0}t_{n}^{\lambda
}\right\} }{\exp \left\{ -c_{0}\left( t_{n}-\log \delta \right) ^{\lambda
}\right\} } \\
&=&\exp \left\{ c_{0}t_{n}^{\lambda }\left[ \left( 1+\frac{-\log \delta }{%
t_{n}}\right) ^{\lambda }-\left( 1+c\right) \right] \right\} \\
&\rightarrow &0
\end{eqnarray*}%
for any $c>0$ and Assumption 5.(iii) holds.

(iii) If the upper tail of $W_{i}$'s distribution decays as rapidly as $\Pr
\left( W_{i}>t\right) \sim \exp \left( -\exp \left( c_{0}t^{\lambda }\right)
\right) $, however, Assumption 5.(iii) is guaranteed to hold only for $%
\lambda <1$, because in this case%
\begin{eqnarray*}
\frac{\left[ \Pr \left( W_{i}>-\log \left( 2h_{n}\right) \right) \right]
^{1+c}}{C_{\delta }^{2}\Pr \left( W_{i}>-\log \left( 2\delta h_{n}\right)
\right) } &\sim &\frac{\exp \left\{ -\left( 1+c\right) \exp \left(
c_{0}t_{n}^{\lambda }\right) \right\} }{\exp \left\{ -\exp \left[
c_{0}\left( t_{n}-\log \delta \right) ^{\lambda }\right] \right\} } \\
&=&\exp \left\{ \exp \left( c_{0}t_{n}^{\lambda }\right) \left[ \exp \left[
c_{0}t_{n}^{\lambda }\left( \left( 1+\frac{-\log \delta }{t_{n}}\right)
^{\lambda }-1\right) \right] -\left( 1+c\right) \right] \right\} \\
&\rightarrow &\left\{
\begin{array}{ll}
0 & \text{if }\left( \lambda <1\right) \text{ or }\left( \lambda =1\text{
and }c>\delta ^{-c_{0}}-1\right) \\
1 & \text{if }\lambda =1\text{ and }c=\delta ^{-c_{0}}-1 \\
+\infty & \text{if }\left( \lambda >1\right) \text{ or }\left( \lambda =1%
\text{ and }c<\delta ^{-c_{0}}-1\right)%
\end{array}%
\right.
\end{eqnarray*}
\end{proof}

\begin{proof}[Proof of the Lemma]
For the numerator of the bias term, we have%
\begin{eqnarray*}
EU_{i}D_{i}k_{ni} &=&E\left[ k_{ni}E\left[ U_{i}D_{i}\left| W_{i}\right. %
\right] \right] =E\left[ k_{ni}G\left( F_{W}\left( W_{i}\right) \right)
G_{1}\left( F_{W}\left( W_{i}\right) \right) \right]  \\
&=&\int_{0}^{1}k\left( \frac{1-t}{h_{n}}\right) G\left( t\right) G_{1}\left(
t\right) dt \\
&=&h_{n}\int_{0}^{1}k\left( s\right) G\left( 1-h_{n}s\right) G_{1}\left(
1-h_{n}s\right) ds \\
&=&h_{n}\int_{0}^{1}k\left( s\right) h_{n}s\left[ -g\left( 1\right) +o\left(
1\right) \right] ds \\
&=&\left[ -\kappa _{1}g\left( 1\right) +o\left( 1\right) \right] h_{n}^{2},
\end{eqnarray*}%
where $G_{1}\left( t\right) =E\left[ D_{i}\left| F_{W}\left( W_{i}\right)
=t\right. \right] =\Pr \left( F_{W}\left( \varepsilon _{i}\right) <t\right) $
and $G_{1}\left( 1\right) =1$. For the denominator, it follows from Lemma %
\ref{Lemma:E} and Equation (\ref{knir}) that%
\begin{equation*}
ED_{i}k_{ni}=Ek_{ni}\left( 1+o\left( 1\right) \right) =\kappa
_{0}h_{n}\left( 1+o\left( 1\right) \right) .
\end{equation*}%
Therefore,%
\begin{equation*}
\frac{EU_{i}D_{i}k_{ni}}{ED_{i}k_{ni}}=\frac{\left[ -\kappa _{1}g\left(
1\right) +o\left( 1\right) \right] h_{n}}{\kappa _{0}\left( 1+o\left(
1\right) \right) }=-\frac{\kappa _{1}g\left( 1\right) }{\kappa _{0}}%
h_{n}+o\left( h_{n}\right) .
\end{equation*}
\end{proof}

\begin{proof}[Proof of the Corollary]
It follows immediately from Theorem \ref{Thm:lite}, Equation (%
\ref{knir}), and the Lemma.
\end{proof}

\subsection{Proof of Theorem \protect\ref{Thm:lite}}

\medskip

Denote $\tilde{k}_{ni}=k\left( \left. \left( 1-F\left( \hat{W}_{i}\right)
\right) \right/ h_{n}\right) $ and $k_{ni}=k\left( \left. \left( 1-F\left(
W_{i}\right) \right) \right/ h_{n}\right) $. And define the infeasible
kernel estimator as%
\begin{equation*}
\mu _{n}=\frac{\sum_{i=1}^{n}\left( Y_{i}-Z_{i}^{\prime }\theta _{0}\right)
D_{i}k_{ni}}{\sum_{i=1}^{n}D_{i}k_{ni}}.
\end{equation*}%
The asymptotic normality of $\tilde{\mu}$ follows by first establishing%
\begin{equation}
\frac{\sqrt{n}Ek_{ni}}{\sqrt{Ek_{ni}^{2}}}\left( \mu _{n}-\mu _{0}-\frac{%
EU_{i}D_{i}k_{ni}}{ED_{i}k_{ni}}\right) \rightarrow N\left( 0,\sigma
_{U}^{2}\right)  \label{NormalofInfeasible}
\end{equation}%
and then proving the asymptotic negligibility of $\tilde{\mu}-\mu _{n}$.

\begin{itemize}
\item \textbf{First step: proving}%
\begin{equation*}
\frac{\sqrt{n}Ek_{ni}}{\sqrt{Ek_{ni}^{2}}}\left( \mu _{n}-\mu _{0}-\frac{%
EU_{i}D_{i}k_{ni}}{ED_{i}k_{ni}}\right) \rightarrow N\left( 0,\sigma
_{U}^{2}\right) ,
\end{equation*}%
where%
\begin{equation*}
\mu _{n}=\frac{\sum_{i=1}^{n}\left( Y_{i}-Z_{i}^{\prime }\theta _{0}\right)
D_{i}k_{ni}}{\sum_{i=1}^{n}D_{i}k_{ni}}=\mu _{0}+\frac{%
\sum_{i=1}^{n}U_{i}D_{i}k_{ni}}{\sum_{i=1}^{n}D_{i}k_{ni}}.
\end{equation*}
\end{itemize}

For the triangular array $\left\{ U_{i}D_{i}k_{ni}:i\leq n,n\geq 1\right\} $%
, since%
\begin{equation*}
Var\left( \sum_{i=1}^{n}U_{i}D_{i}k_{ni}\right) =nVar\left(
U_{i}D_{i}k_{ni}\right) \leq nE\left[ U_{i}^{2}D_{i}k_{ni}^{2}\right] \leq
n\sigma _{U}^{2}\bar{k}^{2}<\infty ,
\end{equation*}%
it follows from Lindeberg's central limit theorem 
\citep[e.g.,][Theorem
1.15]{shao2003} and Lemma \ref{Lemma:Lindeberg} that 
\begin{equation*}
\frac{\sum_{i=1}^{n}\left( U_{i}D_{i}k_{ni}-EU_{i}D_{i}k_{ni}\right) }{\sqrt{%
nVar\left( U_{i}D_{i}k_{ni}\right) }}\rightarrow N\left( 0,1\right) .
\end{equation*}%
Therefore, by Lemmas \ref{Lemma:E}, \ref{Lemma:var} and \ref{Lemma:sample}%
\begin{eqnarray*}
&&\frac{\sqrt{n}Ek_{ni}}{\sqrt{Ek_{ni}^{2}}}\left( \mu _{n}-\mu _{0}-\frac{%
EU_{i}D_{i}k_{ni}}{\left( \left. 1\right/ n\right) \sum_{i=1}^{n}D_{i}k_{ni}}%
\right) \\
&=&\sigma _{U}\left( \frac{Ek_{ni}}{\left( \left. 1\right/ n\right)
\sum_{i=1}^{n}D_{i}k_{ni}}\right) \sqrt{\frac{Var\left(
U_{i}D_{i}k_{ni}\right) }{\sigma _{U}^{2}Ek_{ni}^{2}}}\frac{%
\sum_{i=1}^{n}\left( U_{i}D_{i}k_{ni}-EU_{i}D_{i}k_{ni}\right) }{\sqrt{%
nVar\left( U_{i}D_{i}k_{ni}\right) }} \\
&=&\sigma _{U}\left( 1+o_{p}\left( 1\right) \right) \frac{%
\sum_{i=1}^{n}\left( U_{i}D_{i}k_{ni}-EU_{i}D_{i}k_{ni}\right) }{\sqrt{%
nVar\left( U_{i}D_{i}k_{ni}\right) }}\rightarrow N\left( 0,\sigma
_{U}^{2}\right) .
\end{eqnarray*}%
It remains to show that%
\begin{equation*}
\frac{\sqrt{n}Ek_{ni}}{\sqrt{Ek_{ni}^{2}}}\left( \frac{EU_{i}D_{i}k_{ni}}{%
\left( \left. 1\right/ n\right) \sum_{i=1}^{n}D_{i}k_{ni}}-\frac{%
EU_{i}D_{i}k_{ni}}{ED_{i}k_{ni}}\right) \overset{p}{\rightarrow }0
\end{equation*}%
for this step of proof. Note that $\sum_{i=1}^{n}\left(
D_{i}k_{ni}-ED_{i}k_{ni}\right) \left/ \sqrt{nEk_{ni}^{2}}\right. $ is
bounded in probability because it has mean zero and variance%
\begin{equation*}
\frac{Var\left( \sum_{i=1}^{n}D_{i}k_{ni}\right) }{nEk_{ni}^{2}}=\frac{%
Var\left( D_{i}k_{ni}\right) }{Ek_{ni}^{2}}\leq \frac{ED_{i}k_{ni}^{2}}{%
Ek_{ni}^{2}}\leq 1.
\end{equation*}%
As a result, we have%
\begin{eqnarray*}
&&\frac{\sqrt{n}Ek_{ni}}{\sqrt{Ek_{ni}^{2}}}\left( \frac{EU_{i}D_{i}k_{ni}}{%
\left( \left. 1\right/ n\right) \sum_{i=1}^{n}D_{i}k_{ni}}-\frac{%
EU_{i}D_{i}k_{ni}}{ED_{i}k_{ni}}\right) \\
&=&-\frac{Ek_{ni}}{\left( \left. 1\right/ n\right) \sum_{i=1}^{n}D_{i}k_{ni}}%
\cdot \frac{EU_{i}D_{i}k_{ni}}{ED_{i}k_{ni}}\cdot \frac{\sum_{i=1}^{n}\left(
D_{i}k_{ni}-ED_{i}k_{ni}\right) }{\sqrt{nEk_{ni}^{2}}} \\
&=&-\left( 1+o_{p}\left( 1\right) \right) \cdot o\left( 1\right) \cdot
O_{p}\left( 1\right) =o_{p}\left( 1\right) ,
\end{eqnarray*}%
where the second equality follows from Lemmas \ref{Lemma:E}, \ref{Lemma:bias}
and \ref{Lemma:sample}.

\begin{itemize}
\item \textbf{Second step: proving}%
\begin{equation*}
\frac{\sqrt{n}Ek_{ni}}{\sqrt{Ek_{ni}^{2}}}\left( \tilde{\mu}-\mu _{n}\right) 
\overset{p}{\rightarrow }0.
\end{equation*}
\end{itemize}

To this end, write the left-hand side as%
\begin{equation*}
C_{n}\left( \frac{\tilde{A}_{n}}{\tilde{B}_{n}}-\frac{A_{n}}{B_{n}}\right)
=C_{n}\frac{\tilde{A}_{n}-A_{n}}{B_{n}}\cdot \frac{B_{n}}{\tilde{B}_{n}}%
-C_{n}\frac{\tilde{B}_{n}-B_{n}}{B_{n}}\cdot \frac{A_{n}}{B_{n}}\cdot \frac{%
B_{n}}{\tilde{B}_{n}},
\end{equation*}%
where%
\begin{eqnarray*}
C_{n} &=&\sqrt{n}Ek_{ni}\left/ \sqrt{Ek_{ni}^{2}}\right. , \\
\tilde{A}_{n} &=&\sum_{i=1}^{n}\left( Y_{i}-Z_{i}^{\prime }\hat{\theta}%
\right) D_{i}\tilde{k}_{ni},\text{ }\tilde{B}_{n}=\sum_{i=1}^{n}D_{i}\tilde{k%
}_{ni}, \\
A_{n} &=&\sum_{i=1}^{n}\left( Y_{i}-Z_{i}^{\prime }\theta _{0}\right)
D_{i}k_{ni},\text{ }B_{n}=\sum_{i=1}^{n}D_{i}k_{ni}.
\end{eqnarray*}%
It is sufficient to show that%
\begin{equation}
\text{(a) }\frac{\tilde{B}_{n}}{B_{n}}\overset{p}{\rightarrow }1\text{, (b) }%
C_{n}\frac{\tilde{A}_{n}-A_{n}}{B_{n}}\overset{p}{\rightarrow }0\text{, (c) }%
\frac{A_{n}}{B_{n}}=O_{p}\left( 1\right) \text{, (d) }C_{n}\frac{\tilde{B}%
_{n}-B_{n}}{B_{n}}\overset{p}{\rightarrow }0.  \label{abcd}
\end{equation}%
Note that (d) implies (a) because $C_{n}\geq \sqrt{n}E\left[ k_{ni}\cdot
k_{ni}\left/ \bar{k}\right. \right] \left/ \sqrt{Ek_{ni}^{2}}\right. =\left. 
\sqrt{nEk_{ni}^{2}}\right/ \bar{k}\rightarrow \infty $ by Assumption 5.(ii).
Next I will prove (b), (c) and (d), respectively.

For (b), the left-hand side can be further decomposed as%
\begin{equation*}
C_{n}\frac{\tilde{A}_{n}-A_{n}}{B_{n}}=\left( \frac{Ek_{ni}}{\left( \left.
1\right/ n\right) \sum_{i}D_{i}k_{ni}}\right) \left[ \underset{\text{(I)}}{%
\underbrace{\frac{\sum_{i}\left( \mu _{0}+U_{i}\right) D_{i}\left( \tilde{k}%
_{ni}-k_{ni}\right) }{\sqrt{nEk_{ni}^{2}}}}}-\underset{\text{(II)}}{%
\underbrace{\frac{\left( \hat{\theta}-\theta _{0}\right) ^{\prime
}\sum_{i}Z_{i}D_{i}\tilde{k}_{ni}}{\sqrt{nEk_{ni}^{2}}}}}\right] .
\end{equation*}%
It follows from Lemmas \ref{Lemma:E} and \ref{Lemma:sample} that $%
Ek_{ni}\left/ \left[ \left( \left. 1\right/ n\right) \sum_{i}D_{i}k_{ni}%
\right] \right. \overset{p}{\rightarrow }1$. For the first term in the
bracket, a Taylor expansion about $\beta _{0}$ yields%
\begin{eqnarray*}
\left| \text{(I)}\right| &\leq &\left| \sum_{i}\frac{\left( \mu
_{0}+U_{i}\right) D_{i}}{\sqrt{nEk_{ni}^{2}}}k^{\prime }\left( \frac{%
1-F\left( W_{i}\right) }{h_{n}}\right) \frac{f\left( W_{i}\right) }{h_{n}}%
\left( \hat{W}_{i}-W_{i}\right) \right| \\
&&+\left| \frac{1}{2}\sum_{i}\frac{\left( \mu _{0}+U_{i}\right) D_{i}}{\sqrt{%
nEk_{ni}^{2}}}k^{\prime \prime }\left( \frac{1-F\left( W_{i}^{\ast }\right) 
}{h_{n}}\right) \left[ \frac{f\left( W_{i}^{\ast }\right) }{h_{n}}\right]
^{2}\left( \hat{W}_{i}-W_{i}\right) ^{2}\right| \\
&&+\left| \frac{1}{2}\sum_{i}\frac{\left( \mu _{0}+U_{i}\right) D_{i}}{\sqrt{%
nEk_{ni}^{2}}}k^{\prime }\left( \frac{1-F\left( W_{i}^{\ast }\right) }{h_{n}}%
\right) \frac{f^{\prime }\left( W_{i}^{\ast }\right) }{h_{n}}\left( \hat{W}%
_{i}-W_{i}\right) ^{2}\right| \\
&\leq &\frac{\sqrt{n}\left\| \hat{\beta}-\beta _{0}\right\| }{\sqrt{%
Ek_{ni}^{2}}}\frac{1}{n}\sum_{i}\left( \left| \mu _{0}\right| +\left|
U_{i}\right| \right) \left| k^{\prime }\left( \frac{1-F\left( W_{i}\right) }{%
h_{n}}\right) \right| \frac{f\left( W_{i}\right) }{h_{n}}\left\|
X_{i}\right\| \\
&&+\frac{\left( \sqrt{n}\left\| \hat{\beta}-\beta _{0}\right\| \right) ^{2}}{%
2\sqrt{nEk_{ni}^{2}}}\frac{1}{n}\sum_{i}\left( \left| \mu _{0}\right|
+\left| U_{i}\right| \right) \left| k^{\prime \prime }\left( \frac{1-F\left(
W_{i}^{\ast }\right) }{h_{n}}\right) \right| \left[ \frac{f\left(
W_{i}^{\ast }\right) }{h_{n}}\right] ^{2}\left\| X_{i}\right\| ^{2} \\
&&+\frac{\left( \sqrt{n}\left\| \hat{\beta}-\beta _{0}\right\| \right) ^{2}}{%
2\sqrt{nEk_{ni}^{2}}}\frac{1}{n}\sum_{i}\left( \left| \mu _{0}\right|
+\left| U_{i}\right| \right) \left| k^{\prime }\left( \frac{1-F\left(
W_{i}^{\ast }\right) }{h_{n}}\right) \right| \frac{\left| f^{\prime }\left(
W_{i}^{\ast }\right) \right| }{h_{n}}\left\| X_{i}\right\| ^{2},
\end{eqnarray*}%
where $W_{i}^{\ast }=X_{i}^{\prime }\beta ^{\ast }$ with $\beta ^{\ast }$\
lying on the line segment joining $\hat{\beta}$ and $\beta _{0}$. Note that $%
\left| k^{\prime }\left( u\right) \right| \leq \bar{k^{\prime }}1\left\{
0\leq u\leq 1\right\} $ and $\left| k^{\prime \prime }\left( u\right)
\right| \leq \bar{k^{\prime \prime }}1\left\{ 0\leq u\leq 1\right\} $ by
Assumption 2. For large enough $n$ such that $h_{n}\leq 1-F\left( C\right) $%
, we have%
\begin{eqnarray*}
\left| \text{(I)}\right| &\leq &\frac{\bar{k^{\prime }}\sqrt{n}\left\| \hat{%
\beta}-\beta _{0}\right\| }{\sqrt{Ek_{ni}^{2}}}\frac{1}{n}\sum_{i}\left(
\left| \mu _{0}\right| +\left| U_{i}\right| \right) 1\left\{ F\left(
W_{i}\right) >1-h_{n}\right\} H_{C}\left( W_{i}\right) \left\| X_{i}\right\|
\\
&&+\frac{\left( \sqrt{n}\left\| \hat{\beta}-\beta _{0}\right\| \right) ^{2}}{%
2\sqrt{nEk_{ni}^{2}}}\frac{1}{n}\sum_{i}\left( \left| \mu _{0}\right|
+\left| U_{i}\right| \right) \left[ \bar{k^{\prime \prime }}H_{C}^{2}\left(
W_{i}^{\ast }\right) +\bar{k^{\prime }}\left| S_{C}\left( W_{i}^{\ast
}\right) \right| H_{C}\left( W_{i}^{\ast }\right) \right] \left\|
X_{i}\right\| ^{2},
\end{eqnarray*}%
where $H_{C}\left( \cdot \right) $ and $S_{C}\left( \cdot \right) $ are
defined in Assumption 3.(iii). Note that $\left. \left( \left. 1\right/
n\right) \sum_{i=1}^{n}\left| G_{ni}\right| \right/ E\left| G_{ni}\right|
=O_{p}\left( 1\right) $ by Markov's inequality for any i.i.d. random
variables $\left\{ G_{ni}:i\leq n\right\} $ with $0<E\left| G_{ni}\right|
<\infty $. Thus, it follows from Assumption 4 that, with probability tending
to one,%
\begin{eqnarray*}
\left| \text{(I)}\right| &\leq &\frac{O_{p}\left( 1\right) }{\sqrt{%
Ek_{ni}^{2}}}\left( \left| \mu _{0}\right| +E\left| U_{i}\right| \right) E%
\left[ 1\left\{ F\left( W_{i}\right) >1-h_{n}\right\} H_{C}\left(
W_{i}\right) \left\| X_{i}\right\| \right] \\
&&+\frac{O_{p}\left( 1\right) }{\sqrt{nEk_{ni}^{2}}}\left( \left| \mu
_{0}\right| +E\left| U_{i}\right| \right) E\left[ \sup_{\beta \in \mathcal{N}%
\left( \beta _{0}\right) }\left[ H_{C}^{2}\left( X_{i}\beta \right) +\left|
S_{C}\left( X_{i}\beta \right) \right| H_{C}\left( X_{i}\beta \right) \right]
\left\| X_{i}\right\| ^{2}\right] \\
&\leq &O_{p}\left( 1\right) \sqrt{\frac{\left[ \Pr \left( F\left(
W_{i}\right) >1-h_{n}\right) \right] ^{1+c_{1}\left/ \left( 8+2c_{1}\right)
\right. }}{Ek_{ni}^{2}}}\left[ EH_{C}^{4}\left( W_{i}\right) \right]
^{1\left/ 4\right. }\left( E\left\| X_{i}\right\| ^{4+c_{1}}\right)
^{1\left/ \left( 4+c_{1}\right) \right. } \\
&&+\frac{O_{p}\left( 1\right) }{\sqrt{nEk_{ni}^{2}}}\left( E\sup_{\beta \in 
\mathcal{N}\left( \beta _{0}\right) }H_{C}^{4}\left( X_{i}\beta \right)
\right) ^{1\left/ 2\right. }\left( E\left\| X_{i}\right\| ^{4}\right)
^{1\left/ 2\right. } \\
&&+\frac{O_{p}\left( 1\right) }{\sqrt{nEk_{ni}^{2}}}\left( E\sup_{\beta \in 
\mathcal{N}\left( \beta _{0}\right) }S_{C}^{4}\left( X_{i}\beta \right)
\right) ^{1\left/ 4\right. }\left( E\sup_{\beta \in \mathcal{N}\left( \beta
_{0}\right) }H_{C}^{4}\left( X_{i}\beta \right) \right) ^{1\left/ 4\right.
}\left( E\left\| X_{i}\right\| ^{4}\right) ^{1\left/ 2\right. } \\
&=&o_{p}\left( 1\right) ,
\end{eqnarray*}%
where the second inequality follows from H\"{o}lder's inequality and the
last equality follows from Assumptions 1.(iv), 3.(iii) and 5.

For the second term of (b), similarly, we have%
\begin{eqnarray*}
\left| \text{(II)}\right| &\leq &\frac{\sqrt{n}\left\| \hat{\theta}-\theta
_{0}\right\| }{\sqrt{Ek_{ni}^{2}}}\frac{1}{n}\sum_{i=1}^{n}\left\|
Z_{i}\right\| \left( k_{ni}+\left| \tilde{k}_{ni}-k_{ni}\right| \right) \\
&\leq &\frac{O_{p}\left( 1\right) }{\sqrt{Ek_{ni}^{2}}}\frac{1}{n}%
\sum_{i=1}^{n}\left\| Z_{i}\right\| \left[ \bar{k}1\left\{ F\left(
W_{i}\right) >1-h_{n}\right\} +\bar{k^{\prime }}H_{C}\left( W_{i}^{\ast
}\right) \left| \hat{W}_{i}-W_{i}\right| \right] \\
&\leq &\frac{O_{p}\left( 1\right) }{\sqrt{Ek_{ni}^{2}}}E\left[ \left\|
Z_{i}\right\| 1\left\{ F\left( W_{i}\right) >1-h_{n}\right\} \right] +\frac{%
O_{p}\left( 1\right) }{\sqrt{nEk_{ni}^{2}}}E\left[ \left\| Z_{i}\right\|
\left\| X_{i}\right\| \sup_{\beta \in \mathcal{N}\left( \beta _{0}\right)
}H_{C}\left( X_{i}\beta \right) \right] \\
&\leq &O_{p}\left( 1\right) \sqrt{\frac{\left[ \Pr \left( F\left(
W_{i}\right) >1-h_{n}\right) \right] ^{1+c_{1}\left/ \left( 2+c_{1}\right)
\right. }}{Ek_{ni}^{2}}}\left( E\left\| Z_{i}\right\| ^{2+c_{1}}\right)
^{1\left/ \left( 2+c_{1}\right) \right. } \\
&&+\frac{O_{p}\left( 1\right) }{\sqrt{nEk_{ni}^{2}}}\left( E\left\|
Z_{i}\right\| ^{2}\right) ^{1\left/ 2\right. }\left( E\left\| X_{i}\right\|
^{4}\right) ^{1\left/ 4\right. }\left( E\sup_{\beta \in \mathcal{N}\left(
\beta _{0}\right) }H_{C}^{4}\left( X_{i}\beta \right) \right) ^{1\left/
4\right. } \\
&=&o_{p}\left( 1\right) ,
\end{eqnarray*}%
which completes the proof of (b).

For (c) in (\ref{abcd}), we know by (\ref{NormalofInfeasible}) that%
\begin{equation*}
\frac{A_{n}}{B_{n}}=\mu _{n}=\mu _{0}+\frac{EU_{i}D_{i}k_{ni}}{ED_{i}k_{ni}}%
+O_{p}\left( \frac{1}{C_{n}}\right) .
\end{equation*}%
It follows from Lemma \ref{Lemma:bias} and $C_{n}\rightarrow \infty $ that $%
\left. A_{n}\right/ B_{n}=\mu _{0}+o\left( 1\right) +o_{p}\left( 1\right)
=O_{p}\left( 1\right) $. The proof of (d) in (\ref{abcd}) follows exactly
the same line as the analysis for the first term (I) of (b), by replacing $%
\mu _{0}+U_{i}$ with $1$. Consequently, we have $C_{n}\left( \tilde{\mu}-\mu
_{n}\right) \overset{p}{\rightarrow }0$, which in combination of (\ref%
{NormalofInfeasible}) leads to the conclusion.

\subsection{Proof of Theorem \protect\ref{Thm:FLC}}

\medskip

Denote $\hat{k}_{ni}=k\left( \left. \left( 1-\hat{F}_{n}\left( \hat{W}%
_{i}\right) \right) \right/ h_{n}\right) $ and $k_{ni}=k\left( \left. \left(
1-F_{W}\left( W_{i}\right) \right) \right/ h_{n}\right) $. By (\ref%
{NormalofInfeasible}), Equation (\ref{knir}), the Lemma, and Assumption
5'.(ii), we have%
\begin{equation}
\sqrt{nh_{n}}\left( \mu _{n}-\mu _{0}+\frac{\kappa _{1}g\left( 1\right) }{%
\kappa _{0}}h_{n}\right) \rightarrow N\left( 0,\frac{\chi _{0}\sigma _{U}^{2}%
}{\kappa _{0}^{2}}\right) ,  \label{NormalofInfeasible2}
\end{equation}%
where%
\begin{equation*}
\mu _{n}=\frac{\sum_{i=1}^{n}\left( Y_{i}-Z_{i}^{\prime }\theta _{0}\right)
D_{i}k_{ni}}{\sum_{i=1}^{n}D_{i}k_{ni}}.
\end{equation*}%
It remains to show that $\sqrt{nh_{n}}\left( \hat{\mu}-\mu _{n}\right) 
\overset{p}{\rightarrow }0$.

Denote%
\begin{eqnarray*}
\hat{A}_{n} &=&\sum_{i=1}^{n}\left( Y_{i}-Z_{i}^{\prime }\hat{\theta}\right)
D_{i}\hat{k}_{ni},\text{ }\hat{B}_{n}=\sum_{i=1}^{n}D_{i}\hat{k}_{ni}, \\
A_{n} &=&\sum_{i=1}^{n}\left( Y_{i}-Z_{i}^{\prime }\theta _{0}\right)
D_{i}k_{ni},\text{ }B_{n}=\sum_{i=1}^{n}D_{i}k_{ni},
\end{eqnarray*}%
then%
\begin{eqnarray*}
\sqrt{nh_{n}}\left( \hat{\mu}-\mu _{n}\right) &=&\sqrt{nh_{n}}\left( \frac{%
\hat{A}_{n}}{\hat{B}_{n}}-\frac{A_{n}}{B_{n}}\right) \\
&=&\sqrt{nh_{n}}\left( \frac{\hat{A}_{n}-A_{n}}{B_{n}}-\frac{\hat{B}%
_{n}-B_{n}}{B_{n}}\cdot \frac{A_{n}}{B_{n}}\right) \frac{B_{n}}{\hat{B}_{n}}
\\
&=&\sqrt{nh_{n}}\left( \frac{\hat{A}_{n}-A_{n}}{B_{n}}-\frac{\hat{B}%
_{n}-B_{n}}{B_{n}}\mu _{0}-o_{p}\left( \frac{\hat{B}_{n}-B_{n}}{B_{n}}%
\right) \right) \frac{B_{n}}{\hat{B}_{n}},
\end{eqnarray*}%
where the last equality follows from the consistency of $\left. A_{n}\right/
B_{n}=\mu _{n}$ for $\mu _{0}$ implied by (\ref{NormalofInfeasible2}). It is
sufficient to show that%
\begin{equation*}
\text{(a) }\frac{\hat{B}_{n}}{B_{n}}\overset{p}{\rightarrow }1\text{, (b) }%
\sqrt{nh_{n}}\left( \frac{\hat{A}_{n}-A_{n}}{B_{n}}-\frac{\hat{B}_{n}-B_{n}}{%
B_{n}}\mu _{0}\right) \overset{p}{\rightarrow }0\text{, (c) }\sqrt{nh_{n}}%
\left( \frac{\hat{B}_{n}-B_{n}}{B_{n}}\right) =O_{p}\left( 1\right) \text{.}
\end{equation*}%
Note that (c) implies (a). For (b) and (c), the left-hand sides can be
written as%
\begin{eqnarray}
&&\sqrt{nh_{n}}\left( \frac{\hat{A}_{n}-A_{n}}{B_{n}}-\frac{\hat{B}_{n}-B_{n}%
}{B_{n}}\mu _{0}\right)  \notag \\
&=&\left( \frac{1}{\kappa _{0}}+o_{p}\left( 1\right) \right) \left[ \underset%
{\text{(b1)}}{\underbrace{\frac{1}{\sqrt{nh_{n}}}\sum_{i=1}^{n}U_{i}D_{i}%
\left( \hat{k}_{ni}-k_{ni}\right) }}-\underset{\text{(b2)}}{\underbrace{%
\frac{\left( \hat{\theta}-\theta _{0}\right) ^{\prime }}{\sqrt{nh_{n}}}%
\sum_{i=1}^{n}Z_{i}D_{i}\hat{k}_{ni}}}\right]  \label{term_b}
\end{eqnarray}%
and%
\begin{equation}
\sqrt{nh_{n}}\left( \frac{\hat{B}_{n}-B_{n}}{B_{n}}\right) =\left( \frac{1}{%
\kappa _{0}}+o_{p}\left( 1\right) \right) \left[ \underset{\text{(c1)}}{%
\underbrace{\frac{1}{\sqrt{nh_{n}}}\sum_{i=1}^{n}D_{i}\left( \hat{k}%
_{ni}-k_{ni}\right) }}\right] ,  \label{term_c}
\end{equation}%
because $B_{n}=nh_{n}\left( \kappa _{0}+o_{p}\left( 1\right) \right) $ by
Lemma \ref{Lemma:Lambda}.

First consider $\left( 1\left/ \sqrt{nh_{n}}\right. \right) \sum_{i}\pi
_{i}D_{i}\left( \hat{k}_{ni}-k_{ni}\right) $, where $\pi _{i}$ is either $%
U_{i}$ or $1$, corresponding to (b1) and (c1), respectively. A sixth-order
Taylor expansion yields%
\begin{eqnarray*}
\hat{k}_{ni}-k_{ni} &=&\sum_{r=1}^{5}\frac{1}{r!h_{n}^{r}}k^{\left( r\right)
}\left( \frac{1-F_{W}\left( W_{i}\right) }{h_{n}}\right) \left[ \left( 1-%
\hat{F}_{n}\left( \hat{W}_{i}\right) \right) -\left( 1-F_{W}\left(
W_{i}\right) \right) \right] ^{r} \\
&&+\frac{1}{6!h_{n}^{6}}k^{\left( 6\right) }\left( \frac{1-F_{i}^{\ast }}{%
h_{n}}\right) \left[ \left( 1-\hat{F}_{n}\left( \hat{W}_{i}\right) \right)
-\left( 1-F_{W}\left( W_{i}\right) \right) \right] ^{6},
\end{eqnarray*}%
where $F_{i}^{\ast }$ lies between $\hat{F}_{n}\left( \hat{W}_{i}\right) $
and $F_{W}\left( W_{i}\right) $. Denote $\alpha _{ij}\left( \beta \right)
=1\left\{ X_{ij}^{\prime }\beta <0\right\} -1\left\{ X_{ij}^{\prime }\beta
_{0}<0\right\} $, where $X_{ij}=X_{i}-X_{j}$, then%
\begin{equation}
\left( 1-\hat{F}_{n}\left( \hat{W}_{i}\right) \right) -\left( 1-F_{W}\left(
W_{i}\right) \right) =\frac{1}{n-1}\sum_{j\neq i}\left\{ \left[ 1\left\{
W_{j}>W_{i}\right\} -\left( 1-F_{W}\left( W_{i}\right) \right) \right]
+\alpha _{ij}\left( \hat{\beta}\right) \right\} .  \label{survival_decom}
\end{equation}%
It follows from the Taylor expansion and the above equality that%
\begin{equation}
\frac{1}{\sqrt{nh_{n}}}\sum_{i=1}^{n}\pi _{i}D_{i}\left( \hat{k}%
_{ni}-k_{ni}\right) =\text{(I)}+\text{(II)}+\text{(III)}+\text{(IV)},
\label{total}
\end{equation}%
where 
\begin{eqnarray*}
\text{(I)} &=&\sqrt{\frac{n}{h_{n}^{3}}}\frac{1}{n\left( n-1\right) }%
\sum_{i}\sum_{j\neq i}\pi _{i}D_{i}k^{\prime }\left( \frac{1-F_{W}\left(
W_{i}\right) }{h_{n}}\right) \left[ 1\left\{ W_{j}>W_{i}\right\} -\left(
1-F_{W}\left( W_{i}\right) \right) \right] \\
\text{(II)} &=&\frac{1}{\left( n-1\right) \sqrt{nh_{n}^{3}}}%
\sum_{i}\sum_{j\neq i}\pi _{i}D_{i}k^{\prime }\left( \frac{1-F_{W}\left(
W_{i}\right) }{h_{n}}\right) \alpha _{ij}\left( \hat{\beta}\right) \\
\text{(III)} &=&\sum_{r=2}^{5}\sum_{s=0}^{r}\frac{1}{s!\left( r-s\right) !}%
\frac{1}{\sqrt{nh_{n}^{2r+1}}}\sum_{i=1}^{n}\pi _{i}D_{i}k^{\left( r\right)
}\left( \frac{1-F_{W}\left( W_{i}\right) }{h_{n}}\right) \\
&&\cdot \left\{ \frac{1}{n-1}\sum_{j\neq i}\left[ 1\left\{
W_{j}>W_{i}\right\} -\left( 1-F_{W}\left( W_{i}\right) \right) \right]
\right\} ^{r-s}\left\{ \frac{1}{n-1}\sum_{j\neq i}\alpha _{ij}\left( \hat{%
\beta}\right) \right\} ^{s} \\
\text{(IV)} &=&\frac{1}{6!\sqrt{nh_{n}^{13}}}\sum_{i=1}^{n}\pi
_{i}D_{i}k^{\left( 6\right) }\left( \frac{1-F_{i}^{\ast }}{h_{n}}\right) %
\left[ \left( 1-\hat{F}_{n}\left( \hat{W}_{i}\right) \right) -\left(
1-F_{W}\left( W_{i}\right) \right) \right] ^{6}.
\end{eqnarray*}%
I will consider these four terms, respectively.

Denote $\xi _{i}$ as the $i$-th observation and%
\begin{equation*}
m_{n}\left( \xi _{i},\xi _{j}\right) =h_{n}^{-3/2}\pi _{i}D_{i}k^{\prime
}\left( \frac{1-F_{W}\left( W_{i}\right) }{h_{n}}\right) \left[ 1\left\{
W_{j}>W_{i}\right\} -\left( 1-F_{W}\left( W_{i}\right) \right) \right] ,
\end{equation*}%
then%
\begin{equation*}
\text{(I)}=\frac{\sqrt{n}}{n\left( n-1\right) }\sum_{i}\sum_{j\neq
i}m_{n}\left( \xi _{i},\xi _{j}\right) .
\end{equation*}%
Note that $E\left[ \left. m_{n}\left( \xi _{i},\xi _{j}\right) \right| \xi
_{i}\right] =0$ and $E\left[ m_{n}\left( \xi _{i},\xi _{j}\right) \right] =0$%
. Since%
\begin{equation*}
E\left[ m_{n}^{2}\left( \xi _{i},\xi _{j}\right) \right] \leq
h_{n}^{-3}\left( \bar{k}^{\prime }\right) ^{2}E\pi _{i}^{2}\Pr \left(
F_{W}\left( W_{i}\right) >1-h_{n}\right) =O\left( h_{n}^{-2}\right) =o\left(
n\right)
\end{equation*}%
by Assumption 5', it follows from the projection method for U-statistics %
\citep[e.g.,][Lemma 3.1]{powell1989semiparametric} that%
\begin{equation*}
\text{(I)}=\frac{1}{\sqrt{n}}\sum_{j=1}^{n}E\left[ \left. m_{n}\left( \xi
_{i},\xi _{j}\right) \right| \xi _{j}\right] +o_{p}\left( 1\right) .
\end{equation*}%
Denote $G_{\pi }\left( t\right) =E\left[ \pi _{i}D_{i}\left| F_{W}\left(
W_{i}\right) =t\right. \right] =E\left[ \pi _{i}1\left\{ F_{W}\left(
\varepsilon _{i}\right) <t\right\} \right] $ with $\pi $ being either $U$ or 
$1$, then%
\begin{eqnarray*}
E\left[ \left. m_{n}\left( \xi _{i},\xi _{j}\right) \right| \xi _{j}\right]
&=&h_{n}^{-3/2}E\left[ \left. G_{\pi }\left( F_{W}\left( W_{i}\right)
\right) k^{\prime }\left( \frac{1-F_{W}\left( W_{i}\right) }{h_{n}}\right) %
\left[ 1\left\{ W_{j}>W_{i}\right\} -\left( 1-F_{W}\left( W_{i}\right)
\right) \right] \right| \xi _{j}\right] \\
&=&h_{n}^{-3/2}\int_{0}^{1}G_{\pi }\left( t\right) k^{\prime }\left( \frac{%
1-t}{h_{n}}\right) \left[ 1\left\{ t<F_{W}\left( W_{j}\right) \right\}
-\left( 1-t\right) \right] dt \\
&=&h_{n}^{-1/2}\int_{0}^{1/h_{n}}G_{\pi }\left( 1-h_{n}s\right) k^{\prime
}\left( s\right) \left[ 1\left\{ 1-h_{n}s<F_{W}\left( W_{j}\right) \right\}
-h_{n}s\right] ds \\
&=&h_{n}^{-1/2}\int_{0}^{1}\left[ G_{\pi }\left( 1-h_{n}s\right) -G_{\pi
}\left( 1\right) \right] k^{\prime }\left( s\right) \left[ 1\left\{
1-F_{W}\left( W_{j}\right) <h_{n}s\right\} -h_{n}s\right] ds \\
&&-h_{n}^{-1/2}G_{\pi }\left( 1\right) \left( k_{nj}-h_{n}\kappa _{0}\right)
,
\end{eqnarray*}%
therefore%
\begin{equation}
\text{(I)}=-\frac{G_{\pi }\left( 1\right) }{\sqrt{nh_{n}}}%
\sum_{i=1}^{n}\left( k_{ni}-h_{n}\kappa _{0}\right) +R_{1n}+o_{p}\left(
1\right) ,  \label{(I)}
\end{equation}%
where%
\begin{eqnarray*}
R_{1n} &=&\int_{0}^{1}\left[ G_{\pi }\left( 1-h_{n}s\right) -G_{\pi }\left(
1\right) \right] k^{\prime }\left( s\right) \frac{1}{\sqrt{nh_{n}}}%
\sum_{i=1}^{n}\left[ 1\left\{ 1-F_{W}\left( W_{i}\right) <h_{n}s\right\}
-h_{n}s\right] ds, \\
\left| R_{1n}\right| &\leq &\bar{k}^{\prime }\int_{0}^{1}\left| G_{\pi
}\left( 1-h_{n}s\right) -G_{\pi }\left( 1\right) \right| ds\cdot \sup_{s\in %
\left[ 0,1\right] }\left| \frac{1}{\sqrt{nh_{n}}}\sum_{i=1}^{n}\left[
1\left\{ 1-F_{W}\left( W_{i}\right) <h_{n}s\right\} -h_{n}s\right] \right| .
\end{eqnarray*}%
Since $\left| G_{\pi }\left( 1-h_{n}s\right) -G_{\pi }\left( 1\right)
\right| \leq E\left| \pi _{i}\right| =O\left( 1\right) $ and $%
\lim_{n\rightarrow \infty }\left| G_{\pi }\left( 1-h_{n}s\right) -G_{\pi
}\left( 1\right) \right| =0$ for any $s\in \left[ 0,1\right] $, it follows
from the dominated convergence theorem 
\citep[e.g.,][Theorem
1.1.(ii)]{shao2003} that $\lim_{n\rightarrow \infty }\int_{0}^{1}\left|
G_{\pi }\left( 1-h_{n}s\right) -G_{\pi }\left( 1\right) \right| ds=0$. For
the supremum term, since $1-F_{W}\left( W_{i}\right) $ follows a uniform
distribution on the unit interval, it follows from 
\citet[][Lemma
2.3]{stute1982oscillation} that%
\begin{equation*}
\sup_{u\in \left[ 0,1\right] }\left| \frac{1}{\sqrt{nh_{n}}}\sum_{i=1}^{n}%
\left[ 1\left\{ 1-F_{W}\left( W_{i}\right) <h_{n}s\right\} -h_{n}s\right]
\right| =O_{p}\left( 1\right) ,
\end{equation*}%
thus the remainder term satisfies%
\begin{equation}
\left| R_{1n}\right| =o\left( 1\right) \cdot O_{p}\left( 1\right)
=o_{p}\left( 1\right) .  \label{R1n}
\end{equation}%
For the first term of (\ref{(I)}), since $Ek_{ni}=h_{n}\kappa _{0}$ and $%
Var\left( k_{ni}\right) =Ek_{ni}^{2}-\left( Ek_{ni}\right) ^{2}=h_{n}\chi
_{0}-\left( h_{n}\kappa _{0}\right) ^{2}$, we have%
\begin{equation}
E\left[ \frac{1}{\sqrt{nh_{n}}}\sum_{i=1}^{n}\left( k_{ni}-h_{n}\kappa
_{0}\right) \right] ^{2}=\frac{Var\left( k_{ni}\right) }{h_{n}}=\chi
_{0}-h_{n}\kappa _{0}^{2}=O\left( 1\right) .  \label{Var_kni}
\end{equation}%
Substituting (\ref{R1n}) and (\ref{Var_kni}) into (\ref{(I)}) gives%
\begin{equation}
\text{(I)}=O_{p}\left( 1\right) \cdot G_{\pi }\left( 1\right) +o_{p}\left(
1\right) =\left\{ 
\begin{array}{cc}
o_{p}\left( 1\right) & \text{for }\pi _{i}=U_{i} \\ 
O_{p}\left( 1\right) & \text{for }\pi _{i}=1,%
\end{array}%
\right.  \label{part1}
\end{equation}%
by noting that $G_{\pi }\left( 1\right) =E\pi _{i}$.

Next consider (II) and (III). For (III), we have 
\begin{eqnarray*}
\left| \text{(III)}\right| &\leq &\sum_{r=2}^{5}\sum_{s=0}^{r}\frac{1}{%
s!\left( r-s\right) !}\frac{\overline{k^{\left( r\right) }}}{\sqrt{%
nh_{n}^{2r+1}}}\left( \frac{n}{n-1}\right) ^{r-s} \\
&&\cdot \left( \sup_{0<w<h_{n}}\left| \frac{1}{n}\sum_{j=1}^{n}\left[
1\left\{ 1-F_{W}\left( W_{j}\right) <w\right\} -w\right] \right| \right)
^{r-s} \\
&&\cdot \sum_{i=1}^{n}\left| \pi _{i}\right| 1\left\{ F_{W}\left(
W_{i}\right) >1-h_{n}\right\} \left( \frac{1}{n-1}\sum_{j\neq i}\left|
\alpha _{ij}\left( \hat{\beta}\right) \right| \right) ^{s}.
\end{eqnarray*}%
It follows from \citet[][Lemma 2.3]{stute1982oscillation} that%
\begin{equation*}
\sup_{0<w<h_{n}}\left| \frac{1}{n}\sum_{j=1}^{n}\left[ 1\left\{
1-F_{W}\left( W_{j}\right) <w\right\} -w\right] \right| =O_{p}\left( \sqrt{%
\frac{h_{n}}{n}}\right) ,
\end{equation*}%
so that%
\begin{equation*}
\left| \text{(III)}\right| \leq \sum_{r=2}^{5}\sum_{s=0}^{r}O_{p}\left(
\left( nh_{n}\right) ^{-\left. \left( r+s+1\right) \right/ 2}\right)
\sum_{i=1}^{n}\left| \pi _{i}\right| 1\left\{ F_{W}\left( W_{i}\right)
>1-h_{n}\right\} \left( \sum_{j\neq i}\left| \alpha _{ij}\left( \hat{\beta}%
\right) \right| \right) ^{s}.
\end{equation*}%
For the terms of $s=0$, since $\left. \left( \left. 1\right/ n\right)
\sum_{i=1}^{n}\left| G_{ni}\right| \right/ E\left| G_{ni}\right|
=O_{p}\left( 1\right) $ for any i.i.d. random variables $\left\{
G_{ni}:i\leq n\right\} $ with $0<E\left| G_{ni}\right| <\infty $,%
\begin{eqnarray*}
&&O_{p}\left( \left( nh_{n}\right) ^{-\left. \left( r+1\right) \right/
2}\right) \sum_{i=1}^{n}\left| \pi _{i}\right| 1\left\{ F_{W}\left(
W_{i}\right) >1-h_{n}\right\} \\
&=&O_{p}\left( \left( nh_{n}\right) ^{-\left. \left( r+1\right) \right/
2}\right) \cdot O_{p}\left( n\right) E\left| \pi _{i}\right| \Pr \left(
F_{W}\left( W_{i}\right) >1-h_{n}\right) \\
&=&O_{p}\left( \left( nh_{n}\right) ^{-\left. \left( r-1\right) \right/
2}\right) =o_{p}\left( 1\right) .
\end{eqnarray*}%
For the terms of $s\geq 1$, since $\left| \alpha _{ij}\left( \hat{\beta}%
\right) \right| ^{s}=\left| \alpha _{ij}\left( \hat{\beta}\right) \right| $,%
\begin{eqnarray*}
&&\left( \sum_{j\neq i}\left| \alpha _{ij}\left( \hat{\beta}\right) \right|
\right) ^{s}=\sum_{j_{1}\neq i}\left| \alpha _{ij_{1}}\left( \hat{\beta}%
\right) \right| \left[ \left| \alpha _{ij_{1}}\left( \hat{\beta}\right)
\right| +\sum_{j\neq i,j\neq j_{1}}\left| \alpha _{ij}\left( \hat{\beta}%
\right) \right| \right] ^{s-1} \\
&=&\sum_{j_{1}\neq i}\left| \alpha _{ij_{1}}\left( \hat{\beta}\right)
\right| \sum_{t_{1}=0}^{s-1}\binom{s-1}{t_{1}}\left| \alpha _{ij_{1}}\left( 
\hat{\beta}\right) \right| ^{s-1-t_{1}}\left[ \sum_{j\neq i,j_{1}}\left|
\alpha _{ij}\left( \hat{\beta}\right) \right| \right] ^{t_{1}} \\
&=&\sum_{j_{1}\neq i}\left| \alpha _{ij_{1}}\left( \hat{\beta}\right)
\right| +\sum_{t_{1}=1}^{s-1}\binom{s-1}{t_{1}}\sum_{j_{1}\neq i}\left|
\alpha _{ij_{1}}\left( \hat{\beta}\right) \right| \sum_{j_{2}\neq
i,j_{1}}\left| \alpha _{ij_{2}}\left( \hat{\beta}\right) \right| \left[
\left| \alpha _{ij_{2}}\left( \hat{\beta}\right) \right| +\sum_{j\neq
i,j_{1},j_{2}}\left| \alpha _{ij}\left( \hat{\beta}\right) \right| \right]
^{t_{1}-1} \\
&=&\sum_{j_{1}\neq i}\left| \alpha _{ij_{1}}\left( \hat{\beta}\right)
\right| +\sum_{t_{1}=1}^{s-1}\binom{s-1}{t_{1}}\sum_{j_{1}\neq i}\left|
\alpha _{ij_{1}}\left( \hat{\beta}\right) \right| \sum_{j_{2}\neq
i,j_{1}}\left| \alpha _{ij_{2}}\left( \hat{\beta}\right) \right| \\
&&+\sum_{t_{1}=1}^{s-1}\sum_{t_{2}=1}^{t_{1}-1}\binom{s-1}{t_{1}}\binom{%
t_{1}-1}{t_{2}}\sum_{j_{1}\neq i}\left| \alpha _{ij_{1}}\left( \hat{\beta}%
\right) \right| \sum_{j_{2}\neq i,j_{1}}\left| \alpha _{ij_{2}}\left( \hat{%
\beta}\right) \right| \left[ \sum_{j\neq i,j_{1},j_{2}}\left| \alpha
_{ij}\left( \hat{\beta}\right) \right| \right] ^{t_{2}} \\
&=&\cdots =\sum_{t=1}^{s}C_{t}\prod_{t^{\prime }=1}^{t}\sum_{j_{t^{\prime
}}\neq i,j_{1},\cdots ,j_{t^{\prime }-1}}\left| \alpha _{ij_{t^{\prime
}}}\left( \hat{\beta}\right) \right| ,
\end{eqnarray*}%
where $C_{t}$ is a constant depending only on $t$. Therefore,%
\begin{equation*}
\left| \text{(III)}\right| \leq
\sum_{r=2}^{5}\sum_{s=1}^{r}\sum_{t=1}^{s}O_{p}\left( \left( nh_{n}\right)
^{-\frac{r+s+1}{2}}\right) \sum_{i=1}^{n}\left| \pi _{i}\right| 1\left\{
F_{W}\left( W_{i}\right) >1-h_{n}\right\} \prod_{t^{\prime
}=1}^{t}\sum_{j_{t^{\prime }}\neq i,j_{1},\cdots ,j_{t^{\prime }-1}}\left|
\alpha _{ij_{t^{\prime }}}\left( \hat{\beta}\right) \right| +o_{p}\left(
1\right) .
\end{equation*}%
Denote%
\begin{equation}
J_{rsti}\left( \beta \right) =\left( nh_{n}\right) ^{-\left. \left(
r+s+1\right) \right/ 2}\left| \pi _{i}\right| 1\left\{ F_{W}\left(
W_{i}\right) >1-h_{n}\right\} \prod_{t^{\prime }=1}^{t}\sum_{j_{t^{\prime
}}\neq i,j_{1},\cdots ,j_{t^{\prime }-1}}\left| \alpha _{ij_{t^{\prime
}}}\left( \beta \right) \right|  \label{Jrsti}
\end{equation}%
for any $1\leq t\leq s\leq r\leq 5$ and $i=1,\cdots ,n$, then%
\begin{equation*}
\left| \text{(II)}\right| +\left| \text{(III)}\right| \leq O_{p}\left(
1\right)
\sum_{r=1}^{5}\sum_{s=1}^{r}\sum_{t=1}^{s}\sum_{i=1}^{n}J_{rsti}\left( \hat{%
\beta}\right) +o_{p}\left( 1\right) .
\end{equation*}

Following the method of \cite{schafgans2002intercept}, I will prove that for
any $1\leq t\leq s\leq r\leq 5$, we have $\sum_{i}J_{rsti}\left( \hat{\beta}%
\right) =o_{p}\left( 1\right) $, namely $\Pr \left( \sum_{i}J_{rsti}\left( 
\hat{\beta}\right) >\delta \right) \rightarrow 0$ for any $\delta >0$. Let $%
M_{n}$ be a slowly divergent sequence, then%
\begin{eqnarray*}
&&\Pr \left( \sum_{i=1}^{n}J_{rsti}\left( \hat{\beta}\right) >\delta \right)
\\
&\leq &\Pr \left( \sum_{i=1}^{n}J_{rsti}\left( \hat{\beta}\right) >\delta
,\left\Vert \hat{\beta}-\beta _{0}\right\Vert \leq \frac{M_{n}}{\sqrt{n}}%
\right) +\Pr \left( \left\Vert \hat{\beta}-\beta _{0}\right\Vert >\frac{M_{n}%
}{\sqrt{n}}\right) \\
&\leq &\sup_{\left\Vert \beta -\beta _{0}\right\Vert \leq M_{n}\left/ \sqrt{n%
}\right. }\Pr \left( \sum_{i=1}^{n}J_{rsti}\left( \beta \right) >\delta
\right) +o\left( 1\right) \\
&\leq &\frac{n}{\delta }\left( nh_{n}\right) ^{-\frac{r+s+1}{2}}E\left\vert
\pi _{i}\right\vert \sup_{\left\Vert \beta -\beta _{0}\right\Vert \leq \frac{%
M_{n}}{\sqrt{n}}}E\left[ 1\left\{ F_{W}\left( W_{i}\right) >1-h_{n}\right\}
\prod_{t^{\prime }=1}^{t}\sum_{j_{t^{\prime }}\neq i,j_{1},\cdots
,j_{t^{\prime }-1}}\left\vert \alpha _{ij_{t^{\prime }}}\left( \beta \right)
\right\vert \right] +o\left( 1\right) ,
\end{eqnarray*}%
where the second inequality follows from the $\sqrt{n}$-consistency of $\hat{%
\beta}$ and the third inequality follows from Markov's inequality. As in the
proof of Lemma \ref{Lemma:alpha_ij}, we can show that under $\left\Vert
\beta -\beta _{0}\right\Vert \leq M_{n}\left/ \sqrt{n}\right. $, for $%
j_{t^{\prime }}\neq i,j_{1},\cdots ,j_{t^{\prime }-1}$,%
\begin{eqnarray*}
&&E\left[ \left. \left\vert \alpha _{ij_{t^{\prime }}}\left( \beta \right)
\right\vert \right\vert X_{i},X_{j_{1}},\cdots ,X_{j_{t^{\prime }-1}}\right]
=E\left[ \left. E\left[ \left. \left\vert \alpha _{ij}\left( \beta \right)
\right\vert \right\vert X_{i},X_{j,(-1)}\right] \right\vert X_{i}\right] \\
&=&E\left[ \left. 
\begin{array}{c}
1\left\{ X_{ij,(-1)}^{\prime }\left( \beta _{(-1)}-\beta _{0,(-1)}\right)
>0\right\} \int_{W_{i}}^{W_{i}+X_{ij,(-1)}^{\prime }\left( \beta
_{(-1)}-\beta _{0,(-1)}\right) }f_{W\left\vert X_{(-1)}\right. }\left(
w\left\vert X_{j,(-1)}\right. \right) dw \\ 
+1\left\{ X_{ij,(-1)}^{\prime }\left( \beta _{(-1)}-\beta _{0,(-1)}\right)
<0\right\} \int_{W_{i}+X_{ij,(-1)}^{\prime }\left( \beta _{(-1)}-\beta
_{0,(-1)}\right) }^{W_{i}}f_{W\left\vert X_{(-1)}\right. }\left( w\left\vert
X_{j,(-1)}\right. \right) dw%
\end{array}%
\right\vert X_{i}\right] \\
&\leq &\overline{f_{W\left\vert X_{(-1)}\right. }}E\left[ \left. \left\vert
X_{ij,(-1)}^{\prime }\left( \beta _{(-1)}-\beta _{0,(-1)}\right) \right\vert
\right\vert X_{i}\right] \\
&\leq &\frac{\overline{f_{W\left\vert X_{(-1)}\right. }}M_{n}}{\sqrt{n}}%
\left( \left\Vert X_{i}\right\Vert +E\left\Vert X_{i}\right\Vert \right) .
\end{eqnarray*}%
Therefore, under $\left\Vert \beta -\beta _{0}\right\Vert \leq M_{n}\left/ 
\sqrt{n}\right. $, 
\begin{eqnarray*}
&&E\left[ 1\left\{ F_{W}\left( W_{i}\right) >1-h_{n}\right\}
\prod_{t^{\prime }=1}^{t}\sum_{j_{t^{\prime }}\neq i,j_{1},\cdots
,j_{t^{\prime }-1}}\left\vert \alpha _{ij_{t^{\prime }}}\left( \beta \right)
\right\vert \right] \\
&\leq &\frac{\overline{f_{W\left\vert X_{(-1)}\right. }}M_{n}}{\sqrt{n}}%
\left( n-t\right) E\left[ 1\left\{ F_{W}\left( W_{i}\right) >1-h_{n}\right\}
\left( \left\Vert X_{i}\right\Vert +E\left\Vert X_{i}\right\Vert \right)
\prod_{t^{\prime }=1}^{t-1}\sum_{j_{t^{\prime }}\neq i,j_{1},\cdots
,j_{t^{\prime }-1}}\left\vert \alpha _{ij_{t^{\prime }}}\left( \beta \right)
\right\vert \right] \\
&\leq &\cdots \leq \left( \frac{\overline{f_{W\left\vert X_{(-1)}\right. }}%
M_{n}}{\sqrt{n}}\right) ^{t-1}\prod_{t^{\prime }=1}^{t}\left( n-t^{\prime
}\right) E\left[ 1\left\{ F_{W}\left( W_{i}\right) >1-h_{n}\right\} \left(
\left\Vert X_{i}\right\Vert +E\left\Vert X_{i}\right\Vert \right)
^{t-1}\left\vert \alpha _{ij_{1}}\left( \beta \right) \right\vert \right] .
\end{eqnarray*}%
It follows from Lemma \ref{Lemma:alpha_ij} that by properly restricting the
divergence rate of $M_{n}$,%
\begin{eqnarray*}
&&\Pr \left( \sum_{i=1}^{n}J_{rsti}\left( \hat{\beta}\right) >\delta \right)
\\
&\leq &O\left( n\left( nh_{n}\right) ^{-\left. \left( r+s+1\right) \right/
2}\left( \frac{M_{n}}{\sqrt{n}}\right) ^{t-1}n^{t}\right) \left( \frac{M_{n}%
}{\sqrt{nh_{n}^{2}}}\right) ^{-\left( t-1\right) }\left( \sqrt{\frac{n}{%
h_{n}^{3}}}\right) ^{-1}J_{n}^{\left( t-1\right) }\left( E\left\Vert
X_{i}\right\Vert \right) \\
&=&O\left( \left( nh_{n}\right) ^{-\left. \left( r+s-2t\right) \right/
2}\right) \cdot o\left( 1\right) \\
&=&o\left( 1\right)
\end{eqnarray*}%
for any $1\leq t\leq s\leq r\leq 5$. Consequently,%
\begin{equation}
\left\vert \text{(II)}\right\vert +\left\vert \text{(III)}\right\vert
=o_{p}\left( 1\right) .  \label{part23}
\end{equation}

Now consider (IV). Write%
\begin{eqnarray*}
\left( 1-\hat{F}_{n}\left( \hat{W}_{i}\right) \right) -\left( 1-F_{W}\left(
W_{i}\right) \right) &=&\left\{ \frac{1}{n-1}\sum_{j\neq i}1\left\{
X_{ij}^{\prime }\hat{\beta}<0\right\} -\left. E\left[ \left. 1\left\{
X_{ij}^{\prime }\beta <0\right\} \right| X_{i}\right] \right| _{\beta =\hat{%
\beta}}\right\} \\
&&+\left\{ \left. E\left[ \left. 1\left\{ X_{ij}^{\prime }\beta <0\right\}
\right| X_{i}\right] \right| _{\beta =\hat{\beta}}-E\left[ \left. 1\left\{
X_{ij}^{\prime }\beta _{0}<0\right\} \right| X_{i}\right] \right\} .
\end{eqnarray*}%
For the first term, it follows from the VC property of $\left\{ 1\left\{
X_{j}^{\prime }\beta >t\right\} :\beta \in R^{d_{x}},t\in R\right\} $ and
the Donsker theorem that%
\begin{eqnarray*}
&&\sup_{i}\left| \frac{1}{n-1}\sum_{j\neq i}1\left\{ X_{ij}^{\prime }\hat{%
\beta}<0\right\} -\left. E\left[ \left. 1\left\{ X_{ij}^{\prime }\beta
<0\right\} \right| X_{i}\right] \right| _{\beta =\hat{\beta}}\right| \\
&\leq &\sup_{\beta \in R^{d_{x}},t\in R}\left| \frac{1}{n}%
\sum_{j=1}^{n}1\left\{ X_{j}^{\prime }\beta >t\right\} -E\left[ 1\left\{
X_{j}^{\prime }\beta >t\right\} \right] \right| +\frac{1}{n} \\
&=&O_{p}\left( \frac{1}{\sqrt{n}}\right) .
\end{eqnarray*}%
For the second term, since%
\begin{eqnarray*}
E\left[ \left. 1\left\{ X_{ij}^{\prime }\beta <0\right\} \right| X_{i}\right]
&=&E\left[ \left. 1\left\{ W_{j}>W_{i}+X_{ij,(-1)}^{\prime }\left( \beta
_{(-1)}-\beta _{0,(-1)}\right) \right\} \right| X_{i}\right] \\
&=&E\left[ \left. 1-F_{\left. W\right| X_{(-1)}}\left( \left.
W_{i}+X_{ij,(-1)}^{\prime }\left( \beta _{(-1)}-\beta _{0,(-1)}\right)
\right| X_{j,(-1)}\right) \right| X_{i}\right] ,
\end{eqnarray*}%
we have%
\begin{equation*}
\partial _{\beta _{(-1)}}E\left[ \left. 1\left\{ X_{ij}^{\prime }\beta
<0\right\} \right| X_{i}\right] =-E\left[ \left. f_{\left. W\right|
X_{(-1)}}\left( \left. W_{i}+X_{ij,(-1)}^{\prime }\left( \beta _{(-1)}-\beta
_{0,(-1)}\right) \right| X_{j,(-1)}\right) X_{ij,(-1)}\right| X_{i}\right] .
\end{equation*}%
It follows that%
\begin{eqnarray*}
&&\left| \left. E\left[ \left. 1\left\{ X_{ij}^{\prime }\beta <0\right\}
\right| X_{i}\right] \right| _{\beta =\hat{\beta}}-E\left[ \left. 1\left\{
X_{ij}^{\prime }\beta _{0}<0\right\} \right| X_{i}\right] \right| \\
&\leq &\overline{f_{\left. W\right| X_{(-1)}}}\left\| \hat{\beta}%
_{(-1)}-\beta _{0,(-1)}\right\| E\left[ \left. \left\| X_{ij,(-1)}\right\|
\right| X_{i}\right] \\
&\leq &O_{p}\left( \frac{1}{\sqrt{n}}\right) \left( \left\| X_{i}\right\|
+E\left\| X_{i}\right\| \right) .
\end{eqnarray*}%
uniformly over $i=1,\cdots ,n$. Therefore,%
\begin{equation}
\left| \left( 1-\hat{F}_{n}\left( \hat{W}_{i}\right) \right) -\left(
1-F_{W}\left( W_{i}\right) \right) \right| \leq O_{p}\left( \frac{1}{\sqrt{n}%
}\right) \left( \left\| X_{i}\right\| +1\right)  \label{11111}
\end{equation}%
uniformly over $i=1,\cdots ,n$, and by Assumption 5',%
\begin{eqnarray}
\left| \text{(IV)}\right| &\leq &\frac{\overline{k^{\left( 6\right) }}}{6!%
\sqrt{nh_{n}^{13}}}\sum_{i=1}^{n}\left| \pi _{i}\right| \left| \left( 1-\hat{%
F}_{n}\left( \hat{W}_{i}\right) \right) -\left( 1-F_{W}\left( W_{i}\right)
\right) \right| ^{6}  \notag \\
&\leq &O_{p}\left( \frac{1}{n^{3}\sqrt{nh_{n}^{13}}}\right)
\sum_{i=1}^{n}\left| \pi _{i}\right| \left( \left\| X_{i}\right\| +1\right)
^{6}  \notag \\
&=&O_{p}\left( \frac{n}{n^{3}\sqrt{nh_{n}^{13}}}\right) E\left| \pi
_{i}\right| E\left[ \left( \left\| X_{i}\right\| +1\right) ^{6}\right] 
\notag \\
&=&o_{p}\left( 1\right) .  \label{part4}
\end{eqnarray}%
Inserting (\ref{part1}), (\ref{part23}), and (\ref{part4}) into (\ref{total}%
) obtains%
\begin{equation*}
\text{(b1)}=o_{p}\left( 1\right) ,\text{\ (c1)}=O_{p}\left( 1\right) ,
\end{equation*}%
where (b1) and (c1) are defined in (\ref{term_b}) and (\ref{term_c}),
respectively.

It remains to show that (b2) converges to zero in probability. It follows
from (\ref{11111}) and Assumption 5' that%
\begin{eqnarray*}
\left| \text{(b-2)}\right| &\leq &O_{p}\left( \frac{1}{\sqrt{h_{n}}}\right) 
\frac{1}{n}\sum_{i=1}^{n}\left\| Z_{i}\right\| \left( \left| k_{ni}\right|
+\left| \hat{k}_{ni}-k_{ni}\right| \right) \\
&\leq &O_{p}\left( \frac{1}{\sqrt{h_{n}}}\right) \frac{1}{n}%
\sum_{i=1}^{n}\left\| Z_{i}\right\| 1\left\{ F_{W}\left( W_{i}\right)
>1-h_{n}\right\} \\
&&+O_{p}\left( \frac{1}{\sqrt{h_{n}^{3}}}\right) \frac{1}{n}%
\sum_{i=1}^{n}\left\| Z_{i}\right\| 1\left\{ F_{W}\left( W_{i}\right)
>1-h_{n}\right\} \left| \left( 1-\hat{F}_{n}\left( \hat{W}_{i}\right)
\right) -\left( 1-F_{W}\left( W_{i}\right) \right) \right| \\
&&+O_{p}\left( \frac{1}{\sqrt{h_{n}^{5}}}\right) \frac{1}{n}%
\sum_{i=1}^{n}\left\| Z_{i}\right\| \left| \left( 1-\hat{F}_{n}\left( \hat{W}%
_{i}\right) \right) -\left( 1-F_{W}\left( W_{i}\right) \right) \right| ^{2}
\\
&\leq &O_{p}\left( \frac{1}{\sqrt{h_{n}}}\right) \cdot O_{p}\left( E\left[
\left\| Z_{i}\right\| 1\left\{ F_{W}\left( W_{i}\right) >1-h_{n}\right\} %
\right] \right) \\
&&+O_{p}\left( \frac{1}{\sqrt{nh_{n}^{3}}}\right) \frac{1}{n}%
\sum_{i=1}^{n}\left\| Z_{i}\right\| 1\left\{ F_{W}\left( W_{i}\right)
>1-h_{n}\right\} \left( \left\| X_{i}\right\| +1\right) \\
&&+O_{p}\left( \frac{1}{\sqrt{n^{2}h_{n}^{5}}}\right) \frac{1}{n}%
\sum_{i=1}^{n}\left\| Z_{i}\right\| \left( \left\| X_{i}\right\| +1\right)
^{2} \\
&\leq &O_{p}\left( \frac{1}{\sqrt{h_{n}}}\right) \cdot O_{p}\left(
h_{n}^{1-1\left/ \left( 2+c_{1}\right) \right. }\right) +O_{p}\left( \frac{1%
}{\sqrt{nh_{n}^{3}}}\right) \cdot O_{p}\left( h_{n}^{1-1/2-1/6}\right)
+O_{p}\left( \frac{1}{\sqrt{n^{2}h_{n}^{5}}}\right) \\
&=&O_{p}\left( h_{n}^{c_{1}\left/ \left( 4+2c_{1}\right) \right. }\right)
+O_{p}\left( \left( nh_{n}^{7/3}\right) ^{-1/2}\right) +O_{p}\left( \left(
nh_{n}^{5/2}\right) ^{-1}\right) \\
&=&o_{p}\left( 1\right) ,
\end{eqnarray*}%
which completes the proof.

\subsection{Proof of Theorem \protect\ref{Thm:FLL}}

\medskip

Denote $\hat{k}_{ni}=k\left( \left. \left( 1-\hat{F}_{n}\left( \hat{W}%
_{i}\right) \right) \right/ h_{n}\right) $ and $k_{ni}=k\left( \left. \left(
1-F_{W}\left( W_{i}\right) \right) \right/ h_{n}\right) $. And define the
infeasible local linear estimator as%
\begin{equation}
\left( \mu _{n}^{L},b_{n}^{L}\right) =\arg \min_{\mu ,b}\sum_{i=1}^{n}\left[
Y_{i}-Z_{i}^{\prime }\theta _{0}-\mu -\left( F_{W}\left( W_{i}\right)
-1\right) b\right] ^{2}D_{i}k_{ni}.  \label{InfeasibleLL}
\end{equation}%
I will first establish the asymptotic normality of $\left( \mu
_{n}^{L},b_{n}^{L}\right) $ by the standard argument for the nonparametric
local linear estimation, and then prove the asymptotic negligibility of the
difference of $\left( \hat{\mu}^{L},\hat{b}^{L}\right) $\ and $\left( \mu
_{n}^{L},b_{n}^{L}\right) $ in a way similar to the proof of Theorem \ref%
{Thm:FLC}.

\begin{itemize}
\item \textbf{First step: proving}%
\begin{equation*}
\left( 
\begin{array}{cc}
\sqrt{nh_{n}} &  \\ 
& \sqrt{nh_{n}^{3}}%
\end{array}%
\right) \left[ \left( 
\begin{array}{c}
\mu _{n}^{L} \\ 
b_{n}^{L}%
\end{array}%
\right) -\left( 
\begin{array}{c}
\mu _{0} \\ 
g\left( 1\right)%
\end{array}%
\right) -\left( 
\begin{array}{c}
\frac{\left( \kappa _{2}^{2}-\kappa _{1}\kappa _{3}\right) g^{\prime }\left(
1\right) }{2\left( \kappa _{0}\kappa _{2}-\kappa _{1}^{2}\right) }h_{n}^{2}
\\ 
\frac{\left( \kappa _{1}\kappa _{2}-\kappa _{0}\kappa _{3}\right) g^{\prime
}\left( 1\right) }{2\left( \kappa _{0}\kappa _{2}-\kappa _{1}^{2}\right) }%
h_{n}%
\end{array}%
\right) \right] \rightarrow N\left( 0,\sigma _{U}^{2}\Omega ^{L}\right) ,
\end{equation*}%
where the infeasible local linear estimator $\left( \mu
_{n}^{L},b_{n}^{L}\right) $ is defined in (\ref{InfeasibleLL}).
\end{itemize}

Denote $\Gamma =\left( \mathbf{1}_{n\times 1},F_{W}\left( \mathbf{W}\right) -%
\mathbf{1}_{n\times 1}\right) $ and $\mathbf{K}=diag\left(
D_{1}k_{n1},\cdots ,D_{n}k_{nn}\right) $, then%
\begin{eqnarray*}
\left( 
\begin{array}{c}
\mu _{n}^{L} \\ 
b_{n}^{L}%
\end{array}%
\right) &=&\arg \min_{\gamma \in R^{2}}\left( \mu _{0}\mathbf{1}_{n\times 1}+%
\mathbf{U}-\Gamma \gamma \right) ^{\prime }\mathbf{K}\left( \mu _{0}\mathbf{1%
}_{n\times 1}+\mathbf{U}-\Gamma \gamma \right) \\
&=&\left( \Gamma ^{\prime }\mathbf{K}\Gamma \right) ^{-1}\Gamma ^{\prime }%
\mathbf{K}\left( \mu _{0}\mathbf{1}_{n\times 1}+\mathbf{U}\right) .
\end{eqnarray*}%
Denote $V_{i}=U_{i}-G\left( F_{W}\left( W_{i}\right) \right) $, where $%
G\left( t\right) $ is defined in (\ref{G(t)}), then we have%
\begin{equation}
E\left[ V_{i}D_{i}\varphi \left( W_{i}\right) \right] =E\left[ \left( U_{i}-E%
\left[ U_{i}\left| D_{i}=1,W_{i}\right. \right] \right) D_{i}\varphi \left(
W_{i}\right) \right] =0  \label{2222}
\end{equation}%
for any function $\varphi \left( \cdot \right) $. Since $G\left( 1\right) =0$%
, a Taylor expansion of $G\left( F_{W}\left( W_{i}\right) \right) $ about $%
F_{W}\left( W_{i}\right) =1$ shows that%
\begin{equation*}
G\left( F_{W}\left( W_{i}\right) \right) =g\left( 1\right) \left(
F_{W}\left( W_{i}\right) -1\right) +\frac{1}{2}g^{\prime }\left( 1+\delta
_{i}\left( F_{W}\left( W_{i}\right) -1\right) \right) \left( F_{W}\left(
W_{i}\right) -1\right) ^{2}
\end{equation*}%
for some $\delta _{i}\in \left[ 0,1\right] $. Hence,%
\begin{equation}
\left( 
\begin{array}{c}
\mu _{n}^{L} \\ 
b_{n}^{L}%
\end{array}%
\right) =\left( 
\begin{array}{c}
\mu _{0} \\ 
g\left( 1\right)%
\end{array}%
\right) +\left( \Gamma ^{\prime }\mathbf{K}\Gamma \right) ^{-1}\Gamma
^{\prime }\mathbf{K}\left( \mathbf{V}+\mathbf{M}+\mathbf{T}\right) ,
\label{00}
\end{equation}%
where%
\begin{eqnarray*}
M_{i} &=&\frac{1}{2}g^{\prime }\left( 1\right) \left( 1-F_{W}\left(
W_{i}\right) \right) ^{2}, \\
T_{i} &=&\frac{1}{2}\left[ g^{\prime }\left( 1-\delta _{i}\left(
1-F_{W}\left( W_{i}\right) \right) \right) -g^{\prime }\left( 1\right) %
\right] \left( 1-F_{W}\left( W_{i}\right) \right) ^{2}.
\end{eqnarray*}

\textbf{Firstly}, consider the denominator term $\Gamma ^{\prime }\mathbf{K}%
\Gamma $ and the bias term $\Gamma ^{\prime }\mathbf{KM}$. It follows from
Lemma \ref{Lemma:Lambda} that%
\begin{eqnarray*}
\left( \Gamma ^{\prime }\mathbf{K}\Gamma \right) ^{-1} &=&\left( 
\begin{array}{cc}
\Lambda _{0}\left( 1\right) & -\Lambda _{1}\left( 1\right) \\ 
-\Lambda _{1}\left( 1\right) & \Lambda _{2}\left( 1\right)%
\end{array}%
\right) ^{-1}=\frac{1}{\Lambda _{0}\left( 1\right) \Lambda _{2}\left(
1\right) -\Lambda _{1}^{2}\left( 1\right) }\left( 
\begin{array}{cc}
\Lambda _{2}\left( 1\right) & \Lambda _{1}\left( 1\right) \\ 
\Lambda _{1}\left( 1\right) & \Lambda _{0}\left( 1\right)%
\end{array}%
\right) \\
&=&\frac{1}{\kappa _{0}\kappa _{2}-\kappa _{1}^{2}}\left( 
\begin{array}{cc}
\left( nh_{n}\right) ^{-1}\left( \kappa _{2}+o_{p}\left( 1\right) \right) & 
\left( nh_{n}^{2}\right) ^{-1}\left( \kappa _{1}+o_{p}\left( 1\right) \right)
\\ 
\left( nh_{n}^{2}\right) ^{-1}\left( \kappa _{1}+o_{p}\left( 1\right) \right)
& \left( nh_{n}^{3}\right) ^{-1}\left( \kappa _{0}+o_{p}\left( 1\right)
\right)%
\end{array}%
\right) \\
&=&\left( 
\begin{array}{cc}
O_{p}\left( \left( nh_{n}\right) ^{-1}\right) & O_{p}\left( \left(
nh_{n}^{2}\right) ^{-1}\right) \\ 
O_{p}\left( \left( nh_{n}^{2}\right) ^{-1}\right) & O_{p}\left( \left(
nh_{n}^{3}\right) ^{-1}\right)%
\end{array}%
\right) ,
\end{eqnarray*}%
and that%
\begin{equation*}
\Gamma ^{\prime }\mathbf{KM}=\frac{g^{\prime }\left( 1\right) }{2}\left( 
\begin{array}{c}
\Lambda _{2}\left( 1\right) \\ 
-\Lambda _{3}\left( 1\right)%
\end{array}%
\right) =\frac{g^{\prime }\left( 1\right) }{2}\left( 
\begin{array}{c}
\left( nh_{n}^{3}\right) \left( \kappa _{2}+o_{p}\left( 1\right) \right) \\ 
-\left( nh_{n}^{4}\right) \left( \kappa _{3}+o_{p}\left( 1\right) \right)%
\end{array}%
\right) ,
\end{equation*}%
where $\Lambda _{r}\left( 1\right) $ is defined in (\ref{Lambdar}). So%
\begin{equation}
\left( \Gamma ^{\prime }\mathbf{K}\Gamma \right) ^{-1}\Gamma ^{\prime }%
\mathbf{KM}=\frac{g^{\prime }\left( 1\right) }{2\left( \kappa _{0}\kappa
_{2}-\kappa _{1}^{2}\right) }\left( 
\begin{array}{c}
h_{n}^{2}\left( \kappa _{2}^{2}-\kappa _{1}\kappa _{3}\right) \\ 
h_{n}\left( \kappa _{1}\kappa _{2}-\kappa _{0}\kappa _{3}\right)%
\end{array}%
\right) +\left( 
\begin{array}{c}
o_{p}\left( h_{n}^{2}\right) \\ 
o_{p}\left( h_{n}\right)%
\end{array}%
\right) .  \label{11}
\end{equation}

\textbf{Secondly}, consider the remainder term%
\begin{equation*}
\Gamma ^{\prime }\mathbf{KT}=\frac{1}{2}\left( 
\begin{array}{c}
\sum_{i}\left[ g^{\prime }\left( 1-\delta _{i}\left( 1-F_{W}\left(
W_{i}\right) \right) \right) -g^{\prime }\left( 1\right) \right] \left(
1-F_{W}\left( W_{i}\right) \right) ^{2}D_{i}k_{ni} \\ 
-\sum_{i}\left[ g^{\prime }\left( 1-\delta _{i}\left( 1-F_{W}\left(
W_{i}\right) \right) \right) -g^{\prime }\left( 1\right) \right] \left(
1-F_{W}\left( W_{i}\right) \right) ^{3}D_{i}k_{ni}%
\end{array}%
\right) .
\end{equation*}%
Since $k_{ni}=k_{ni}1\left\{ 1-F_{W}\left( W_{i}\right) <h_{n}\right\} $, we
have%
\begin{eqnarray*}
\left\vert \left( \Gamma ^{\prime }\mathbf{KT}\right) _{j}\right\vert &\leq &%
\frac{h_{n}^{j+1}}{2}\sup_{\delta \in \left[ 0,h_{n}\right] }\left\vert
g^{\prime }\left( 1-\delta \right) -g^{\prime }\left( 1\right) \right\vert
\sum_{i}D_{i}k_{ni} \\
&=&\frac{h_{n}^{j+1}}{2}\cdot o\left( 1\right) \cdot O_{p}\left(
nh_{n}\right) \\
&=&o_{p}\left( nh_{n}^{j+2}\right) .
\end{eqnarray*}%
Thus,%
\begin{equation}
\left( \Gamma ^{\prime }\mathbf{K}\Gamma \right) ^{-1}\Gamma ^{\prime }%
\mathbf{KT}=\left( 
\begin{array}{cc}
O_{p}\left( \left( nh_{n}\right) ^{-1}\right) & O_{p}\left( \left(
nh_{n}^{2}\right) ^{-1}\right) \\ 
O_{p}\left( \left( nh_{n}^{2}\right) ^{-1}\right) & O_{p}\left( \left(
nh_{n}^{3}\right) ^{-1}\right)%
\end{array}%
\right) \left( 
\begin{array}{c}
o_{p}\left( nh_{n}^{3}\right) \\ 
o_{p}\left( nh_{n}^{4}\right)%
\end{array}%
\right) =\left( 
\begin{array}{c}
o_{p}\left( h_{n}^{2}\right) \\ 
o_{p}\left( h_{n}\right)%
\end{array}%
\right) .  \label{22}
\end{equation}

\textbf{Thirdly}, consider the variance term $\Gamma ^{\prime }\mathbf{KV}%
=\left( \Xi _{0},-\Xi _{1}\right) ^{\prime }$, where%
\begin{equation*}
\Xi _{r}=\sum_{i}V_{i}\left( 1-F_{W}\left( W_{i}\right) \right)
^{r}D_{i}k_{ni}.
\end{equation*}%
By (\ref{2222}), we have $E\left[ \Xi _{r}\right] =0$ and%
\begin{eqnarray*}
Var\left( \Xi _{r}\right) &=&nE\left[ V_{i}^{2}\left( 1-F_{W}\left(
W_{i}\right) \right) ^{2r}D_{i}k_{ni}^{2}\right] \\
&=&nE\left[ U_{i}^{2}\left( 1-F_{W}\left( W_{i}\right) \right)
^{2r}D_{i}k_{ni}^{2}\right] \\
&&-nE\left[ G^{2}\left( F_{W}\left( W_{i}\right) \right) G_{1}\left(
F_{W}\left( W_{i}\right) \right) \left( 1-F_{W}\left( W_{i}\right) \right)
^{2r}k_{ni}^{2}\right] ,
\end{eqnarray*}%
where $G_{1}\left( t\right) =\Pr \left( F_{W}\left( \varepsilon _{i}\right)
<t\right) $ with $G_{1}\left( 1\right) =1$. As in Lemma B.1, we can show that%
\begin{equation*}
E\left[ U_{i}^{2}\left( 1-F_{W}\left( W_{i}\right) \right)
^{2r}D_{i}k_{ni}^{2}\right] =E\left[ U_{i}^{2}\left( 1-F_{W}\left(
W_{i}\right) \right) ^{2r}k_{ni}^{2}\right] \left( 1+o\left( 1\right)
\right) =h_{n}^{2r+1}\sigma _{U}^{2}\chi _{2r}\left( 1+o\left( 1\right)
\right) ,
\end{equation*}%
and%
\begin{eqnarray*}
&&E\left[ G^{2}\left( F_{W}\left( W_{i}\right) \right) G_{1}\left(
F_{W}\left( W_{i}\right) \right) \left( 1-F_{W}\left( W_{i}\right) \right)
^{2r}k_{ni}^{2}\right] \\
&=&h_{n}^{2r+1}\int_{0}^{1}G^{2}\left( 1-h_{n}t\right) G_{1}\left(
1-h_{n}t\right) t^{2r}k^{2}\left( t\right) dt=o\left( h_{n}^{2r+1}\right) .
\end{eqnarray*}%
Hence, $Var\left( \Xi _{r}\right) =nh_{n}^{2r+1}\sigma _{U}^{2}\left( \chi
_{2r}+o\left( 1\right) \right) $. Similarly, we have%
\begin{equation*}
Cov\left( \Xi _{0},\Xi _{1}\right) =E\left[ \Xi _{0}\Xi _{1}\right] =nE\left[
V_{i}^{2}\left( 1-F_{W}\left( W_{i}\right) \right) D_{i}k_{ni}^{2}\right]
=nh_{n}^{2}\sigma _{U}^{2}\left( \chi _{1}+o\left( 1\right) \right) .
\end{equation*}%
It follows from Lindeberg's central limit theorem and the Cram\'{e}r-Wold
device \citep[e.g.,][Theorem 1.9(iii)]{shao2003} that%
\begin{equation*}
\left( 
\begin{array}{c}
\left( nh_{n}\right) ^{-1/2}\Xi _{0} \\ 
\left( nh_{n}^{3}\right) ^{-1/2}\Xi _{1}%
\end{array}%
\right) \rightarrow N\left( 0,\sigma _{U}^{2}\left( 
\begin{array}{cc}
\chi _{0} & \chi _{1} \\ 
\chi _{1} & \chi _{2}%
\end{array}%
\right) \right) ,
\end{equation*}%
because the Lindeberg's condition holds under $nh_{n}\rightarrow \infty $,
which can be shown as in Lemma \ref{Lemma:Lindeberg}.

\textbf{Fourthly}, consider%
\begin{eqnarray*}
\left( \Gamma ^{\prime }\mathbf{K}\Gamma \right) ^{-1}\Gamma ^{\prime }%
\mathbf{KV} &=&\frac{1}{\kappa _{0}\kappa _{2}-\kappa _{1}^{2}}\left( 
\begin{array}{cc}
\left( nh_{n}\right) ^{-1}\left( \kappa _{2}+o_{p}\left( 1\right) \right) & 
\left( nh_{n}^{2}\right) ^{-1}\left( \kappa _{1}+o_{p}\left( 1\right) \right)
\\ 
\left( nh_{n}^{2}\right) ^{-1}\left( \kappa _{1}+o_{p}\left( 1\right) \right)
& \left( nh_{n}^{3}\right) ^{-1}\left( \kappa _{0}+o_{p}\left( 1\right)
\right)%
\end{array}%
\right) \left( 
\begin{array}{c}
\Xi _{0} \\ 
-\Xi _{1}%
\end{array}%
\right) \\
&=&\frac{1}{\kappa _{0}\kappa _{2}-\kappa _{1}^{2}}\left( 
\begin{array}{c}
\kappa _{2}\left( nh_{n}\right) ^{-1}\Xi _{0}-\kappa _{1}\left(
nh_{n}^{2}\right) ^{-1}\Xi _{1} \\ 
\kappa _{1}\left( nh_{n}^{2}\right) ^{-1}\Xi _{0}-\kappa _{0}\left(
nh_{n}^{3}\right) ^{-1}\Xi _{1}%
\end{array}%
\right) +\left( 
\begin{array}{c}
o_{p}\left( \left( nh_{n}\right) ^{-1/2}\right) \\ 
o_{p}\left( \left( nh_{n}^{3}\right) ^{-1/2}\right)%
\end{array}%
\right) .
\end{eqnarray*}%
It follows from Lindeberg's central limit theorem and the Cram\'{e}r-Wold
device that%
\begin{eqnarray}
&&\left( 
\begin{array}{cc}
\sqrt{nh_{n}} &  \\ 
& \sqrt{nh_{n}^{3}}%
\end{array}%
\right) \left( \Gamma ^{\prime }\mathbf{K}\Gamma \right) ^{-1}\Gamma
^{\prime }\mathbf{KV}  \notag \\
&\mathbf{=}&\frac{1}{\kappa _{0}\kappa _{2}-\kappa _{1}^{2}}\left( 
\begin{array}{c}
\kappa _{2}\left( nh_{n}\right) ^{-1/2}\Xi _{0}-\kappa _{1}\left(
nh_{n}^{3}\right) ^{-1/2}\Xi _{1} \\ 
\kappa _{1}\left( nh_{n}\right) ^{-1/2}\Xi _{0}-\kappa _{0}\left(
nh_{n}^{3}\right) ^{-1/2}\Xi _{1}%
\end{array}%
\right) +\left( 
\begin{array}{c}
o_{p}\left( 1\right) \\ 
o_{p}\left( 1\right)%
\end{array}%
\right)  \notag \\
&\rightarrow &N\left( 0,\sigma _{U}^{2}\Omega ^{L}\right) .  \label{33}
\end{eqnarray}

In summary, the asymptotic normality of $\left( \mu
_{n}^{L},b_{n}^{L}\right) $ follows from replacing (\ref{00}) with (\ref{11}%
), (\ref{22}), (\ref{33}), and then normalizing it with $diag\left( \sqrt{%
nh_{n}},\sqrt{nh_{n}^{3}}\right) $.

\begin{itemize}
\item \textbf{Second step: proving}%
\begin{equation*}
\left( 
\begin{array}{cc}
\sqrt{nh_{n}} &  \\ 
& \sqrt{nh_{n}^{3}}%
\end{array}%
\right) \left[ \left( 
\begin{array}{c}
\hat{\mu}^{L} \\ 
\hat{b}^{L}%
\end{array}%
\right) -\left( 
\begin{array}{c}
\mu _{n}^{L} \\ 
b_{n}^{L}%
\end{array}%
\right) \right] \overset{p}{\rightarrow }\left( 
\begin{array}{c}
0 \\ 
0%
\end{array}%
\right) .
\end{equation*}
\end{itemize}

Denote $\hat{\Gamma}=\left( \mathbf{1}_{n\times 1},\hat{F}_{n}\left( \mathbf{%
\hat{W}}\right) -\mathbf{1}_{n\times 1}\right) $, $\mathbf{\hat{K}}%
=diag\left( D_{1}\hat{k}_{n1},\cdots ,D_{n}\hat{k}_{nn}\right) $, and $\hat{%
\Lambda}_{r}\left( \pi \right) =\sum_{i=1}^{n}\left( 1-\hat{F}_{n}\left( 
\hat{W}_{i}\right) \right) ^{r}\pi _{i}D_{i}\hat{k}_{ni}$ with $\pi $ being
either $U$ or $1$, then 
\begin{equation*}
\left( 
\begin{array}{c}
\hat{\mu}^{L} \\ 
\hat{b}^{L}%
\end{array}%
\right) =\left( \hat{\Gamma}^{\prime }\mathbf{\hat{K}}\hat{\Gamma}\right)
^{-1}\hat{\Gamma}^{\prime }\mathbf{\hat{K}}\left( \mathbf{Y}-\mathbf{Z}\hat{%
\theta}\right) =\frac{1}{\hat{\Lambda}_{0}\left( 1\right) \hat{\Lambda}%
_{2}\left( 1\right) -\hat{\Lambda}_{1}^{2}\left( 1\right) }\left( 
\begin{array}{cc}
\hat{\Lambda}_{2}\left( 1\right) & \hat{\Lambda}_{1}\left( 1\right) \\ 
\hat{\Lambda}_{1}\left( 1\right) & \hat{\Lambda}_{0}\left( 1\right)%
\end{array}%
\right) \hat{\Gamma}^{\prime }\mathbf{\hat{K}}\left( \mathbf{Y}-\mathbf{Z}%
\hat{\theta}\right) .
\end{equation*}%
Further denote%
\begin{eqnarray*}
\hat{A}_{rn} &=&\sum_{i}\left( Y_{i}-Z_{i}^{\prime }\hat{\theta}\right) %
\left[ \hat{\Lambda}_{r+1}\left( 1\right) -\left( 1-\hat{F}_{n}\left( \hat{W}%
_{i}\right) \right) \hat{\Lambda}_{r}\left( 1\right) \right] D_{i}\hat{k}%
_{ni}, \\
A_{rn} &=&\sum_{i}\left( Y_{i}-Z_{i}^{\prime }\theta _{0}\right) \left[
\Lambda _{r+1}\left( 1\right) -\left( 1-F_{W}\left( W_{i}\right) \right)
\Lambda _{r}\left( 1\right) \right] D_{i}k_{ni}, \\
\hat{B}_{n} &=&\hat{\Lambda}_{0}\left( 1\right) \hat{\Lambda}_{2}\left(
1\right) -\hat{\Lambda}_{1}^{2}\left( 1\right) , \\
B_{n} &=&\Lambda _{0}\left( 1\right) \Lambda _{2}\left( 1\right) -\Lambda
_{1}^{2}\left( 1\right) ,
\end{eqnarray*}%
then%
\begin{equation*}
\left( 
\begin{array}{c}
\hat{\mu}^{L} \\ 
\hat{b}^{L}%
\end{array}%
\right) =\frac{1}{\hat{B}_{n}}\left( 
\begin{array}{c}
\hat{A}_{1n} \\ 
\hat{A}_{0n}%
\end{array}%
\right) ,\text{ \ }\left( 
\begin{array}{c}
\mu _{n}^{L} \\ 
b_{n}^{L}%
\end{array}%
\right) =\frac{1}{B_{n}}\left( 
\begin{array}{c}
A_{1n} \\ 
A_{0n}%
\end{array}%
\right) .
\end{equation*}%
It follows from the first part of the proof that%
\begin{eqnarray*}
\frac{A_{1n}}{B_{n}} &=&\mu _{0}+O_{p}\left( h_{n}^{2}+\frac{1}{\sqrt{nh_{n}}%
}\right) =\mu _{0}+o_{p}\left( h_{n}\right) , \\
\frac{A_{0n}}{B_{n}} &=&g\left( 1\right) +O_{p}\left( h_{n}+\frac{1}{\sqrt{%
nh_{n}^{3}}}\right) =O_{p}\left( 1\right) ,
\end{eqnarray*}%
and from Lemma \ref{Lemma:Lambda} that%
\begin{equation*}
B_{n}=n^{2}h_{n}^{4}\left( \kappa _{0}\kappa _{2}-\kappa
_{1}^{2}+o_{p}\left( 1\right) \right) .
\end{equation*}%
Also note that%
\begin{eqnarray*}
\frac{\hat{A}_{1n}}{\hat{B}_{n}}-\frac{A_{1n}}{B_{n}} &=&\left[ \left( \hat{A%
}_{1n}-A_{1n}\right) -\left( \hat{B}_{n}-B_{n}\right) \frac{A_{1n}}{B_{n}}%
\right] \frac{1}{\hat{B}_{n}} \\
&=&\left[ \left( \hat{A}_{1n}-A_{1n}\right) -\left( \hat{B}_{n}-B_{n}\right)
\mu _{0}+o_{p}\left( h_{n}\left( \hat{B}_{n}-B_{n}\right) \right) \right] 
\frac{B_{n}}{\hat{B}_{n}}\frac{1}{B_{n}}, \\
\frac{\hat{A}_{0n}}{\hat{B}_{n}}-\frac{A_{0n}}{B_{n}} &=&\left[ \left( \hat{A%
}_{0n}-A_{0n}\right) -\left( \hat{B}_{n}-B_{n}\right) \frac{A_{0n}}{B_{n}}%
\right] \frac{1}{\hat{B}_{n}} \\
&=&\left[ \left( \hat{A}_{0n}-A_{0n}\right) -O_{p}\left( \hat{B}%
_{n}-B_{n}\right) \right] \frac{B_{n}}{\hat{B}_{n}}\frac{1}{B_{n}}.
\end{eqnarray*}%
Therefore, in order to prove%
\begin{equation*}
\left( 
\begin{array}{cc}
\sqrt{nh_{n}} &  \\ 
& \sqrt{nh_{n}^{3}}%
\end{array}%
\right) \left( 
\begin{array}{c}
\hat{\mu}^{L}-\mu _{n}^{L} \\ 
\hat{b}^{L}-b_{n}^{L}%
\end{array}%
\right) =\left( 
\begin{array}{c}
\sqrt{nh_{n}}\left( \left. \hat{A}_{1n}\right/ \hat{B}_{n}-\left.
A_{1n}\right/ B_{n}\right) \\ 
\sqrt{nh_{n}^{3}}\left( \left. \hat{A}_{0n}\right/ \hat{B}_{n}-\left.
A_{0n}\right/ B_{n}\right)%
\end{array}%
\right) \overset{p}{\rightarrow }\left( 
\begin{array}{c}
0 \\ 
0%
\end{array}%
\right) ,
\end{equation*}%
it is sufficient to prove%
\begin{equation}
\text{(a) }\frac{\hat{B}_{n}}{B_{n}}\overset{p}{\rightarrow }1\text{, (b) }%
\frac{\left( \hat{A}_{1n}-A_{1n}\right) -\left( \hat{B}_{n}-B_{n}\right) \mu
_{0}}{\sqrt{n^{3}h_{n}^{7}}}\overset{p}{\rightarrow }0\text{, (c) }\frac{%
\hat{A}_{0n}-A_{0n}}{\sqrt{n^{3}h_{n}^{5}}}\overset{p}{\rightarrow }0\text{,
(d) }\frac{\hat{B}_{n}-B_{n}}{\sqrt{n^{3}h_{n}^{5}}}\overset{p}{\rightarrow }%
0\text{.}  \label{abcdL}
\end{equation}%
Note that (d) implies $\left. \sqrt{nh_{n}^{3}}\left( \hat{B}%
_{n}-B_{n}\right) \right/ B_{n}\overset{p}{\rightarrow }0$, which in turn
implies (a). Some calculations show that%
\begin{eqnarray}
\left( \hat{A}_{1n}\!-\!A_{1n}\right) \!-\!\left( \hat{B}_{n}\!-\!B_{n}%
\right) \mu _{0} &=&\left[ \hat{\Lambda}_{0}\left( U\right) \hat{\Lambda}%
_{2}\left( 1\right) -\Lambda _{0}\left( U\right) \Lambda _{2}\left( 1\right) %
\right] -\left[ \hat{\Lambda}_{1}\left( U\right) \hat{\Lambda}_{1}\left(
1\right) -\Lambda _{1}\left( U\right) \Lambda _{1}\left( 1\right) \right] 
\notag \\
&-&\left( \hat{\theta}-\theta _{0}\right) ^{\prime }\left[ \hat{\Lambda}%
_{2}\left( 1\right) \sum_{i}Z_{i}D_{i}\hat{k}_{ni}-\hat{\Lambda}_{1}\left(
1\right) \sum_{i}Z_{i}\left( 1-\hat{F}_{n}\left( \hat{W}_{i}\right) \right)
D_{i}\hat{k}_{ni}\right]  \notag \\
\hat{A}_{0n}-A_{0n} &=&\left[ \hat{\Lambda}_{0}\left( U\right) \hat{\Lambda}%
_{1}\left( 1\right) -\Lambda _{0}\left( U\right) \Lambda _{1}\left( 1\right) %
\right] -\left[ \hat{\Lambda}_{1}\left( U\right) \hat{\Lambda}_{0}\left(
1\right) -\Lambda _{1}\left( U\right) \Lambda _{0}\left( 1\right) \right] 
\notag \\
&-&\left( \hat{\theta}-\theta _{0}\right) ^{\prime }\left[ \hat{\Lambda}%
_{1}\left( 1\right) \sum_{i}Z_{i}D_{i}\hat{k}_{ni}-\hat{\Lambda}_{0}\left(
1\right) \sum_{i}Z_{i}\left( 1-\hat{F}_{n}\left( \hat{W}_{i}\right) \right)
D_{i}\hat{k}_{ni}\right]  \notag \\
\hat{B}_{n}-B_{n} &=&\left[ \hat{\Lambda}_{0}\left( 1\right) \hat{\Lambda}%
_{2}\left( 1\right) -\Lambda _{0}\left( 1\right) \Lambda _{2}\left( 1\right) %
\right] -\left[ \hat{\Lambda}_{1}^{2}\left( 1\right) -\Lambda _{1}^{2}\left(
1\right) \right] .  \label{diff3}
\end{eqnarray}%
I will analyze the asymptotic behaviors of the terms appeared in the above
equations in order.

\textbf{Firstly}, consider $\hat{\Lambda}_{0}\left( \pi \right) -\Lambda
_{0}\left( \pi \right) =\sum_{i}\pi _{i}D_{i}\left( \hat{k}%
_{ni}-k_{ni}\right) $. It follows from the proof of Theorem \ref{Thm:FLC}
that%
\begin{equation}
\hat{\Lambda}_{0}\left( \pi \right) -\Lambda _{0}\left( \pi \right) =\left\{ 
\begin{array}{cc}
o_{p}\left( \sqrt{nh_{n}}\right) & \text{for }\pi \text{ being }U\text{
(therein the term (b1)),} \\ 
O_{p}\left( \sqrt{nh_{n}}\right) & \text{for }\pi \text{ being }1\text{
(therein the term (c1)).}%
\end{array}%
\right.  \label{Lambda0_diff}
\end{equation}

\textbf{Secondly}, consider%
\begin{equation*}
\hat{\Lambda}_{1}\left( \pi \right) -\Lambda _{1}\left( \pi \right)
=\sum_{i}\pi _{i}\left[ \left( 1-\hat{F}_{n}\left( \hat{W}_{i}\right)
\right) D_{i}\hat{k}_{ni}-\left( 1-F_{W}\left( W_{i}\right) \right)
D_{i}k_{ni}\right] .
\end{equation*}%
By (\ref{survival_decom}), we can decompose it into four terms as%
\begin{equation}
\hat{\Lambda}_{1}\left( \pi \right) -\Lambda _{1}\left( \pi \right) =\left( 
\text{I}\right) +\left( \text{II}\right) +\left( \text{III}\right) +\left( 
\text{IV}\right) ,  \label{Lambda1_decom}
\end{equation}%
where%
\begin{eqnarray*}
\left( \text{I}\right) &=&\frac{1}{n-1}\sum_{i}\sum_{j\neq i}\left[ 1\left\{
W_{j}>W_{i}\right\} -\left( 1-F_{W}\left( W_{i}\right) \right) \right] \pi
_{i}D_{i}k_{ni}, \\
\left( \text{II}\right) &=&\frac{1}{n-1}\sum_{i}\sum_{j\neq i}\alpha
_{ij}\left( \hat{\beta}\right) \pi _{i}D_{i}k_{ni}, \\
\left( \text{III}\right) &=&\sum_{i}\left( 1-F_{W}\left( W_{i}\right)
\right) \pi _{i}D_{i}\left( \hat{k}_{ni}-k_{ni}\right) , \\
\left( \text{IV}\right) &=&\sum_{i}\left[ \left( 1-\hat{F}_{n}\left( \hat{W}%
_{i}\right) \right) -\left( 1-F_{W}\left( W_{i}\right) \right) \right] \pi
_{i}D_{i}\left( \hat{k}_{ni}-k_{ni}\right) .
\end{eqnarray*}

Denote $m_{1n}\left( \xi _{i},\xi _{j}\right) =h_{n}^{-3/2}\left[ 1\left\{
W_{j}>W_{i}\right\} -\left( 1-F_{W}\left( W_{i}\right) \right) \right] \pi
_{i}D_{i}k_{ni}$, then $E\left[ \left. m_{1n}\left( \xi _{i},\xi _{j}\right)
\right| \xi _{i}\right] =0$, and%
\begin{equation*}
\left( \text{I}\right) =\sqrt{nh_{n}^{3}}\frac{\sqrt{n}}{n\left( n-1\right) }%
\sum_{i}\sum_{j\neq i}m_{1n}\left( \xi _{i},\xi _{j}\right) .
\end{equation*}%
Since $E\left[ m_{1n}^{2}\left( \xi _{i},\xi _{j}\right) \right] =o\left(
n\right) $ by Assumption 5'', it follows from the projection method for
U-statistics \citep[e.g.,][Lemma 3.1]{powell1989semiparametric} that%
\begin{equation*}
\left( \text{I}\right) =h_{n}^{3/2}\sum_{j=1}^{n}E\left[ \left. m_{1n}\left(
\xi _{i},\xi _{j}\right) \right| \xi _{j}\right] +o_{p}\left( \sqrt{%
nh_{n}^{3}}\right) .
\end{equation*}%
Denote $G_{\pi }\left( t\right) =E\left[ \pi _{i}1\left\{ F_{W}\left(
\varepsilon _{i}\right) <t\right\} \right] $, then%
\begin{eqnarray*}
E\left[ \left. m_{1n}\left( \xi _{i},\xi _{j}\right) \right| \xi _{j}\right]
&=&h_{n}^{-3/2}E\left[ \left. \left[ 1\left\{ W_{j}>W_{i}\right\} -\left(
1-F_{W}\left( W_{i}\right) \right) \right] G_{\pi }\left( F_{W}\left(
W_{i}\right) \right) k_{ni}\right| W_{j}\right] \\
&=&h_{n}^{-1/2}\int_{0}^{1}\left[ 1\left\{ F_{W}\left( W_{j}\right)
>1-h_{n}t\right\} -h_{n}t\right] G_{\pi }\left( 1-h_{n}t\right) k\left(
t\right) dt \\
&=&h_{n}^{-1/2}\int_{0}^{1}\left[ 1\left\{ 1-F_{W}\left( W_{j}\right)
<h_{n}t\right\} -h_{n}t\right] \left[ G_{\pi }\left( 1-h_{n}t\right) -G_{\pi
}\left( 1\right) \right] k\left( t\right) dt \\
&&+h_{n}^{-1/2}G_{\pi }\left( 1\right) \left( \int_{0}^{1}1\left\{
1-F_{W}\left( W_{j}\right) <h_{n}t\right\} k\left( t\right) dt-h_{n}\kappa
_{1}\right) .
\end{eqnarray*}%
It follows from Fubini's theorem \citep[e.g.,][Theorem 1.3]{shao2003} that%
\begin{eqnarray*}
E\left[ \int_{0}^{1}1\left\{ 1-F_{W}\left( W_{j}\right) <h_{n}t\right\}
k\left( t\right) dt\right] &=&\int_{0}^{1}h_{n}tk\left( t\right)
dt=h_{n}\kappa _{1}, \\
Var\left( \int_{0}^{1}1\left\{ 1-F_{W}\left( W_{j}\right) <h_{n}t\right\}
k\left( t\right) dt\right) &\leq &E\left[ \left( \int_{0}^{1}1\left\{
1-F_{W}\left( W_{j}\right) <h_{n}t\right\} k\left( t\right) dt\right) ^{2}%
\right] \\
&\leq &E\left[ \int_{0}^{1}1\left\{ 1-F_{W}\left( W_{j}\right)
<h_{n}t\right\} k^{2}\left( t\right) dt\right] \\
&=&\int_{0}^{1}h_{n}tk^{2}\left( t\right) dt=h_{n}\chi _{1}.
\end{eqnarray*}%
Therefore, we have%
\begin{eqnarray*}
&&\sum_{j=1}^{n}\left( \int_{0}^{1}1\left\{ 1-F_{W}\left( W_{j}\right)
<h_{n}t\right\} k\left( t\right) dt-h_{n}\kappa _{1}\right) \\
&=&O_{p}\left( \sqrt{nVar\left( \int_{0}^{1}1\left\{ 1-F_{W}\left(
W_{j}\right) <h_{n}t\right\} k\left( t\right) dt\right) }\right) \\
&=&O_{p}\left( \sqrt{nh_{n}}\right) ,
\end{eqnarray*}%
and thus%
\begin{equation*}
\left( \text{I}\right) =\sqrt{nh_{n}^{3}}R_{1n}+O_{p}\left( \sqrt{nh_{n}^{3}}%
\right) G_{\pi }\left( 1\right) +o_{p}\left( \sqrt{nh_{n}^{3}}\right) ,
\end{equation*}%
where%
\begin{equation*}
\left| R_{1n}\right| \leq \bar{k}\sup_{s\in \left[ 0,h_{n}\right] }\left| 
\frac{1}{\sqrt{nh_{n}}}\sum_{j=1}^{n}\left[ 1\left\{ 1-F_{W}\left(
W_{j}\right) <s\right\} -s\right] \right| \int_{0}^{1}\left| G_{\pi }\left(
1-h_{n}t\right) -G_{\pi }\left( 1\right) \right| dt.
\end{equation*}%
If follows from \citet[Lemma 2.3]{stute1982oscillation} and the dominated
convergence theorem that%
\begin{equation*}
\left| R_{1n}\right| =O_{p}\left( 1\right) \cdot o\left( 1\right)
=o_{p}\left( 1\right) .
\end{equation*}%
By noting that $G_{\pi }\left( 1\right) =E\left[ \pi _{i}\right] $, we have%
\begin{equation}
\left( \text{I}\right) =O_{p}\left( \sqrt{nh_{n}^{3}}\right) G_{\pi }\left(
1\right) +o_{p}\left( \sqrt{nh_{n}^{3}}\right) =\left\{ 
\begin{array}{cc}
o_{p}\left( \sqrt{nh_{n}^{3}}\right) & \text{for }\pi \text{ being }U\text{,}
\\ 
O_{p}\left( \sqrt{nh_{n}^{3}}\right) & \text{for }\pi \text{ being }1\text{.}%
\end{array}%
\right.  \label{Lambda1_I}
\end{equation}

For the second term $\left( \text{II}\right) $, it follows from the proof of
Theorem \ref{Thm:FLC} (also $\left( \text{II}\right) $ therein) that%
\begin{equation}
\left( \text{II}\right) =o_{p}\left( \sqrt{nh_{n}^{3}}\right) .
\label{Lambda1_II}
\end{equation}

For the third term $\left( \text{III}\right) $, a sixth-order Taylor
expansion yields that%
\begin{equation*}
\left( \text{III}\right) =\left( \text{III-1}\right) +\left( \text{III-2}%
\right) +\left( \text{III-3}\right) +\left( \text{III-4}\right) ,
\end{equation*}%
where%
\begin{eqnarray*}
\left( \text{III-1}\right) &=&\frac{1}{h_{n}\left( n-1\right) }%
\sum_{i}\sum_{j\neq i}\left[ 1\left\{ W_{j}>W_{i}\right\} -\left(
1-F_{W}\left( W_{i}\right) \right) \right] \left( 1-F_{W}\left( W_{i}\right)
\right) \pi _{i}D_{i}k^{\prime }\left( \frac{1-F_{W}\left( W_{i}\right) }{%
h_{n}}\right) , \\
\left( \text{III-2}\right) &=&\frac{1}{h_{n}\left( n-1\right) }%
\sum_{i}\sum_{j\neq i}\alpha _{ij}\left( \hat{\beta}\right) \left(
1-F_{W}\left( W_{i}\right) \right) \pi _{i}D_{i}k^{\prime }\left( \frac{%
1-F_{W}\left( W_{i}\right) }{h_{n}}\right) , \\
\left( \text{III-3}\right) &=&\sum_{r=2}^{5}\frac{1}{r!h_{n}^{r}}\sum_{i}%
\left[ \left( 1-\hat{F}_{n}\left( \hat{W}_{i}\right) \right) -\left(
1-F_{W}\left( W_{i}\right) \right) \right] ^{r}\left( 1-F_{W}\left(
W_{i}\right) \right) \pi _{i}D_{i}k^{\left( r\right) }\left( \frac{%
1-F_{W}\left( W_{i}\right) }{h_{n}}\right) , \\
\left( \text{III-4}\right) &=&\frac{1}{6!h_{n}^{6}}\sum_{i}\left[ \left( 1-%
\hat{F}_{n}\left( \hat{W}_{i}\right) \right) -\left( 1-F_{W}\left(
W_{i}\right) \right) \right] ^{6}\left( 1-F_{W}\left( W_{i}\right) \right)
\pi _{i}D_{i}k^{\left( 6\right) }\left( \frac{1-F_{i}^{\ast }}{h_{n}}\right)
.
\end{eqnarray*}%
Denote%
\begin{equation*}
m_{2n}\left( \xi _{i},\xi _{j}\right) =h_{n}^{-5/2}\left[ 1\left\{
W_{j}>W_{i}\right\} -\left( 1-F_{W}\left( W_{i}\right) \right) \right]
\left( 1-F_{W}\left( W_{i}\right) \right) \pi _{i}D_{i}k^{\prime }\left( 
\frac{1-F_{W}\left( W_{i}\right) }{h_{n}}\right) ,
\end{equation*}%
then we have $E\left[ \left. m_{2n}\left( \xi _{i},\xi _{j}\right) \right|
\xi _{i}\right] =0$ and $E\left[ m_{2n}^{2}\left( \xi _{i},\xi _{j}\right) %
\right] =o\left( n\right) $. It follows from the projection method that%
\begin{equation*}
\left( \text{III-1}\right) =\sqrt{nh_{n}^{3}}\frac{\sqrt{n}}{n\left(
n-1\right) }\sum_{i}\sum_{j\neq i}m_{2n}\left( \xi _{i},\xi _{j}\right)
=h_{n}^{3/2}\sum_{j=1}^{n}E\left[ \left. m_{2n}\left( \xi _{i},\xi
_{j}\right) \right| \xi _{j}\right] +o_{p}\left( \sqrt{nh_{n}^{3}}\right) .
\end{equation*}%
Similar arguments show that%
\begin{eqnarray*}
&&h_{n}^{3/2}\sum_{j=1}^{n}E\left[ \left. m_{2n}\left( \xi _{i},\xi
_{j}\right) \right| \xi _{j}\right] \\
&=&h_{n}\sum_{j=1}^{n}\int_{0}^{1}\left( 1\left\{ F_{W}\left( W_{j}\right)
>1-h_{n}t\right\} -h_{n}t\right) G_{\pi }\left( 1-h_{n}t\right) tk^{\prime
}\left( t\right) dt \\
&=&\sqrt{nh_{n}^{3}}\int_{0}^{1}\left[ \frac{1}{\sqrt{nh_{n}}}%
\sum_{j=1}^{n}\left( 1\left\{ 1-F_{W}\left( W_{j}\right) <h_{n}t\right\}
-h_{n}t\right) \right] \left[ G_{\pi }\left( 1-h_{n}t\right) -G_{\pi }\left(
1\right) \right] tk^{\prime }\left( t\right) dt \\
&&+h_{n}G_{\pi }\left( 1\right) \sum_{j=1}^{n}\left[ \int_{0}^{1}1\left\{
1-F_{W}\left( W_{j}\right) <h_{n}t\right\} tk^{\prime }\left( t\right)
dt+2h_{n}\kappa _{1}\right] \\
&=&\sqrt{nh_{n}^{3}}\cdot O_{p}\left( 1\right) \cdot o\left( 1\right)
+h_{n}G_{\pi }\left( 1\right) \cdot O_{p}\left( \sqrt{nh_{n}}\right) \\
&=&O_{p}\left( \sqrt{nh_{n}^{3}}\right) E\left[ \pi _{i}\right] +o_{p}\left( 
\sqrt{nh_{n}^{3}}\right) .
\end{eqnarray*}%
Consequently,%
\begin{equation*}
\left( \text{III-1}\right) =O_{p}\left( \sqrt{nh_{n}^{3}}\right) E\left[ \pi
_{i}\right] +o_{p}\left( \sqrt{nh_{n}^{3}}\right) =\left\{ 
\begin{array}{cc}
o_{p}\left( \sqrt{nh_{n}^{3}}\right) & \text{for }\pi \text{ being }U\text{,}
\\ 
O_{p}\left( \sqrt{nh_{n}^{3}}\right) & \text{for }\pi \text{ being }1\text{.}%
\end{array}%
\right.
\end{equation*}%
For $\left( \text{III-2}\right) $ and $\left( \text{III-3}\right) $, it
follows from the proof of Theorem \ref{Thm:FLC} (specifically $\left( \text{%
II}\right) $ and $\left( \text{III}\right) $ therein), by inserting $%
1-F_{W}\left( W_{i}\right) <h_{n}$, that%
\begin{equation*}
\left| \left( \text{III-2}\right) \right| +\left| \left( \text{III-3}\right)
\right| \leq O_{p}\left( \sqrt{nh_{n}^{3}}\right)
\sum_{r=1}^{5}\sum_{s=1}^{r}\sum_{t=1}^{s}\sum_{i=1}^{n}J_{rsti}\left( \hat{%
\beta}\right) +o_{p}\left( \sqrt{nh_{n}^{3}}\right) =o_{p}\left( \sqrt{%
nh_{n}^{3}}\right) ,
\end{equation*}%
where $J_{rsti}\left( \beta \right) $ is defined in (\ref{Jrsti}). For $%
\left( \text{III-4}\right) $, it follows from (\ref{11111}) that%
\begin{equation*}
\left| \left( \text{III-4}\right) \right| \leq O_{p}\left( \frac{1}{%
n^{3}h_{n}^{6}}\right) \sum_{i}\left| \pi _{i}\right| \left( \left\|
X_{i}\right\| +1\right) ^{6}=O_{p}\left( \frac{1}{n^{2}h_{n}^{6}}\right)
E\left| \pi _{i}\right| E\left[ \left( \left\| X_{i}\right\| +1\right) ^{6}%
\right] =o_{p}\left( 1\right) .
\end{equation*}%
Collecting these terms, we obtain%
\begin{equation}
\left( \text{III}\right) =\left\{ 
\begin{array}{cc}
o_{p}\left( \sqrt{nh_{n}^{3}}\right) & \text{for }\pi \text{ being }U\text{,}
\\ 
O_{p}\left( \sqrt{nh_{n}^{3}}\right) & \text{for }\pi \text{ being }1\text{.}%
\end{array}%
\right.  \label{Lambda1_III}
\end{equation}

For the last term $\left( \text{IV}\right) $ in (\ref{Lambda1_decom}), a
third-order Taylor expansion will be enough, which gives%
\begin{eqnarray*}
\left( \text{IV}\right) &=&\sum_{r=2}^{3}\frac{1}{\left( r-1\right)
!h_{n}^{r-1}}\sum_{i}\left[ \left( 1-\hat{F}_{n}\left( \hat{W}_{i}\right)
\right) -\left( 1-F_{W}\left( W_{i}\right) \right) \right] ^{r}\pi
_{i}D_{i}k^{\left( r-1\right) }\left( \frac{1-F_{W}\left( W_{i}\right) }{%
h_{n}}\right) \\
&&+\frac{1}{6h_{n}^{3}}\sum_{i}\left[ \left( 1-\hat{F}_{n}\left( \hat{W}%
_{i}\right) \right) -\left( 1-F_{W}\left( W_{i}\right) \right) \right]
^{4}\pi _{i}D_{i}k^{\left( 3\right) }\left( \frac{1-F_{W}\left( W_{i}\right) 
}{h_{n}}\right) .
\end{eqnarray*}%
Similarly, we have%
\begin{eqnarray}
\left| \left( \text{IV}\right) \right| &\leq &O_{p}\left( \sqrt{nh_{n}^{3}}%
\right)
\sum_{r=2}^{3}\sum_{s=1}^{r}\sum_{t=1}^{s}\sum_{i=1}^{n}J_{rsti}\left( \hat{%
\beta}\right) +o_{p}\left( \sqrt{nh_{n}^{3}}\right) +O_{p}\left( \frac{1}{%
n^{2}h_{n}^{3}}\right) \sum_{i}\left| \pi _{i}\right| \left( \left\|
X_{i}\right\| +1\right) ^{4}  \notag \\
&&=o_{p}\left( \sqrt{nh_{n}^{3}}\right) +O_{p}\left( \frac{1}{nh_{n}^{3}}%
\right) =o_{p}\left( \sqrt{nh_{n}^{3}}\right) .  \label{Lambda1_IV}
\end{eqnarray}

Substituting (\ref{Lambda1_I}), (\ref{Lambda1_II}), (\ref{Lambda1_III}), and
(\ref{Lambda1_IV}) into (\ref{Lambda1_decom}) yields%
\begin{equation}
\hat{\Lambda}_{1}\left( \pi \right) -\Lambda _{1}\left( \pi \right) =\left\{ 
\begin{array}{cc}
o_{p}\left( \sqrt{nh_{n}^{3}}\right) & \text{for }\pi \text{ being }U\text{,}
\\ 
O_{p}\left( \sqrt{nh_{n}^{3}}\right) & \text{for }\pi \text{ being }1\text{.}%
\end{array}%
\right.  \label{Lambda1_diff}
\end{equation}

\textbf{Thirdly}, consider%
\begin{equation*}
\hat{\Lambda}_{2}\left( 1\right) -\Lambda _{2}\left( 1\right) =\sum_{i}\left[
\left( 1-\hat{F}_{n}\left( \hat{W}_{i}\right) \right) ^{2}D_{i}\hat{k}%
_{ni}-\left( 1-F_{W}\left( W_{i}\right) \right) ^{2}D_{i}k_{ni}\right] .
\end{equation*}%
I also decompose it into several terms as%
\begin{eqnarray*}
\hat{\Lambda}_{2}\left( 1\right) -\Lambda _{2}\left( 1\right)  &=&\sum_{i}%
\left[ \left( 1-\hat{F}_{n}\left( \hat{W}_{i}\right) \right) -\left(
1-F_{W}\left( W_{i}\right) \right) \right] ^{2}D_{i}k_{ni} \\
&&+2\sum_{i}\left[ \left( 1-\hat{F}_{n}\left( \hat{W}_{i}\right) \right)
-\left( 1-F_{W}\left( W_{i}\right) \right) \right] \left( 1-F_{W}\left(
W_{i}\right) \right) D_{i}k_{ni} \\
&&+\sum_{i}\left[ \left( 1-\hat{F}_{n}\left( \hat{W}_{i}\right) \right)
-\left( 1-F_{W}\left( W_{i}\right) \right) \right] ^{2}D_{i}\left( \hat{k}%
_{ni}-k_{ni}\right)  \\
&&+2\sum_{i}\left[ \left( 1-\hat{F}_{n}\left( \hat{W}_{i}\right) \right)
-\left( 1-F_{W}\left( W_{i}\right) \right) \right] \left( 1-F_{W}\left(
W_{i}\right) \right) D_{i}\left( \hat{k}_{ni}-k_{ni}\right)  \\
&&+\sum_{i}\left( 1-F_{W}\left( W_{i}\right) \right) ^{2}D_{i}\left( \hat{k}%
_{ni}-k_{ni}\right)  \\
&\triangleq &\left( \text{V}\right) +2\left( \text{VI}\right) +\left( \text{%
VII}\right) +2\left( \text{VIII}\right) +\left( \text{IX}\right) .
\end{eqnarray*}%
Analogously, we can show that{\small 
\begin{eqnarray*}
\left\vert \left( \text{V}\right) \right\vert  &\leq &O_{p}\left( \sqrt{%
nh_{n}^{5}}\right) \sum_{s=1}^{2}\sum_{t=1}^{s}\sum_{i=1}^{n}J_{2sti}\left( 
\hat{\beta}\right) +o_{p}\left( \sqrt{nh_{n}^{5}}\right) =o_{p}\left( \sqrt{%
nh_{n}^{5}}\right) , \\
\left\vert \left( \text{VI}\right) \right\vert  &\leq &\left\vert \frac{1}{%
n-1}\sum_{i}\sum_{j\neq i}\left[ 1\left\{ W_{j}>W_{i}\right\} -\left(
1-F_{W}\left( W_{i}\right) \right) \right] D_{i}k_{ni}\right\vert  \\
&&+\frac{\bar{k}h_{n}}{n-1}\sum_{i}\sum_{j\neq i}\left\vert \alpha
_{ij}\left( \hat{\beta}\right) \right\vert 1\left\{ F_{W}\left( W_{i}\right)
>1-h_{n}\right\}  \\
&=&\left[ O_{p}\left( \sqrt{nh_{n}^{5}}\right) G_{1}\left( 1\right)
+o_{p}\left( \sqrt{nh_{n}^{5}}\right) \right] +\left[ O_{p}\left( \sqrt{%
nh_{n}^{5}}\right) \sum_{i}J_{111i}\left( \hat{\beta}\right) +o_{p}\left( 
\sqrt{nh_{n}^{5}}\right) \right]  \\
&=&O_{p}\left( \sqrt{nh_{n}^{5}}\right) , \\
\left\vert \left( \text{VII}\right) \right\vert  &\leq &\left\vert \frac{1}{%
h_{n}}\sum_{i}\left[ \left( 1-\hat{F}_{n}\left( \hat{W}_{i}\right) \right)
-\left( 1-F_{W}\left( W_{i}\right) \right) \right] ^{3}D_{i}k^{\prime
}\left( \frac{1-F_{W}\left( W_{i}\right) }{h_{n}}\right) \right\vert  \\
&&+\left\vert \frac{1}{2h_{n}^{2}}\sum_{i}\left[ \left( 1-\hat{F}_{n}\left( 
\hat{W}_{i}\right) \right) -\left( 1-F_{W}\left( W_{i}\right) \right) \right]
^{4}D_{i}k^{\prime \prime }\left( \frac{1-F_{i}^{\ast }}{h_{n}}\right)
\right\vert  \\
&\leq &\left[ O_{p}\left( \sqrt{nh_{n}^{5}}\right)
\sum_{s=1}^{3}\sum_{t=1}^{s}\sum_{i=1}^{n}J_{3sti}\left( \hat{\beta}\right)
+o_{p}\left( \sqrt{nh_{n}^{5}}\right) \right] +\left[ O_{p}\left( \frac{1}{%
n^{2}h_{n}^{2}}\right) \sum_{i}\left( \left\Vert X_{i}\right\Vert +1\right)
^{4}\right]  \\
&=&o_{p}\left( \sqrt{nh_{n}^{5}}\right) +O_{p}\left( \frac{1}{nh_{n}^{2}}%
\right) =o_{p}\left( \sqrt{nh_{n}^{5}}\right) , \\
\left\vert \left( \text{VIII}\right) \right\vert  &\leq &\left\vert
\sum_{r=1}^{4}\frac{1}{r!h_{n}^{r}}\sum_{i}\left[ \left( 1-\hat{F}_{n}\left( 
\hat{W}_{i}\right) \right) -\left( 1-F_{W}\left( W_{i}\right) \right) \right]
^{r+1}\left( 1-F_{W}\left( W_{i}\right) \right) D_{i}k^{\left( r\right)
}\left( \frac{1-F_{W}\left( W_{i}\right) }{h_{n}}\right) \right\vert  \\
&&+\left\vert \frac{1}{5!h_{n}^{5}}\sum_{i}\left[ \left( 1-\hat{F}_{n}\left( 
\hat{W}_{i}\right) \right) -\left( 1-F_{W}\left( W_{i}\right) \right) \right]
^{6}\left( 1-F_{W}\left( W_{i}\right) \right) D_{i}k^{\left( 5\right)
}\left( \frac{1-F_{i}^{\ast }}{h_{n}}\right) \right\vert  \\
&\leq &\left[ O_{p}\left( \sqrt{nh_{n}^{5}}\right)
\sum_{r=2}^{5}\sum_{s=1}^{r}\sum_{t=1}^{s}\sum_{i=1}^{n}J_{rsti}\left( \hat{%
\beta}\right) +o_{p}\left( \sqrt{nh_{n}^{5}}\right) \right] +\left[
O_{p}\left( \frac{1}{n^{3}h_{n}^{5}}\right) \sum_{i}\left( \left\Vert
X_{i}\right\Vert +1\right) ^{6}\right]  \\
&=&o_{p}\left( \sqrt{nh_{n}^{5}}\right) +O_{p}\left( \frac{1}{n^{2}h_{n}^{5}}%
\right) =o_{p}\left( \sqrt{nh_{n}^{5}}\right) , \\
\left\vert \left( \text{IX}\right) \right\vert  &\leq &\left\vert \frac{1}{%
h_{n}\left( n-1\right) }\sum_{i}\sum_{j\neq i}\left[ 1\left\{
W_{j}>W_{i}\right\} -\left( 1-F_{W}\left( W_{i}\right) \right) \right]
\left( 1-F_{W}\left( W_{i}\right) \right) ^{2}D_{i}k^{\prime }\left( \frac{%
1-F_{W}\left( W_{i}\right) }{h_{n}}\right) \right\vert  \\
&&+\frac{\overline{k^{\prime }}h_{n}}{n-1}\sum_{i}\sum_{j\neq i}\left\vert
\alpha _{ij}\left( \hat{\beta}\right) \right\vert 1\left\{ F_{W}\left(
W_{i}\right) >1-h_{n}\right\}  \\
&&+\left\vert \sum_{r=2}^{5}\frac{1}{r!h_{n}^{r}}\sum_{i}\left[ \left( 1-%
\hat{F}_{n}\left( \hat{W}_{i}\right) \right) -\left( 1-F_{W}\left(
W_{i}\right) \right) \right] ^{r}\left( 1-F_{W}\left( W_{i}\right) \right)
^{2}D_{i}k^{\left( r\right) }\left( \frac{1-F_{W}\left( W_{i}\right) }{h_{n}}%
\right) \right\vert  \\
&&+\left\vert \frac{1}{6!h_{n}^{6}}\sum_{i}\left[ \left( 1-\hat{F}_{n}\left( 
\hat{W}_{i}\right) \right) -\left( 1-F_{W}\left( W_{i}\right) \right) \right]
^{6}\left( 1-F_{W}\left( W_{i}\right) \right) ^{2}D_{i}k^{\left( 6\right)
}\left( \frac{1-F_{i}^{\ast }}{h_{n}}\right) \right\vert  \\
&=&O_{p}\left( \sqrt{nh_{n}^{5}}\right) G_{1}\left( 1\right) +O_{p}\left( 
\sqrt{nh_{n}^{5}}\right)
\sum_{r=1}^{5}\sum_{s=1}^{r}\sum_{t=1}^{s}\sum_{i=1}^{n}J_{rsti}\left( \hat{%
\beta}\right) +o_{p}\left( \sqrt{nh_{n}^{5}}\right) =O_{p}\left( \sqrt{%
nh_{n}^{5}}\right) .
\end{eqnarray*}%
} Therefore, we have%
\begin{equation}
\hat{\Lambda}_{2}\left( 1\right) -\Lambda _{2}\left( 1\right) =O_{p}\left( 
\sqrt{nh_{n}^{5}}\right) .  \label{Lambda2_diff}
\end{equation}

\textbf{Fourthly}, consider $\left( \hat{\theta}-\theta _{0}\right) ^{\prime
}\sum_{i}Z_{i}D_{i}\hat{k}_{ni}$\ and $\left( \hat{\theta}-\theta
_{0}\right) ^{\prime }\sum_{i}Z_{i}\left( 1-\hat{F}_{n}\left( \hat{W}%
_{i}\right) \right) D_{i}\hat{k}_{ni}$. It follows from the proof of Theorem
2 (specifically (b2) therein) that%
\begin{equation}
\left( \hat{\theta}-\theta _{0}\right) ^{\prime }\sum_{i}Z_{i}D_{i}\hat{k}%
_{ni}=o_{p}\left( \sqrt{nh_{n}}\right) .  \label{thetaZ0}
\end{equation}%
For the second term, we have%
\begin{eqnarray}
\left\vert \left( \hat{\theta}-\theta _{0}\right) ^{\prime
}\sum_{i}Z_{i}\left( 1-\hat{F}_{n}\left( \hat{W}_{i}\right) \right) D_{i}%
\hat{k}_{ni}\right\vert &\leq &O_{p}\left( \frac{1}{\sqrt{n}}\right)
\sum_{i}\left\Vert Z_{i}\right\Vert \left( 1-\hat{F}_{n}\left( \hat{W}%
_{i}\right) \right) \hat{k}_{ni}  \notag \\
&\leq &O_{p}\left( \frac{h_{n}}{\sqrt{n}}\right) \sum_{i}\left\Vert
Z_{i}\right\Vert \hat{k}_{ni}  \notag \\
&=&o_{p}\left( \sqrt{nh_{n}^{3}}\right) ,  \label{thetaZ1}
\end{eqnarray}%
where the last equality follows exactly the same line as the proof for $%
\left( \hat{\theta}-\theta _{0}\right) ^{\prime }\sum_{i}Z_{i}D_{i}\hat{k}%
_{ni}$.

\textbf{Finally}, prove (b)-(d) in (\ref{abcdL}). It follows from Lemma \ref%
{Lemma:Lambda} and Equations (\ref{Lambda0_diff}), (\ref{Lambda1_diff}), and
(\ref{Lambda2_diff}) that%
\begin{eqnarray*}
\Lambda _{r}\left( \pi \right) &=&\left\{ 
\begin{array}{cc}
o_{p}\left( nh_{n}^{r+1}\right) & \text{for }\pi \text{ being }U\text{,} \\ 
O_{p}\left( nh_{n}^{r+1}\right) & \text{for }\pi \text{ being }1\text{,}%
\end{array}%
\right. \\
\hat{\Lambda}_{r}\left( \pi \right) -\Lambda _{r}\left( \pi \right)
&=&\left\{ 
\begin{array}{cc}
o_{p}\left( \sqrt{nh_{n}^{2r+1}}\right) & \text{for }\pi \text{ being }U%
\text{,} \\ 
O_{p}\left( \sqrt{nh_{n}^{2r+1}}\right) & \text{for }\pi \text{ being }1%
\text{,}%
\end{array}%
\right. \\
\hat{\Lambda}_{r}\left( \pi \right) &=&\Lambda _{r}\left( \pi \right) +\left[
\hat{\Lambda}_{r}\left( \pi \right) -\Lambda _{r}\left( \pi \right) \right]
=\left\{ 
\begin{array}{cc}
o_{p}\left( nh_{n}^{r+1}\right) & \text{for }\pi \text{ being }U\text{,} \\ 
O_{p}\left( nh_{n}^{r+1}\right) & \text{for }\pi \text{ being }1\text{.}%
\end{array}%
\right.
\end{eqnarray*}%
Combining with (\ref{diff3}), (\ref{thetaZ0}), and (\ref{thetaZ1}) obtains%
\begin{eqnarray*}
\left( \hat{A}_{1n}\!-\!A_{1n}\right) \!-\!\left( \hat{B}_{n}\!-\!B_{n}%
\right) \mu _{0} &=&\hat{\Lambda}_{0}\left( U\right) \left[ \hat{\Lambda}%
_{2}\left( 1\right) -\Lambda _{2}\left( 1\right) \right] +\Lambda _{2}\left(
1\right) \left[ \hat{\Lambda}_{0}\left( U\right) -\Lambda _{0}\left(
U\right) \right] \\
&&-\hat{\Lambda}_{1}\left( U\right) \left[ \hat{\Lambda}_{1}\left( 1\right)
-\Lambda _{1}\left( 1\right) \right] -\Lambda _{1}\left( 1\right) \left[ 
\hat{\Lambda}_{1}\left( U\right) -\Lambda _{1}\left( U\right) \right] \\
&&-\hat{\Lambda}_{2}\left( 1\right) \left[ \left( \hat{\theta}-\theta
_{0}\right) ^{\prime }\sum_{i}Z_{i}D_{i}\hat{k}_{ni}\right] \\
&&+\hat{\Lambda}_{1}\left( 1\right) \left[ \left( \hat{\theta}-\theta
_{0}\right) ^{\prime }\sum_{i}Z_{i}\left( 1-\hat{F}_{n}\left( \hat{W}%
_{i}\right) \right) D_{i}\hat{k}_{ni}\right] \\
&=&o_{p}\left( \sqrt{n^{3}h_{n}^{7}}\right) , \\
\hat{A}_{0n}-A_{0n} &=&\hat{\Lambda}_{0}\left( U\right) \left[ \hat{\Lambda}%
_{1}\left( 1\right) -\Lambda _{1}\left( 1\right) \right] +\Lambda _{1}\left(
1\right) \left[ \hat{\Lambda}_{0}\left( U\right) -\Lambda _{0}\left(
U\right) \right] \\
&&-\hat{\Lambda}_{1}\left( U\right) \left[ \hat{\Lambda}_{0}\left( 1\right)
-\Lambda _{0}\left( 1\right) \right] -\Lambda _{0}\left( 1\right) \left[ 
\hat{\Lambda}_{1}\left( U\right) -\Lambda _{1}\left( U\right) \right] \\
&&-\hat{\Lambda}_{1}\left( 1\right) \left[ \left( \hat{\theta}-\theta
_{0}\right) ^{\prime }\sum_{i}Z_{i}D_{i}\hat{k}_{ni}\right] \\
&&+\hat{\Lambda}_{0}\left( 1\right) \left[ \left( \hat{\theta}-\theta
_{0}\right) ^{\prime }\sum_{i}Z_{i}\left( 1-\hat{F}_{n}\left( \hat{W}%
_{i}\right) \right) D_{i}\hat{k}_{ni}\right] \\
&=&o_{p}\left( \sqrt{n^{3}h_{n}^{5}}\right) , \\
\hat{B}_{n}-B_{n} &=&\hat{\Lambda}_{0}\left( 1\right) \left[ \hat{\Lambda}%
_{2}\left( 1\right) -\Lambda _{2}\left( 1\right) \right] +\Lambda _{2}\left(
1\right) \left[ \hat{\Lambda}_{0}\left( 1\right) -\Lambda _{0}\left(
1\right) \right] \\
&&-\left[ \hat{\Lambda}_{1}\left( 1\right) -\Lambda _{1}\left( 1\right) %
\right] \left[ \hat{\Lambda}_{1}\left( 1\right) +\Lambda _{1}\left( 1\right) %
\right] \\
&=&O_{p}\left( \sqrt{n^{3}h_{n}^{7}}\right) \\
&=&o_{p}\left( \sqrt{n^{3}h_{n}^{5}}\right) .
\end{eqnarray*}%
Consequently, (a)-(d) in (\ref{abcdL}) hold, which completes the proof.

\subsection{Several lemmas}

\begin{lemma}
\label{Lemma:E}Under Assumptions 1, 2 and 5.(i), $\left.
ED_{i}k_{ni}^{r}\right/ Ek_{ni}^{r}\rightarrow 1$ as $n\rightarrow \infty $
for any $r>0$.
\end{lemma}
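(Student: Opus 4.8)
The plan is to reduce the claim to showing that a single error ratio vanishes. Splitting $1=D_i+(1-D_i)$ gives $Ek_{ni}^r=ED_ik_{ni}^r+E[(1-D_i)k_{ni}^r]$, so that
\[
\frac{ED_ik_{ni}^r}{Ek_{ni}^r}=1-\frac{E[(1-D_i)k_{ni}^r]}{Ek_{ni}^r},
\]
and it is enough to prove that the last fraction tends to zero for each $r>0$.

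First I would use Assumption 1.(ii), under which $\varepsilon_i$ is independent of $W_i=X_i'\beta_0$. Writing $\bar F_\varepsilon(w):=\Pr(\varepsilon_i\ge w)$ for the (non-increasing) survival function of $\varepsilon_i$ and recalling $1-D_i=1\{\varepsilon_i\ge W_i\}$, the law of iterated expectations yields $E[(1-D_i)\mid W_i]=\bar F_\varepsilon(W_i)$ and hence
\[
E[(1-D_i)k_{ni}^r]=E\!\left[\bar F_\varepsilon(W_i)\,k_{ni}^r\right].
\]

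The key step is a monotonicity bound confined to the support of the kernel. By Assumption 2 the function $k$ vanishes outside $[0,1]$, so $k_{ni}^r$ is nonzero only on the event $\{0\le(1-F(W_i))/h_n\le1\}=\{W_i\ge\gamma_n\}$, where $\gamma_n:=F^{-1}(1-h_n)$ is well defined because $F$ is strictly increasing. Since $\bar F_\varepsilon$ is non-increasing, $\bar F_\varepsilon(W_i)\le\bar F_\varepsilon(\gamma_n)$ on that event, and because $k_{ni}^r\ge0$ this gives
\[
E\!\left[\bar F_\varepsilon(W_i)\,k_{ni}^r\right]\le\bar F_\varepsilon(\gamma_n)\,Ek_{ni}^r,
\]
so that $0\le E[(1-D_i)k_{ni}^r]/Ek_{ni}^r\le\bar F_\varepsilon(\gamma_n)$; here the denominator is strictly positive for all large $n$ because $k$ is not identically zero and $F(W_i)$ places mass arbitrarily close to $1$.

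Finally I would let $h_n\to0$ (Assumption 5.(i)). Because $F$ is the CDF of a continuously distributed variable with right-unbounded support, it increases strictly to $1$, so $\gamma_n=F^{-1}(1-h_n)\to+\infty$; and since $\varepsilon_i$ is almost surely finite, $\bar F_\varepsilon(\gamma_n)\to0$. The sandwich above then forces $E[(1-D_i)k_{ni}^r]/Ek_{ni}^r\to0$, proving the lemma; the bound is uniform in $r$ since its right-hand side does not depend on $r$. The only delicate point is the divergence $\gamma_n\to+\infty$, which rests entirely on the right-unboundedness of the support of $F$; every other step is elementary once the conditioning and the monotonicity restriction to $\{W_i\ge\gamma_n\}$ are in place.
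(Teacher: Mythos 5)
Your proof is correct and is essentially the paper's own argument: both restrict to the kernel's support $\{F(W_i)\geq 1-h_n\}$, invoke the independence of $\varepsilon_i$ and $W_i$ from Assumption 1.(ii) to factor out the selection error, and conclude because the resulting tail probability of $\varepsilon_i$ vanishes as $h_n\rightarrow 0$. The only (cosmetic) difference is that the paper expresses the bound as $\Pr\left(F\left(\varepsilon_i\right)>1-h_n\right)$ in the transformed scale, whereas you condition on $W_i$ and write the same quantity as $\bar{F}_\varepsilon\left(\gamma_n\right)$ with $\gamma_n=F^{-1}\left(1-h_n\right)$.
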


\begin{proof}
Since $k\left( u\right) =0$ for $u>1$, we have%
\begin{eqnarray*}
\left| 1-\frac{ED_{i}k_{ni}^{r}}{Ek_{ni}^{r}}\right|  &=&\frac{E\left[
1\left\{ \varepsilon _{i}\geq W_{i}\right\} k_{ni}^{r}\right] }{Ek_{ni}^{r}}=%
\frac{E\left[ 1\left\{ F\left( \varepsilon _{i}\right) \geq F\left(
W_{i}\right) >1-h_{n}\right\} k_{ni}^{r}\right] }{Ek_{ni}^{r}} \\
&\leq &\frac{E\left[ 1\left\{ F\left( \varepsilon _{i}\right)
>1-h_{n}\right\} k_{ni}^{r}\right] }{Ek_{ni}^{r}}=\Pr \left( F\left(
\varepsilon _{i}\right) >1-h_{n}\right) \rightarrow 0.
\end{eqnarray*}
\end{proof}

\begin{lemma}
\label{Lemma:bias}Under Assumptions 1, 2 and 5.(i),%
\begin{equation*}
\frac{EU_{i}D_{i}k_{ni}}{ED_{i}k_{ni}}\rightarrow 0.
\end{equation*}
\end{lemma}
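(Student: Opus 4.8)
The plan is to exploit the identification-at-infinity structure directly at the level of expectations: the conditional selectivity bias $E[U_i\mid D_i=1,W_i=w]$ vanishes as $w\to+\infty$, while the kernel weight $k_{ni}$ lives on a shrinking left-neighbourhood of the right boundary $\{F(W_i)=1\}$, which forces $W_i$ to be arbitrarily large on the support of $k_{ni}$. Combining these two facts makes the numerator negligible relative to the weight mass $Ek_{ni}$, and Lemma \ref{Lemma:E} then identifies the denominator with that same mass.

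First I would condition on $W_i$ to reduce the numerator to an integral against $k_{ni}$. Since $(\varepsilon_i,U_i)$ is independent of $X_i$ by Assumption 1.(ii) and $W_i=X_i'\beta_0$, for $W_i=w$ we have $E[U_iD_i\mid W_i=w]=E[U_i1\{\varepsilon_i<w\}]=:\psi(w)$, whence $EU_iD_ik_{ni}=E[\psi(W_i)k_{ni}]$. The crucial analytic input is $\psi(w)\to0$ as $w\to+\infty$: indeed $|U_i1\{\varepsilon_i<w\}|\le|U_i|$ with $E|U_i|<\infty$ (implied by Assumption 1.(iv)), so dominated convergence together with $EU_i=0$ (Assumption 1.(iii)) gives $\psi(w)\to0$.

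The key step is localization. By Assumption 2.(i) the kernel is supported on $[0,1]$, so $k_{ni}\neq0$ forces $F(W_i)\ge1-h_n$. Because $F$ is a CDF strictly increasing over $R$, $F(C)<1$ for every finite $C$; combined with $h_n\to0$ (Assumption 5.(i)) this yields, for each fixed $C$, the inequality $1-h_n>F(C)$ for all large $n$, so that $\{k_{ni}\neq0\}\subseteq\{F(W_i)>F(C)\}=\{W_i>C\}$ eventually (strict monotonicity of $F$). Fixing $\eta>0$ and choosing $C$ so that $|\psi(w)|\le\eta$ for all $w>C$, I then bound, for all large $n$ and using $k_{ni}\ge0$,
\[
\left|EU_iD_ik_{ni}\right|=\left|E[\psi(W_i)k_{ni}]\right|\le E[|\psi(W_i)|k_{ni}]\le\eta\,Ek_{ni}.
\]
Since $\eta>0$ is arbitrary, this shows $EU_iD_ik_{ni}=o(Ek_{ni})$.

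Finally I would invoke Lemma \ref{Lemma:E}, which holds under the same Assumptions 1, 2 and 5.(i) and gives $ED_ik_{ni}=Ek_{ni}(1+o(1))$; in particular the denominator is positive and of the same order as $Ek_{ni}$ for large $n$. Dividing the two displays yields
\[
\frac{EU_iD_ik_{ni}}{ED_ik_{ni}}=\frac{o(Ek_{ni})}{Ek_{ni}(1+o(1))}\to0,
\]
which is the claim. I do not expect a substantive obstacle: the only point requiring care is the localization, i.e. arguing that the support of $k_{ni}$ is eventually contained in $\{W_i>C\}$ for every fixed $C$, and ensuring the denominator stays comparable to $Ek_{ni}$ — the latter being precisely what Lemma \ref{Lemma:E} delivers.
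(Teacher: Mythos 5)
Your proof is correct, and it reaches the result by a genuinely different organization than the paper's, even though the underlying ingredients coincide. The paper never conditions on $W_i$: it uses a complement trick --- since $EU_{i}k_{ni}=EU_{i}\,Ek_{ni}=0$ by independence and $EU_{i}=0$, the numerator equals $-EU_{i}\left( 1-D_{i}\right) k_{ni}$ --- then observes that on $\left\{ k_{ni}\neq 0\right\} \cap \left\{ D_{i}=0\right\} $ one has $F\left( \varepsilon _{i}\right) \geq F\left( W_{i}\right) >1-h_{n}$, and factorizes (again by independence) to bound the numerator by $E\left[ \left\vert U_{i}\right\vert 1\left\{ F\left( \varepsilon _{i}\right) >1-h_{n}\right\} \right] Ek_{ni}$, which is $o\left( Ek_{ni}\right) $ by dominated convergence; the denominator is handled by Lemma \ref{Lemma:E}, exactly as you do. You instead condition on $W_{i}$, introduce the selectivity-bias function $\psi \left( w\right) =E\left[ U_{i}1\left\{ \varepsilon _{i}<w\right\} \right] $, prove $\psi \left( w\right) \rightarrow 0$ as $w\rightarrow +\infty $ (same ingredients: dominated convergence plus $EU_{i}=0$), and localize via an $\eta $--$C$ argument: the kernel's compact support together with $h_{n}\rightarrow 0$ eventually confines the weight to $\left\{ W_{i}>C\right\} $, where $\left\vert \psi \right\vert \leq \eta $. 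The trade-off is real, if modest. The paper's route yields an explicit $n$-indexed bound, $E\left[ \left\vert U_{i}\right\vert 1\left\{ F\left( \varepsilon _{i}\right) >1-h_{n}\right\} \right] $, which under tail conditions on $\varepsilon _{i}$ converts directly into a rate for the bias term; your $\eta $--$C$ argument is purely qualitative and gives no rate. Conversely, your route is more modular: the paper's factorization step genuinely requires independence, whereas your argument only needs the conditional bias $E\left[ U_{i}D_{i}\mid W_{i}=w\right] $ to vanish as $w\rightarrow +\infty $, so it would survive weakening full independence to an asymptotic mean-independence condition. One caveat common to both proofs: the vanishing steps use that $F\left( t\right) <1$ for every finite $t$ (you for the localization $\left\{ F\left( W_{i}\right) \geq 1-h_{n}\right\} \subseteq \left\{ W_{i}>C\right\} $ for large $n$, the paper for $F\left( \varepsilon _{i}\right) <1$ almost surely), a property of $F$ that comes from the Section \ref{sec:KRE} setup and Assumption 3 rather than from the Assumptions 1, 2, and 5.(i) cited in the lemma; this is an implicit hypothesis shared by both arguments, not a gap in yours.
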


\begin{proof}
Since $EU_{i}k_{ni}=EU_{i}Ek_{ni}=0$ by Assumption 1.(ii), we have%
\begin{eqnarray*}
\left| \frac{EU_{i}D_{i}k_{ni}}{ED_{i}k_{ni}}\right|  &=&\frac{\left|
EU_{i}\left( 1-D_{i}\right) k_{ni}\right| }{Ek_{ni}\left( 1+o\left( 1\right)
\right) }\leq \frac{E\left[ \left| U_{i}\right| 1\left\{ F\left( \varepsilon
_{i}\right) >1-h_{n}\right\} k_{ni}\right] }{Ek_{ni}\left( 1+o\left(
1\right) \right) } \\
&=&E\left[ \left| U_{i}\right| 1\left\{ F\left( \varepsilon _{i}\right)
>1-h_{n}\right\} \right] \left( 1+o\left( 1\right) \right) \rightarrow 0,
\end{eqnarray*}%
where the first equality follows from Lemma \ref{Lemma:E}.
\end{proof}

\begin{lemma}
\label{Lemma:var}Under Assumptions 1, 2 and 5,%
\begin{equation*}
\frac{Var\left( U_{i}D_{i}k_{ni}\right) }{EU_{i}^{2}k_{ni}^{2}}\rightarrow 1.
\end{equation*}
\end{lemma}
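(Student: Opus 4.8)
The plan is to write the variance in the usual way and then reduce everything to the two moment facts already established in Lemmas \ref{Lemma:E} and \ref{Lemma:bias}. Since $D_{i}$ is $\left\{ 0,1\right\}$-valued, $D_{i}^{2}=D_{i}$, so
\[
\frac{Var\left( U_{i}D_{i}k_{ni}\right) }{EU_{i}^{2}k_{ni}^{2}}=\frac{
EU_{i}^{2}D_{i}k_{ni}^{2}}{EU_{i}^{2}k_{ni}^{2}}-\frac{\left(
EU_{i}D_{i}k_{ni}\right) ^{2}}{EU_{i}^{2}k_{ni}^{2}}.
\]
It therefore suffices to show that (a) the first ratio tends to $1$ and (b) the second ratio tends to $0$. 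The structural fact I would exploit throughout is that, by Assumption 1.(ii), $k_{ni}$ is a function of $W_{i}=X_{i}^{\prime }\beta _{0}$ and hence is independent of $\left( U_{i},\varepsilon _{i}\right)$; this lets every relevant expectation factor, e.g. $EU_{i}^{2}k_{ni}^{2}=\sigma _{U}^{2}Ek_{ni}^{2}$, and it guarantees a positive denominator for large $n$ via Assumption 5.(ii).

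For (a) I would mimic the proof of Lemma \ref{Lemma:E} while carrying along the weight $U_{i}^{2}$. Writing the gap as $EU_{i}^{2}\left( 1-D_{i}\right) k_{ni}^{2}\left/ EU_{i}^{2}k_{ni}^{2}\right.$ and noting that $k_{ni}\neq 0$ forces $F\left( W_{i}\right) >1-h_{n}$, while $1-D_{i}=1\left\{ \varepsilon _{i}\geq W_{i}\right\}$ then forces $F\left( \varepsilon _{i}\right) \geq F\left( W_{i}\right) >1-h_{n}$, I would bound the numerator by $EU_{i}^{2}1\left\{ F\left( \varepsilon _{i}\right) >1-h_{n}\right\} k_{ni}^{2}$. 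Factoring by independence and cancelling $Ek_{ni}^{2}$ reduces the gap to $EU_{i}^{2}1\left\{ F\left( \varepsilon _{i}\right) >1-h_{n}\right\} \left/ \sigma _{U}^{2}\right.$. Here, unlike in Lemma \ref{Lemma:E} where the analogous quantity was a bare probability, I would invoke dominated convergence: $1\left\{ F\left( \varepsilon _{i}\right) >1-h_{n}\right\} \rightarrow 0$ almost surely as $h_{n}\rightarrow 0$ (the support of $F$ is right-unbounded, so $F^{-1}\left( 1-h_{n}\right) \rightarrow +\infty$), and $U_{i}^{2}$ is integrable by Assumption 1.(iv), so the expectation vanishes and the ratio in (a) tends to $1$.

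For (b) I would avoid any fresh computation and reuse Lemma \ref{Lemma:bias}. Factoring,
\[
\frac{\left( EU_{i}D_{i}k_{ni}\right) ^{2}}{EU_{i}^{2}k_{ni}^{2}}=\left(
\frac{EU_{i}D_{i}k_{ni}}{ED_{i}k_{ni}}\right) ^{2}\cdot \frac{\left(
ED_{i}k_{ni}\right) ^{2}}{\sigma _{U}^{2}Ek_{ni}^{2}},
\]
the first factor tends to $0$ by Lemma \ref{Lemma:bias}, while the second is bounded: since $D_{i}k_{ni}\leq k_{ni}$ (Assumption 2 gives $k_{ni}\geq 0$) and Jensen yields $\left( Ek_{ni}\right) ^{2}\leq Ek_{ni}^{2}$, we get $\left( ED_{i}k_{ni}\right) ^{2}\leq Ek_{ni}^{2}$, so the second factor is at most $1\left/ \sigma _{U}^{2}\right.$. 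Hence the product is $o\left( 1\right) \cdot O\left( 1\right) =o\left( 1\right)$, which is (b); combining (a) and (b) gives the claim. The only steps needing care — the closest thing to an obstacle — are the dominated-convergence argument in (a), which is precisely where the moment condition $EU_{i}^{2}<\infty$ is genuinely used, and the recognition in (b) that the squared mean is negligible against the second moment, which comes cheaply from Jensen once the bias ratio of Lemma \ref{Lemma:bias} has been isolated.
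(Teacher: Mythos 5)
Your proof is correct, and its first half is exactly the paper's argument: the same bound $E\left[ U_{i}^{2}\left( 1-D_{i}\right) k_{ni}^{2}\right] \leq E\left[ U_{i}^{2}1\left\{ F\left( \varepsilon _{i}\right) >1-h_{n}\right\} k_{ni}^{2}\right] $, factorization by independence, and then dominated convergence (which the paper leaves implicit in its final ``$\rightarrow 0$''). Where you genuinely depart from the paper is the cross term. The paper handles it by a direct computation: $\left( EU_{i}D_{i}k_{ni}\right) ^{2}\leq \left( E\left\vert U_{i}\right\vert \right) ^{2}\left( Ek_{ni}\right) ^{2}\leq \left( E\left\vert U_{i}\right\vert \right) ^{2}\bar{k}^{2}\left[ \Pr \left( F\left( W_{i}\right) >1-h_{n}\right) \right] ^{2}$, and then kills $\left[ \Pr \left( F\left( W_{i}\right) >1-h_{n}\right) \right] ^{2}\left/ Ek_{ni}^{2}\right. $ by invoking Assumption 5.(iii) with $c=1$. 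You instead factor the squared mean through the bias ratio, writing it as $\left( EU_{i}D_{i}k_{ni}\left/ ED_{i}k_{ni}\right. \right) ^{2}\cdot \left( ED_{i}k_{ni}\right) ^{2}\left/ \left( \sigma _{U}^{2}Ek_{ni}^{2}\right) \right. $, apply Lemma \ref{Lemma:bias} to the first factor and Jensen, $\left( Ek_{ni}\right) ^{2}\leq Ek_{ni}^{2}$, to the second. Both routes are valid. Yours is slightly more economical in its use of assumptions: Lemma \ref{Lemma:bias} requires only Assumptions 1, 2, and 5.(i), so your argument never touches the tail condition 5.(iii) and in effect shows the lemma holds without it; the paper's version is a self-contained one-line bound that does not lean on the earlier bias lemma but does consume 5.(iii). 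Both arguments (yours and the paper's) implicitly require $\sigma _{U}^{2}>0$ and $Ek_{ni}>0$ for large $n$, the latter guaranteed by Assumption 5.(ii) as you note, so neither has an advantage on that score.
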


\begin{proof}
As in Lemma \ref{Lemma:E}, we have%
\begin{equation*}
\left| 1-\frac{E\left[ U_{i}^{2}D_{i}k_{ni}^{2}\right] }{E\left[
U_{i}^{2}k_{ni}^{2}\right] }\right| =\frac{E\left[ U_{i}^{2}1\left\{ F\left(
\varepsilon _{i}\right) \geq F\left( W_{i}\right) >1-h_{n}\right\} k_{ni}^{2}%
\right] }{E\left[ U_{i}^{2}k_{ni}^{2}\right] }\leq \frac{E\left[
U_{i}^{2}1\left\{ F\left( \varepsilon _{i}\right) >1-h_{n}\right\} \right] }{%
EU_{i}^{2}}\rightarrow 0.
\end{equation*}%
On the other hand,%
\begin{equation*}
\frac{\left( E\left[ U_{i}D_{i}k_{ni}\right] \right) ^{2}}{E\left[
U_{i}^{2}k_{ni}^{2}\right] }\leq \frac{\left( E\left[ \left| U_{i}\right|
k_{ni}\right] \right) ^{2}}{E\left[ U_{i}^{2}k_{ni}^{2}\right] }=\frac{%
\left( E\left| U_{i}\right| \right) ^{2}\left( Ek_{ni}\right) ^{2}}{%
EU_{i}^{2}\cdot Ek_{ni}^{2}}\leq \frac{\left( E\left| U_{i}\right| \right)
^{2}\left( \bar{k}\right) ^{2}\left[ \Pr \left( F\left( W_{i}\right)
>1-h_{n}\right) \right] ^{2}}{EU_{i}^{2}\cdot Ek_{ni}^{2}}\rightarrow 0
\end{equation*}%
using Assumption 5.(iii). Consequently,%
\begin{equation*}
\frac{Var\left( U_{i}D_{i}k_{ni}\right) }{EU_{i}^{2}k_{ni}^{2}}=\frac{E\left[
U_{i}^{2}D_{i}k_{ni}^{2}\right] }{EU_{i}^{2}k_{ni}^{2}}-\frac{\left( E\left[
U_{i}D_{i}k_{ni}\right] \right) ^{2}}{EU_{i}^{2}k_{ni}^{2}}\rightarrow 1.
\end{equation*}
\end{proof}

\begin{lemma}
\label{Lemma:sample}Under Assumptions 1, 2 and 5,%
\begin{equation*}
\frac{\left( \left. 1\right/ n\right) \sum_{i=1}^{n}D_{i}k_{ni}}{ED_{i}k_{ni}%
}\overset{p}{\rightarrow }1.
\end{equation*}
\end{lemma}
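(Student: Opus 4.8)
The plan is to prove this by a straightforward second-moment (Chebyshev) argument, exploiting that the data form a random sample, so that $\{D_ik_{ni}\}_{i=1}^{n}$ is, for each fixed $n$, an i.i.d. collection (each $D_ik_{ni}$ is a measurable function of $(X_i,\varepsilon_i)$). Writing $T_n=\left[\left(1/n\right)\sum_{i=1}^{n}D_ik_{ni}\right]/ED_ik_{ni}$, I would first observe that $ET_n=1$ exactly, since all the $D_ik_{ni}$ share the common mean $ED_ik_{ni}$. It therefore suffices to show $Var\left(T_n\right)\to 0$ and then invoke Chebyshev's inequality to conclude $T_n\overset{p}{\rightarrow}1$.

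For the variance, independence gives $Var\left(\left(1/n\right)\sum_{i}D_ik_{ni}\right)=\left(1/n\right)Var\left(D_ik_{ni}\right)$, and since $D_i$ is binary I would bound $Var\left(D_ik_{ni}\right)\le E\left[D_ik_{ni}^{2}\right]\le Ek_{ni}^{2}$, whence $Var\left(T_n\right)\le Ek_{ni}^{2}/\left[n\left(ED_ik_{ni}\right)^{2}\right]$. The key step is to trade this bound into the form governed by Assumption 5.(ii), namely $nEk_{ni}^{2}\to\infty$. Here I would use boundedness of the kernel (Assumption 2.(ii)): because $k_{ni}\le\bar{k}$, we have $k_{ni}^{2}\le\bar{k}\,k_{ni}$, so $Ek_{ni}^{2}\le\bar{k}\,Ek_{ni}$, which yields $Ek_{ni}^{2}/\left(Ek_{ni}\right)^{2}\le\bar{k}/Ek_{ni}\le\bar{k}^{2}/Ek_{ni}^{2}$, the last inequality again using $Ek_{ni}\ge Ek_{ni}^{2}/\bar{k}$. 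Combining this with Lemma \ref{Lemma:E} in the case $r=1$, which gives $ED_ik_{ni}=Ek_{ni}\left(1+o\left(1\right)\right)$ and hence $\left(ED_ik_{ni}\right)^{2}=\left(Ek_{ni}\right)^{2}\left(1+o\left(1\right)\right)$, I obtain $Var\left(T_n\right)\le\bar{k}^{2}\left(1+o\left(1\right)\right)/\left(nEk_{ni}^{2}\right)\to 0$ by Assumption 5.(ii), completing the proof.

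There is no serious obstacle in this lemma; the argument is routine. The only point demanding any care is that the naive variance bound is expressed through $n\left(Ek_{ni}\right)^{2}$, which does not directly match the moment condition $nEk_{ni}^{2}\to\infty$ of Assumption 5.(ii); the resolution is precisely the elementary inequality $Ek_{ni}^{2}/\left(Ek_{ni}\right)^{2}\le\bar{k}^{2}/Ek_{ni}^{2}$, which is where the uniform bound $\bar{k}$ on the kernel is essential, together with Lemma \ref{Lemma:E} to replace $D_ik_{ni}$-moments by $k_{ni}$-moments.
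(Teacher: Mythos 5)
Your proof is correct and follows essentially the same route as the paper's: a Chebyshev second-moment argument bounding $Var\left(D_{i}k_{ni}\right)$ by a second moment, using the kernel bound $\bar{k}$ to convert $\left(Ek_{ni}\right)^{2}$ into $Ek_{ni}^{2}$, invoking Lemma \ref{Lemma:E} to pass between $D_{i}k_{ni}$-moments and $k_{ni}$-moments, and concluding via Assumption 5.(ii). The only (immaterial) difference is that you drop $D_{i}$ first and apply Lemma \ref{Lemma:E} with $r=1$, whereas the paper keeps $D_{i}$ and applies it with $r=2$.
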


\begin{proof}
It is sufficient to show that%
\begin{equation*}
E\left[ \frac{\left( \left. 1\right/ n\right) \sum_{i=1}^{n}\left(
D_{i}k_{ni}-ED_{i}k_{ni}\right) }{ED_{i}k_{ni}}\right] ^{2}\rightarrow 0.
\end{equation*}%
The left-hand side equals%
\begin{equation*}
Var\left( \frac{\left( \left. 1\right/ n\right) \sum_{i=1}^{n}D_{i}k_{ni}}{%
ED_{i}k_{ni}}\right) =\frac{Var\left( D_{i}k_{ni}\right) }{n\left(
ED_{i}k_{ni}\right) ^{2}}\leq \frac{ED_{i}k_{ni}^{2}}{n\left(
ED_{i}k_{ni}\right) ^{2}}\leq \frac{\bar{k}}{nED_{i}k_{ni}^2}=\frac{\bar{k}}{%
nEk_{ni}^2}\left( 1+o_{p}\left( 1\right) \right) ,
\end{equation*}%
where the last equality follows from Lemma \ref{Lemma:E}. The result
follows from Assumption 5.(ii).
\end{proof}

\begin{lemma}
\label{Lemma:Lindeberg}Denote $J_{ni}=U_{i}D_{i}k_{ni}$. Under Assumptions
1, 2 and 5, the Lindeberg's condition for the triangular array $\left\{
J_{ni}:i\leq n,n\geq 1\right\} $ holds, namely for any $\varepsilon >0$,%
\begin{equation}
E\left[ \frac{\left( J_{ni}-EJ_{ni}\right) ^{2}}{Var\left( J_{ni}\right) }%
1\left\{ \frac{\left( J_{ni}-EJ_{ni}\right) ^{2}}{Var\left( J_{ni}\right) }%
>n\varepsilon \right\} \right] \rightarrow 0.  \label{LindebergCondition}
\end{equation}
\end{lemma}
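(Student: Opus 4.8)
The plan is to recognize the displayed quantity as the row-wise Lindeberg condition for a triangular array of independent, identically distributed summands, and to verify it by checking the stronger Lyapunov condition, whose required moment is furnished by Assumption 1.(iv). Because the sample is i.i.d. and $k_{ni}$ is a fixed measurable transformation of $W_i=X_i^{\prime }\beta _0$, the variables $\left\{ J_{ni}\right\} _{i=1}^n$ are i.i.d. within each row $n$, so that $s_n^2=Var\left( \sum_{i=1}^n J_{ni}\right) =nVar\left( J_{ni}\right)$ and the classical Lindeberg ratio $s_n^{-2}\sum_{i=1}^n E\left[ \left( J_{ni}-EJ_{ni}\right) ^2 1\left\{ \left\vert J_{ni}-EJ_{ni}\right\vert >\sqrt{\varepsilon }\,s_n\right\} \right]$ reduces exactly to the single-term expression in (\ref{LindebergCondition}). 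It therefore suffices to bound that expression.

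First I would discard the indicator through the elementary bound $1\left\{ x>a\right\} \leq \left( x/a\right) ^{c_1/2}$ for $x,a>0$, applied with $x=\left( J_{ni}-EJ_{ni}\right) ^2$ and $a=n\varepsilon Var\left( J_{ni}\right)$, which bounds the left-hand side of (\ref{LindebergCondition}) by
\begin{equation*}
\frac{E\left\vert J_{ni}-EJ_{ni}\right\vert ^{2+c_1}}{\left( n\varepsilon \right) ^{c_1/2}\left[ Var\left( J_{ni}\right) \right] ^{1+c_1/2}}.
\end{equation*}
The problem is thus reduced to comparing a $\left( 2+c_1\right)$-th absolute central moment with a matching power of the variance. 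By the $c_r$-inequality, $E\left\vert J_{ni}-EJ_{ni}\right\vert ^{2+c_1}\leq 2^{1+c_1}\left( E\left\vert J_{ni}\right\vert ^{2+c_1}+\left\vert EJ_{ni}\right\vert ^{2+c_1}\right)$, and I would bound the two pieces separately.

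The key step is to read off the orders of these two moments from the boundedness and compact support of the kernel together with the independence structure. Since $0\leq D_i\leq 1$ and $0\leq k_{ni}\leq \bar{k}$ with $k_{ni}=0$ whenever $F\left( W_i\right) \leq 1-h_n$ (Assumption 2), and since $U_i$ is independent of $k_{ni}$ by Assumption 1.(ii), one has $E\left\vert J_{ni}\right\vert ^{2+c_1}\leq \bar{k}^{c_1}E\left\vert U_i\right\vert ^{2+c_1}Ek_{ni}^2=O\left( Ek_{ni}^2\right)$ and $\left\vert EJ_{ni}\right\vert \leq E\left\vert U_i\right\vert Ek_{ni}\leq \bar{k}E\left\vert U_i\right\vert \Pr \left( F\left( W_i\right) >1-h_n\right)$; combined with $\left( Ek_{ni}\right) ^2\leq Ek_{ni}^2$ (Cauchy--Schwarz) this yields $\left\vert EJ_{ni}\right\vert ^{2+c_1}\leq \left( \bar{k}E|U_i|\,\Pr \left( F\left( W_i\right) >1-h_n\right) \right) ^{c_1}Ek_{ni}^2$. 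Finally, Lemma \ref{Lemma:var} supplies $Var\left( J_{ni}\right) =\sigma _U^2Ek_{ni}^2\left( 1+o\left( 1\right) \right)$, so the two resulting contributions are of orders $\left( nEk_{ni}^2\right) ^{-c_1/2}$ and $\left[ \Pr \left( F\left( W_i\right) >1-h_n\right) \right] ^{c_1}\left( nEk_{ni}^2\right) ^{-c_1/2}$, both of which vanish: the first by Assumption 5.(ii), and the second also because $\Pr \left( F\left( W_i\right) >1-h_n\right) \rightarrow 0$ under Assumption 5.(i).

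The computation is essentially routine; the only point requiring care is to keep the normalizing variance in the denominator at the sharp order $Ek_{ni}^2$ rather than the coarser $Ek_{ni}$ that naively dominates the centering term, which is precisely what Lemma \ref{Lemma:var} and the inequality $\left( Ek_{ni}\right) ^2\leq Ek_{ni}^2$ guarantee, so that the divergence $nEk_{ni}^2\rightarrow \infty$ from Assumption 5.(ii) can be invoked. No hypotheses beyond those already used for Theorem \ref{Thm:lite} are needed.
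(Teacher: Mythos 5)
Your proposal is correct and follows essentially the same route as the paper's proof: both reduce the Lindeberg condition to a Lyapunov-type bound with exponent $2+c_{1}$ (i.e., $\delta =c_{1}/2$), bound $E\left\vert J_{ni}\right\vert ^{2+c_{1}}$ by $O\left( Ek_{ni}^{2}\right) $ using the kernel's boundedness and the independence of $U_{i}$ from $X_{i}$, replace $Var\left( J_{ni}\right) $ by $\sigma _{U}^{2}Ek_{ni}^{2}\left( 1+o\left( 1\right) \right) $ via Lemma \ref{Lemma:var}, and conclude from $nEk_{ni}^{2}\rightarrow \infty $ in Assumption 5.(ii). The only cosmetic differences are that you center via the $c_{r}$-inequality and treat the term $\left\vert EJ_{ni}\right\vert ^{2+c_{1}}$ explicitly (where, incidentally, your displayed bound omits a harmless constant factor $\left( E\left\vert U_{i}\right\vert \right) ^{2}$), whereas the paper absorbs the centering in one stroke through Minkowski's inequality.
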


\begin{proof}
Because%
\begin{equation*}
E\left[ \frac{\left( J_{ni}-EJ_{ni}\right) ^{2}}{Var\left( J_{ni}\right) }%
1\left\{ \frac{\left( J_{ni}-EJ_{ni}\right) ^{2}}{Var\left( J_{ni}\right) }%
>n\varepsilon \right\} \right] \leq E\left[ \frac{1}{\left( n\varepsilon
\right) ^{\delta }}\left( \frac{\left( J_{ni}-EJ_{ni}\right) ^{2}}{Var\left(
J_{ni}\right) }\right) ^{1+\delta }\right] ,
\end{equation*}%
a sufficient condition for (\ref{LindebergCondition}) is%
\begin{equation*}
\frac{E\left[ \left| J_{ni}-EJ_{ni}\right| ^{2\left( 1+\delta \right) }%
\right] }{n^{\delta }\left[ Var\left( J_{ni}\right) \right] ^{1+\delta }}%
\rightarrow 0
\end{equation*}%
for some $\delta >0$. It follows from Minkowski's inequality and
H\"{o}lder's inequality that
\begin{equation*}
E\left[ \left| J_{ni}-EJ_{ni}\right| ^{2\left( 1+\delta \right) }\right]
\leq 4^{\left( 1+\delta \right) }E\left[ \left| J_{ni}\right| ^{2\left(
1+\delta \right) }\right] \leq 4^{\left( 1+\delta \right) }E\left[ \left|
U_{i}\right| ^{2\left( 1+\delta \right) }\right] \left( \bar{k}\right)
^{2\delta }Ek_{ni}^{2}.
\end{equation*}%
Let $\delta =\left. c_{1}\right/ 2$, then $E\left[ \left| U_{i}\right|
^{2\left( 1+\delta \right) }\right] <\infty $ by Assumption 1.(iv). As a
result,
\begin{equation*}
\frac{E\left[ \left| J_{ni}-EJ_{ni}\right| ^{2\left( 1+\delta \right) }%
\right] }{n^{\delta }\left[ Var\left( J_{ni}\right) \right] ^{1+\delta }}=%
\frac{O\left( 1\right) Ek_{ni}^{2}}{n^{\delta }\left( Ek_{ni}^{2}\right)
^{1+\delta }}\left( 1+o\left( 1\right) \right) =\frac{O\left( 1\right) }{%
\left( nEk_{ni}^{2}\right) ^{\delta }}\left( 1+o\left( 1\right) \right)
=o\left( 1\right) ,
\end{equation*}%
where the first equality follows from Lemma \ref{Lemma:var} and the last
equality follows from Assumption 5.(ii).
\end{proof}

\begin{lemma}
\label{Lemma:Lambda}Denote%
\begin{equation}
\Lambda _{r}\left( \pi \right) =\sum_{i=1}^{n}\left( 1-F_{W}\left(
W_{i}\right) \right) ^{r}\pi _{i}D_{i}k\left( \frac{1-F_{W}\left(
W_{i}\right) }{h_{n}}\right)  \label{Lambdar}
\end{equation}%
for $r\geq 0$ and $\pi $ being either $U$ or $1$. Under Assumptions 1, 2,
and $nh_{n}\rightarrow \infty $, we have%
\begin{equation*}
\Lambda _{r}\left( \pi \right) =nh_{n}^{r+1}\left( \kappa _{r}E\left[ \pi
_{i}\right] +o_{p}\left( 1\right) \right) .
\end{equation*}
\end{lemma}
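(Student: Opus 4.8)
The plan is to treat $\Lambda_r(\pi)$ as a normalized sum of $n$ i.i.d. summands and show it concentrates around its mean, which I evaluate to leading order. Write $\Lambda_r(\pi) = \sum_{i=1}^n L_{ni}$ with $L_{ni} = \left(1-F_W(W_i)\right)^r \pi_i D_i k_{ni}$ and $k_{ni} = k\!\left(\left(1-F_W(W_i)\right)/h_n\right)$. Under Assumption 1 the $L_{ni}$ are i.i.d., so it suffices to (a) evaluate $E[L_{ni}]$ to order $h_n^{r+1}$ and (b) bound $\mathrm{Var}(L_{ni})$, and then invoke Chebyshev's inequality.

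For the mean, I first condition on $W_i$. Because $(\varepsilon_i,U_i)$ is independent of $X_i$ by Assumption 1.(ii), hence of $W_i$, the inner conditional expectation is $E[\pi_i D_i\mid W_i] = G_\pi(F_W(W_i))$, where $G_\pi(t) = E[\pi_i 1\{F_W(\varepsilon_i)<t\}]$ as in the second step of the proof of Theorem \ref{Thm:FLC}, and $G_\pi(1) = E[\pi_i]$. Since $F_W(W_i)$ is uniform on $(0,1)$, substituting $t=F_W(W_i)$ and then $s=(1-t)/h_n$ gives
\[ E[L_{ni}] = \int_0^1 (1-t)^r k\!\left(\tfrac{1-t}{h_n}\right) G_\pi(t)\,dt = h_n^{r+1}\int_0^1 s^r k(s)\, G_\pi(1-h_n s)\,ds, \]
where the upper limit collapses to $1$ because $k$ is supported on $[0,1]$ and $h_n\le 1/2$. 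As $h_n\to 0$ one has $G_\pi(1-h_n s)\to G_\pi(1)=E[\pi_i]$ for each $s$, and since $|G_\pi|\le E|\pi_i|$ and $\int_0^1 s^r k(s)\,ds=\kappa_r<\infty$, dominated convergence yields $E[L_{ni}] = h_n^{r+1}\left(\kappa_r E[\pi_i]+o(1)\right)$, so $E[\Lambda_r(\pi)] = n h_n^{r+1}\left(\kappa_r E[\pi_i]+o(1)\right)$.

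For the variance I bound $\mathrm{Var}(L_{ni})\le E[L_{ni}^2]$. Using $D_i\le 1$ and conditioning on $W_i$ (so $E[\pi_i^2\mid W_i]=E[\pi_i^2]$, finite by Assumption 1.(iv) when $\pi=U$ and equal to $1$ when $\pi=1$), the same change of variables gives $E[L_{ni}^2]\le E[\pi_i^2]\,h_n^{2r+1}\int_0^1 s^{2r}k^2(s)\,ds = E[\pi_i^2]\,\chi_{2r}\,h_n^{2r+1}=O(h_n^{2r+1})$, whence $\mathrm{Var}(\Lambda_r(\pi)) = n\,\mathrm{Var}(L_{ni})=O(n h_n^{2r+1})$. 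Combining the two facts by Chebyshev,
\[ \frac{\Lambda_r(\pi)-E[\Lambda_r(\pi)]}{n h_n^{r+1}} = O_p\!\left(\frac{\sqrt{n h_n^{2r+1}}}{n h_n^{r+1}}\right) = O_p\!\left(\frac{1}{\sqrt{n h_n}}\right) = o_p(1) \]
because $n h_n\to\infty$, and together with the mean computation this gives the claim $\Lambda_r(\pi)=n h_n^{r+1}\left(\kappa_r E[\pi_i]+o_p(1)\right)$. The only mildly delicate step is the passage to the limit in the mean integral — confirming $G_\pi(1)=E[\pi_i]$ and justifying dominated convergence via the moment bound $E|\pi_i|<\infty$; everything else is a direct i.i.d. concentration argument.
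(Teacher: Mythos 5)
Your proposal is correct and follows essentially the same route as the paper's proof: both compute $E[\Lambda_r(\pi)]$ by conditioning on $W_i$ to obtain $G_\pi(F_W(W_i))$, exploit the uniformity of $F_W(W_i)$ and the change of variables $s=(1-t)/h_n$ with dominated convergence to get $nh_n^{r+1}(\kappa_r E[\pi_i]+o(1))$, and then bound $\mathrm{Var}(\Lambda_r(\pi))\leq nh_n^{2r+1}\chi_{2r}E[\pi_i^2]$ so that the stochastic fluctuation is $O_p(\sqrt{nh_n^{2r+1}})=nh_n^{r+1}\cdot O_p\left(1/\sqrt{nh_n}\right)=nh_n^{r+1}\cdot o_p(1)$. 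The paper phrases the final step as $\Lambda_r(\pi)=E[\Lambda_r(\pi)]+O_p\left(\sqrt{\mathrm{Var}[\Lambda_r(\pi)]}\right)$ rather than invoking Chebyshev explicitly, but this is the same argument.
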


\begin{proof}
It is enough to consider the expectation and variance of $\Lambda
_{r}\left( \pi \right) $ because%
\begin{equation*}
\Lambda _{r}\left( \pi \right) =E\left[ \Lambda _{r}\left( \pi \right) %
\right] +O_{p}\left( \sqrt{Var\left[ \Lambda _{r}\left( \pi \right) \right] }%
\right) .
\end{equation*}%
Denote $k_{ni}=k\left( \left. \left( 1-F_{W}\left( W_{i}\right) \right)
\right/ h_{n}\right) $, then we have%
\begin{eqnarray*}
E\left[ \Lambda _{r}\left( \pi \right) \right]  &=&nE\left[ \left(
1-F_{W}\left( W_{i}\right) \right) ^{r}G_{\pi }\left( F_{W}\left(
W_{i}\right) \right) k_{ni}\right]  \\
&=&nh_{n}^{r+1}\int_{0}^{1}G_{\pi }\left( 1-h_{n}t\right) t^{r}k\left(
t\right) dt \\
&=&nh_{n}^{r+1}\left[ \kappa _{r}G_{\pi }\left( 1\right) +o\left( 1\right) %
\right] ,
\end{eqnarray*}%
where $G_{\pi }\left( t\right) =E\left[ \pi _{i}1\left\{ F_{W}\left(
\varepsilon _{i}\right) <t\right\} \right] $ with $G_{\pi }\left( 1\right)
=E\pi _{i}$, and%
\begin{eqnarray*}
Var\left[ \Lambda _{r}\left( \pi \right) \right]  &=&nVar\left[ \left(
1-F_{W}\left( W_{i}\right) \right) ^{r}\pi _{i}D_{i}k_{ni}\right]  \\
&\leq &nE\left[ \left( 1-F_{W}\left( W_{i}\right) \right) ^{2r}\pi
_{i}^{2}k_{ni}^{2}\right]  \\
&=&nh_{n}^{2r+1}\chi _{2r}E\pi _{i}^{2}.
\end{eqnarray*}%
Therefore,%
\begin{equation*}
\Lambda _{r}\left( \pi \right) =nh_{n}^{r+1}\left( \kappa _{r}E\pi
_{i}+o\left( 1\right) \right) +O_{p}\left( \sqrt{nh_{n}^{2r+1}}\right)
=nh_{n}^{r+1}\left( \kappa _{r}E\pi _{i}+o_{p}\left( 1\right) \right) .
\end{equation*}
\end{proof}

\begin{lemma}
\label{Lemma:alpha_ij}Denote $\alpha _{ij}\left( \beta \right) =1\left\{
X_{ij}^{\prime }\beta <0\right\} -1\left\{ X_{ij}^{\prime }\beta
_{0}<0\right\} $, where $X_{ij}=X_{i}-X_{j}$. Denote $\tilde{\gamma}%
_{n}=F_{W}^{-1}\left( 1-h_{n}\right) -n^{-1/10}h_{n}^{-1/18}$. Let $M_{n}$
be a positive sequence satisfying $M_{n}\rightarrow +\infty $, $M_{n}\leq
h_{n}^{-1/18}$, and $M_{n}=o\left( \left. h_{n}^{2/3}\right/ f_{W}\left( 
\tilde{\gamma}_{n}\right) \right) $.\ Then, under Assumptions 1', 3', 4',
5', and 6, we have%
\begin{equation*}
J_{n}^{\left( r\right) }\left( c\right) :=\left( \frac{M_{n}}{\sqrt{%
nh_{n}^{2}}}\right) ^{r}\sqrt{\frac{n}{h_{n}^{3}}}\sup_{\left\| \beta -\beta
_{0}\right\| \leq \frac{M_{n}}{\sqrt{n}}}E\left[ 1\left\{ F_{W}\left(
W_{i}\right) >1-h_{n}\right\} \left( \left\| X_{i}\right\| +c\right)
^{r}\left| \alpha _{ij}\left( \beta \right) \right| \right] \rightarrow 0
\end{equation*}%
for any $r\in \left[ 0,4\right] $ and $c\in R$.
\end{lemma}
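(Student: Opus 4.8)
The plan is to reduce the indicator difference $\alpha_{ij}(\beta)$ to a geometric "thin slab'' event and then exploit the monotone tail decay of the conditional density of the index. Fix the scale normalization $\beta_1=\beta_{01}=1$, so that over the shrinking ball $\|\beta-\beta_0\|\le M_n/\sqrt n$ one has $X_{ij}'\beta-X_{ij}'\beta_0=\Delta_{ij}$ with $\Delta_{ij}:=X_{ij,(-1)}'(\beta_{(-1)}-\beta_{0,(-1)})$ and $|\Delta_{ij}|\le (M_n/\sqrt n)\|X_{ij,(-1)}\|$. Conditioning on $(X_i,X_{j,(-1)})$ and writing $W_i=X_i'\beta_0$, the event $\{\alpha_{ij}(\beta)\neq 0\}$ is precisely $\{W_j\text{ lies between }W_i\text{ and }W_i+\Delta_{ij}\}$, an interval of length $|\Delta_{ij}|$ with one endpoint at $W_i$. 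Hence $E[|\alpha_{ij}(\beta)|\mid X_i,X_{j,(-1)}]$ equals the integral of $f_{W|X_{(-1)}}(\cdot\mid X_{j,(-1)})$ over that interval, and since the interval length is bounded uniformly in $\beta$ by $(M_n/\sqrt n)\|X_{ij,(-1)}\|$, the supremum over the ball is harmless and can be dropped from here on.

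The key device is a case split at the threshold $n^{-1/10}h_n^{-1/18}$, which is exactly the gap $F_W^{-1}(1-h_n)-\tilde{\gamma}_n$. On the event $F_W(W_i)>1-h_n$ (whose probability is $h_n$ by uniformity of $F_W(W_i)$), in \textbf{Case 1}, $|\Delta_{ij}|\le n^{-1/10}h_n^{-1/18}$, the whole slab stays above $\tilde{\gamma}_n>C$, so by the monotonicity in Assumption 6.(iii) I bound the density there by $f_{W|X_{(-1)}}(\tilde{\gamma}_n\mid X_{j,(-1)})$. Splitting $\|X_{ij,(-1)}\|\le\|X_{i,(-1)}\|+\|X_{j,(-1)}\|$ and using that $X_i,X_j$ are independent, the $\|X_{i,(-1)}\|$-piece marginalizes cleanly to the \emph{marginal} density $f_W(\tilde{\gamma}_n)=E[f_{W|X_{(-1)}}(\tilde{\gamma}_n\mid X_{(-1)})]$. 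Combining with the tail probability $h_n$ and the moment factor $(\|X_i\|+c)^r$ through H\"older's inequality (finite since $E\|X_i\|^6<\infty$ and $r\le 4$), and invoking the hypothesis $M_nf_W(\tilde{\gamma}_n)=o(h_n^{2/3})$, gives the leading bound $E[\cdots]\lesssim (M_n/\sqrt n)\,f_W(\tilde{\gamma}_n)\,h_n^{(5-r)/6}$.

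In \textbf{Case 2}, $|\Delta_{ij}|>n^{-1/10}h_n^{-1/18}$, the slab may dip below $\tilde{\gamma}_n$, but this forces $\|X_{ij,(-1)}\|>\tau_n:=n^{2/5}h_n^{-1/18}/M_n\ge n^{2/5}$ (using $M_n\le h_n^{-1/18}$), a rare large-deviation event. Here I retreat to the crude bound $\overline{f_{W|X_{(-1)}}}$ and annihilate the contribution via the sixth-moment tail $E[\|X_{ij,(-1)}\|^m\mathbf 1\{\|X_{ij,(-1)}\|>\tau_n\}]\le\tau_n^{-(m-1)}E\|X_{ij,(-1)}\|^m$, which supplies a negative power of $n$. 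Finally I insert both case bounds into the normalization $(M_n/\sqrt{nh_n^2})^r\sqrt{n/h_n^3}$ and check $J_n^{(r)}(c)\to0$ for every $r\in[0,4]$ by tracking powers of $M_n$, $n$, and $h_n$; the verification goes through inside the admissible window $n^{-5/13}\ll h_n\lesssim n^{-1/3}$ of Assumption 5', with room to spare.

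I expect the main obstacle to be the $\|X_{j,(-1)}\|$-piece of the interval length, that is, the portion of $|\Delta_{ij}|$ carried by the \emph{other} observation rather than by $X_i$. A naive $O(1)$ bound on $E[\|X_{j,(-1)}\|\,f_{W|X_{(-1)}}(\tilde{\gamma}_n\mid X_{j,(-1)})]$ is \emph{not} good enough: it destroys the gain from $f_W(\tilde{\gamma}_n)$ and the resulting rate fails precisely in the admissible band $n^{-5/13}\ll h_n\lesssim n^{-1/3}$. The fix is to truncate $\|X_{j,(-1)}\|$ at level $f_W(\tilde{\gamma}_n)^{-1/6}$, so that the truncated part marginalizes to $f_W(\tilde{\gamma}_n)^{5/6}$ while the overflow is absorbed by the sixth moment, recovering a genuine positive power of the decaying marginal density. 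Getting this density factor out of the conditional integral sharply and uniformly in $\beta$, with the offset $n^{-1/10}h_n^{-1/18}$ tuned so that Case~1's density factor and Case~2's large-deviation remainder are simultaneously negligible, is the delicate heart of the argument.
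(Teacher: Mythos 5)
Your proposal is correct, and its skeleton matches the paper's proof: reduce $\alpha_{ij}(\beta)$ to a thin-slab event for $W_j$, split off a large-deviation case killed by moment/Markov bounds (your split on $\lvert\Delta_{ij}\rvert> n^{-1/10}h_n^{-1/18}$ is equivalent to the paper's split on $\lVert X_{ij,(-1)}\rVert>n^{2/5}$, since under $\lVert\beta-\beta_0\rVert\le M_n/\sqrt n$ and $M_n\le h_n^{-1/18}$ the former implies the latter), bound the conditional density on the slab by its value at $\tilde\gamma_n$ via Assumption 6.(iii), marginalize over $X_{j,(-1)}$ to extract $f_W(\tilde\gamma_n)$, and finish by power counting under Assumption 5'. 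The one genuinely different device is your treatment of the $\lVert X_{j,(-1)}\rVert$-piece, which you correctly identify as the crux. The paper handles it by the Bayes identity $f_{W\mid X_{(-1)}}(\tilde\gamma_n\mid x)f_{X_{(-1)}}(x)=f_{X_{(-1)}\mid W}(x\mid\tilde\gamma_n)f_W(\tilde\gamma_n)$, reducing it to $f_W(\tilde\gamma_n)\,\bigl(E[\lVert X_j\rVert^6\mid W_j=\tilde\gamma_n]\bigr)^{1/6}$ and then asserting that $J(w)f_W(w)\to0$ as $w\to+\infty$ because $\int J(w)f_W(w)\,dw=E\lVert X_j\rVert^6<\infty$, yielding $o\bigl(f_W^{5/6}(\tilde\gamma_n)\bigr)$. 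You instead truncate $\lVert X_{j,(-1)}\rVert$ at $f_W^{-1/6}(\tilde\gamma_n)$: the truncated part marginalizes to $f_W^{5/6}(\tilde\gamma_n)$, and the overflow is controlled by $\overline{f_{W\mid X_{(-1)}}}$ together with the sixth moment, giving $O\bigl(f_W^{5/6}(\tilde\gamma_n)\bigr)$ — equally sufficient, since the final count has slack (the term carries an extra factor $h_n^{5/108}$, so $O(\cdot)$ versus $o(\cdot)$ is immaterial). Your route is more elementary and, notably, more watertight: integrability of $J(w)f_W(w)$ does \emph{not} by itself imply pointwise decay at infinity, so the paper's step relies on an implicit regularity claim that your truncation argument simply never needs. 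One small slip: your displayed tail inequality should read $E\bigl[\lVert X_{ij,(-1)}\rVert\,1\{\lVert X_{ij,(-1)}\rVert>\tau_n\}\bigr]\le\tau_n^{-(m-1)}E\lVert X_{ij,(-1)}\rVert^{m}$ (Markov applied to the first power, paying $(\lVert X\rVert/\tau_n)^{m-1}\ge1$), not with $\lVert X_{ij,(-1)}\rVert^{m}$ inside the left-hand expectation; the surrounding argument makes clear this is what you intend.
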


\begin{proof}
Since $\left| \alpha _{ij}\left( \beta \right) \right| \leq 1$
and by Markov's inequality and Assumption 1',%
\begin{eqnarray*}
E\left[ \left( \left\| X_{i}\right\| +c\right) ^{r}1\left\{ \left\|
X_{ij,(-1)}\right\| >n^{2/5}\right\} \right] &=&E\left[ \left( \left\|
X_{i}\right\| +c\right) ^{r}\Pr \left( \left. \left\| X_{ij,(-1)}\right\|
>n^{2/5}\right| X_{i}\right) \right] \\
&\leq &E\left[ \left( \left\| X_{i}\right\| +c\right) ^{r}\frac{E\left[
\left. \left\| X_{ij,(-1)}\right\| ^{6-r}\right| X_{i}\right] }{n^{\left(
2/5\right) \left( 6-r\right) }}\right] \\
&=&O\left( n^{-\left( 2/5\right) \left( 6-r\right) }\right) ,
\end{eqnarray*}%
it follows that for $0\leq r\leq 4$,%
\begin{eqnarray*}
&&\left( \frac{M_{n}}{\sqrt{nh_{n}^{2}}}\right) ^{r}\sqrt{\frac{n}{h_{n}^{3}}%
}\sup_{\beta }E\left[ 1\left\{ F_{W}\left( W_{i}\right) >1-h_{n}\right\}
\left( \left\| X_{i}\right\| +c\right) ^{r}\left| \alpha _{ij}\left( \beta
\right) \right| 1\left\{ \left\| X_{ij,(-1)}\right\| >n^{\frac{2}{5}%
}\right\} \right] \\
&=&O\left( \left( \frac{M_{n}}{\sqrt{nh_{n}^{2}}}\right) ^{r}\sqrt{\frac{n}{%
h_{n}^{3}}}n^{-\frac{2}{5}\left( 6-r\right) }\right) =O\left( n^{-\frac{1}{10%
}r-\frac{19}{10}}h_{n}^{-\frac{19}{18}r-\frac{3}{2}}\right) =o\left( \left(
nh_{n}^{13/5}\right) ^{-\frac{1}{10}r-\frac{19}{10}}\right) =o\left(
1\right) .
\end{eqnarray*}%
Therefore, we have%
\begin{eqnarray*}
J_{n}^{\left( r\right) }\left( c\right) &=&\left( \frac{M_{n}}{\sqrt{%
nh_{n}^{2}}}\right) ^{r}\sqrt{\frac{n}{h_{n}^{3}}}\sup_{\left\| \beta -\beta
_{0}\right\| \leq M_{n}\left/ \sqrt{n}\right. }E\left[
\begin{array}{c}
1\left\{ F\left( W_{i}\right) >1-h_{n}\right\} \left( \left\| X_{i}\right\|
+c\right) ^{r} \\
\cdot \left| \alpha _{ij}\left( \beta \right) \right| 1\left\{ \left\|
X_{ij,(-1)}\right\| \leq n^{\frac{2}{5}}\right\}%
\end{array}%
\right] +o\left( 1\right) \\
&\leq &\left( \frac{M_{n}}{\sqrt{nh_{n}^{2}}}\right) ^{r}\sqrt{\frac{n}{%
h_{n}^{3}}}E\left[
\begin{array}{c}
1\left\{ F\left( W_{i}\right) >1-h_{n}\right\} \left( \left\| X_{i}\right\|
+c\right) ^{r} \\
\cdot \underset{\left\| \beta -\beta _{0}\right\| \leq M_{n}\left/ \sqrt{n}%
\right. }{\sup }E\left[ \left. \left| \alpha _{ij}\left( \beta \right)
\right| 1\left\{ \left\| X_{ij,(-1)}\right\| \leq n^{\frac{2}{5}}\right\}
\right| X_{i}\right]%
\end{array}%
\right] +o\left( 1\right) .
\end{eqnarray*}

Note that%
\begin{eqnarray*}
\left| \alpha _{ij}\left( \beta \right) \right| &=&1\left\{ -X_{ij}^{\prime
}\left( \beta -\beta _{0}\right) \leq W_{ij}<0\right\} +1\left\{ 0\leq
W_{ij}<-X_{ij}^{\prime }\left( \beta -\beta _{0}\right) \right\} \\
&=&1\left\{ W_{i}<W_{j}\leq W_{i}+X_{ij,(-1)}^{\prime }\left( \beta
_{(-1)}-\beta _{0,(-1)}\right) \right\} \\
&&+1\left\{ W_{i}+X_{ij,(-1)}^{\prime }\left( \beta _{(-1)}-\beta
_{0,(-1)}\right) <W_{j}\leq W_{i}\right\} ,
\end{eqnarray*}%
where the second equality follows from the scale normalization $\hat{\beta}%
_{1}=\beta _{01}=1$. Under $\left\| \beta -\beta _{0}\right\| \leq
M_{n}\left/ \sqrt{n}\right. $ and $\left\| X_{ij,(-1)}\right\| \leq n^{2/5}$%
, we have $\left| X_{ij,(-1)}^{\prime }\left( \beta _{(-1)}-\beta
_{0,(-1)}\right) \right| \leq M_{n}\left/ n^{1/10}\right. \leq
n^{-1/10}h_{n}^{-1/18}$, thus for large enough $n$ such that $\tilde{\gamma}%
_{n}\geq C$,%
\begin{eqnarray*}
&&E\left[ \left. \left| \alpha _{ij}\left( \beta \right) \right| \right|
X_{i},X_{j,(-1)}\right] \\
&=&1\left\{ X_{ij,(-1)}^{\prime }\left( \beta _{(-1)}-\beta _{0,(-1)}\right)
>0\right\} \int_{W_{i}}^{W_{i}+X_{ij,(-1)}^{\prime }\left( \beta
_{(-1)}-\beta _{0,(-1)}\right) }f_{W\left| X_{(-1)}\right. }\left( w\left|
X_{j,(-1)}\right. \right) dw \\
&&+1\left\{ X_{ij,(-1)}^{\prime }\left( \beta _{(-1)}-\beta _{0,(-1)}\right)
<0\right\} \int_{W_{i}+X_{ij,(-1)}^{\prime }\left( \beta _{(-1)}-\beta
_{0,(-1)}\right) }^{W_{i}}f_{W\left| X_{(-1)}\right. }\left( w\left|
X_{j,(-1)}\right. \right) dw \\
&\leq &f_{W\left| X_{(-1)}\right. }\left( \tilde{\gamma}_{n}\left|
X_{j,(-1)}\right. \right) \left| X_{ij,(-1)}^{\prime }\left( \beta
_{(-1)}-\beta _{0,(-1)}\right) \right| \\
&\leq &\frac{M_{n}}{\sqrt{n}}f_{W\left| X_{(-1)}\right. }\left( \tilde{\gamma%
}_{n}\left| X_{j,(-1)}\right. \right) \left\| X_{ij,(-1)}\right\| .
\end{eqnarray*}%
As a result,%
\begin{eqnarray*}
&&\sup_{\left\| \beta -\beta _{0}\right\| \leq M_{n}\left/ \sqrt{n}\right. }E%
\left[ \left. \left| \alpha _{ij}\left( \beta \right) \right| 1\left\{
\left\| X_{ij,(-1)}\right\| \leq n^{2/5}\right\} \right| X_{i}\right] \\
&\leq &\frac{M_{n}}{\sqrt{n}}E\left[ \left. f_{W\left| X_{(-1)}\right.
}\left( \tilde{\gamma}_{n}\left| X_{j,(-1)}\right. \right) \left\|
X_{ij,(-1)}\right\| \right| X_{i}\right] \\
&=&\frac{M_{n}}{\sqrt{n}}\int f_{W\left| X_{(-1)}\right. }\left( \tilde{%
\gamma}_{n}\left| x_{(-1)}\right. \right) f_{X_{(-1)}}\left( x_{(-1)}\right)
\left\| X_{i,(-1)}-x_{(-1)}\right\| dx_{(-1)} \\
&=&\frac{M_{n}}{\sqrt{n}}\int f_{\left. X_{(-1)}\right| W}\left(
x_{(-1)}\left| \tilde{\gamma}_{n}\right. \right) f_{W}\left( \tilde{\gamma}%
_{n}\right) \left\| X_{i,(-1)}-x_{(-1)}\right\| dx_{(-1)} \\
&=&\frac{M_{n}}{\sqrt{n}}f_{W}\left( \tilde{\gamma}_{n}\right) E\left[
\left. \left\| X_{ij,(-1)}\right\| \right| X_{i},W_{j}=\tilde{\gamma}_{n}%
\right] \\
&\leq &\frac{M_{n}}{\sqrt{n}}f_{W}\left( \tilde{\gamma}_{n}\right) \left(
\left\| X_{i}\right\| +\left( E\left[ \left. \left\| X_{j}\right\|
^{6}\right| W_{j}=\tilde{\gamma}_{n}\right] \right) ^{1/6}\right) .
\end{eqnarray*}%
Denote $J\left( w\right) =E\left[ \left. \left\| X_{j}\right\| ^{6}\right|
W_{j}=w\right] $. Since%
\begin{equation*}
E\left\| X_{j}\right\| ^{6}=E\left[ J\left( W_{j}\right) \right]
=\int_{-\infty }^{+\infty }J\left( w\right) f_{W}\left( w\right) dw<\infty ,
\end{equation*}%
we know that $J\left( w\right) f_{W}\left( w\right) \rightarrow 0$ as $%
w\rightarrow +\infty $. It then follows that $J^{1/6}\left( \tilde{\gamma}%
_{n}\right) =o\left( f_{W}^{-1/6}\left( \tilde{\gamma}_{n}\right) \right) $,
and that%
\begin{equation*}
\sup_{\left\| \beta -\beta _{0}\right\| \leq \frac{M_{n}}{\sqrt{n}}}E\left[
\left. \left| \alpha _{ij}\left( \beta \right) \right| 1\left\{ \left\|
X_{ij,(-1)}\right\| \leq n^{2/5}\right\} \right| X_{i}\right] \leq \frac{%
M_{n}}{\sqrt{n}}f_{W}\left( \tilde{\gamma}_{n}\right) \left\| X_{i}\right\|
+o\left( \frac{M_{n}}{\sqrt{n}}f_{W}^{5/6}\left( \tilde{\gamma}_{n}\right)
\right) .
\end{equation*}

Consequently, for $0\leq r\leq 4$,%
\begin{eqnarray*}
J_{n}^{\left( r\right) }\left( c\right) &\leq &\left( \frac{M_{n}}{\sqrt{%
nh_{n}^{2}}}\right) ^{r}\frac{M_{n}}{h_{n}^{3/2}}f_{W}\left( \tilde{\gamma}%
_{n}\right) E\left[ 1\left\{ F_{W}\left( W_{i}\right) >1-h_{n}\right\}
\left( \left\| X_{i}\right\| +c\right) ^{r}\left\| X_{i}\right\| \right] \\
&&+o\left( \left( \frac{M_{n}}{\sqrt{nh_{n}^{2}}}\right) ^{r}\frac{M_{n}}{%
h_{n}^{3/2}}f_{W}^{5/6}\left( \tilde{\gamma}_{n}\right) \right) E\left[
1\left\{ F_{W}\left( W_{i}\right) >1-h_{n}\right\} \left( \left\|
X_{i}\right\| +c\right) ^{r}\right] +o\left( 1\right) \\
&\leq &\left( \frac{M_{n}}{\sqrt{nh_{n}^{2}}}\right) ^{r}\frac{M_{n}}{%
h_{n}^{3/2}}f_{W}\left( \tilde{\gamma}_{n}\right) \left( E\left[ \left(
\left\| X_{i}\right\| +c\right) ^{6r\left/ \left( r+1\right) \right.
}\left\| X_{i}\right\| ^{6\left/ \left( r+1\right) \right. }\right] \right)
^{\left. \left( r+1\right) \right/ 6}h_{n}^{1-\left. \left( r+1\right)
\right/ 6} \\
&&+o\left( \left( \frac{M_{n}}{\sqrt{nh_{n}^{2}}}\right) ^{r}\frac{M_{n}}{%
h_{n}^{3/2}}f_{W}^{5/6}\left( \tilde{\gamma}_{n}\right) \right) \left( E%
\left[ \left( \left\| X_{i}\right\| +c\right) ^{6}\right] \right) ^{\left.
r\right/ 6}h_{n}^{1-\left. r\right/ 6}+o\left( 1\right) \\
&=&o\left( n^{-\left( 1/2\right) r}h_{n}^{-\left( 11/9\right) r}\right)
+o\left( n^{-\left( 1/2)\right) r}h_{n}^{-\left( 11/9\right) r+\left(
5/108\right) }\right) +o\left( 1\right) \\
&=&o\left( \left( nh_{n}^{22/9}\right) ^{-\left( 1/2\right) r}\right)
+o\left( 1\right) =o\left( 1\right) ,
\end{eqnarray*}%
where the second inequality follows from H\"{o}lder's inequality, the first
equality follows from $M_{n}\leq h_{n}^{-1/18}$ and $M_{n}f_{W}\left( \tilde{%
\gamma}_{n}\right) =o\left( h_{n}^{2/3}\right) $, and the last equality
follows from Assumption 5'.
\end{proof}

\section*{Acknowledgements}

This work was supported by the National Natural Science Foundation of China
[grant numbers 72173142, 71991474] and the Guangdong Basic and Applied Basic
Research Foundation [grant number 2022A1515010079].

\bibliographystyle{Chicago}
\bibliography{0RefIntercept}

\begin{thebibliography}{}

\bibitem[\protect\citeauthoryear{Andrews and Schafgans}{Andrews and
  Schafgans}{1998}]{andrews1998semiparametric}
Andrews, D.~W. and M.~M. Schafgans (1998).
\newblock Semiparametric estimation of the intercept of a sample selection
  model.
\newblock {\em The Review of Economic Studies\/}~{\em 65\/}(3), 497--517.

\bibitem[\protect\citeauthoryear{Arabmazar and Schmidt}{Arabmazar and
  Schmidt}{1982}]{arabmazar1982investigation}
Arabmazar, A. and P.~Schmidt (1982).
\newblock An investigation of the robustness of the {Tobit} estimator to
  non-normality.
\newblock {\em Econometrica\/}~{\em 50\/}(4), 1055--1063.

\bibitem[\protect\citeauthoryear{Chamberlain}{Chamberlain}{1986}]{chamberlain1%
986asymptotic}
Chamberlain, G. (1986).
\newblock Asymptotic efficiency in semi-parametric models with censoring.
\newblock {\em journal of Econometrics\/}~{\em 32\/}(2), 189--218.

\bibitem[\protect\citeauthoryear{Chen and Lee}{Chen and
  Lee}{1998}]{chen1998efficient}
Chen, S. and L.-f. Lee (1998).
\newblock Efficient semiparametric scoring estimation of sample selection
  models.
\newblock {\em Econometric Theory\/}~{\em 14}, 423--462.

\bibitem[\protect\citeauthoryear{Chen and Zhou}{Chen and
  Zhou}{2010}]{chen2010semiparametric}
Chen, S. and Y.~Zhou (2010).
\newblock Semiparametric and nonparametric estimation of sample selection
  models under symmetry.
\newblock {\em Journal of Econometrics\/}~{\em 157\/}(1), 143--150.

\bibitem[\protect\citeauthoryear{Cheng, Fan, and Marron}{Cheng
  et~al.}{1997}]{cheng1997automatic}
Cheng, M.-Y., J.~Fan, and J.~S. Marron (1997).
\newblock On automatic boundary corrections.
\newblock {\em The Annals of Statistics\/}~{\em 25\/}(4), 1691--1708.

\bibitem[\protect\citeauthoryear{Fan and Gijbels}{Fan and
  Gijbels}{1996}]{fan1996local}
Fan, J. and I.~Gijbels (1996).
\newblock {\em Local Polynomial Modelling and Its Applications}.
\newblock CRC Press.

\bibitem[\protect\citeauthoryear{Gallant and Nychka}{Gallant and
  Nychka}{1987}]{gallant1987semi}
Gallant, A.~R. and D.~W. Nychka (1987).
\newblock Semi-nonparametric maximum likelihood estimation.
\newblock {\em Econometrica\/}~{\em 55}, 363--390.

\bibitem[\protect\citeauthoryear{Gelman and Imbens}{Gelman and
  Imbens}{2019}]{gelman2019why}
Gelman, A. and G.~Imbens (2019).
\newblock Why high-order polynomials should not be used in regression
  discontinuity designs.
\newblock {\em Journal of Business \& Economic Statistics\/}~{\em 37\/}(3),
  447--456.

\bibitem[\protect\citeauthoryear{Gronau}{Gronau}{1974}]{gronau1974wage}
Gronau, R. (1974).
\newblock Wage comparisons: A selectivity bias.
\newblock {\em Journal of Political Economy\/}~{\em 82\/}(6), 1119--1143.

\bibitem[\protect\citeauthoryear{Heckman}{Heckman}{1974}]{heckman1974shadow}
Heckman, J. (1974).
\newblock Shadow prices, market wages, and labor supply.
\newblock {\em Econometrica\/}~{\em 42}, 679--694.

\bibitem[\protect\citeauthoryear{Heckman}{Heckman}{1979}]{heckman1979sample}
Heckman, J. (1979).
\newblock Sample specification bias as a selection error.
\newblock {\em Econometrica\/}~{\em 47\/}(1), 153--162.

\bibitem[\protect\citeauthoryear{Heckman}{Heckman}{1990}]{heckman1990varieties}
Heckman, J. (1990).
\newblock Varieties of selection bias.
\newblock {\em The American Economic Review\/}~{\em 80\/}(2), 313--318.

\bibitem[\protect\citeauthoryear{Hussinger}{Hussinger}{2008}]{hussinger2008r}
Hussinger, K. (2008).
\newblock R\&{D} and subsidies at the firm level: an application of parametric
  and semiparametric two-step selection models.
\newblock {\em Journal of Applied Econometrics\/}~{\em 23\/}(6), 729--747.

\bibitem[\protect\citeauthoryear{Ichimura}{Ichimura}{1993}]{ichimura1993semipa%
rametric}
Ichimura, H. (1993).
\newblock Semiparametric least squares ({SLS}) and weighted {SLS} estimation of
  single-index models.
\newblock {\em Journal of Econometrics\/}~{\em 58}, 71--120.

\bibitem[\protect\citeauthoryear{Imbens and Kalyanaraman}{Imbens and
  Kalyanaraman}{2012}]{imbens2012optimal}
Imbens, G. and K.~Kalyanaraman (2012).
\newblock Optimal bandwidth choice for the regression discontinuity estimator.
\newblock {\em The Review of Economic Studies\/}~{\em 79\/}(3), 933--959.

\bibitem[\protect\citeauthoryear{Khan and Tamer}{Khan and
  Tamer}{2010}]{khan2010irregular}
Khan, S. and E.~Tamer (2010).
\newblock Irregular identification, support conditions, and inverse weight
  estimation.
\newblock {\em Econometrica\/}~{\em 78\/}(6), 2021--2042.

\bibitem[\protect\citeauthoryear{Klein, Shen, and Vella}{Klein
  et~al.}{2015}]{klein2015estimation}
Klein, R.~W., C.~Shen, and F.~Vella (2015).
\newblock Estimation of marginal effects in semiparametric selection models
  with binary outcomes.
\newblock {\em Journal of Econometrics\/}~{\em 185\/}(1), 82--94.

\bibitem[\protect\citeauthoryear{Klein and Spady}{Klein and
  Spady}{1993}]{klein1993efficient}
Klein, R.~W. and R.~H. Spady (1993).
\newblock An efficient semiparametric estimator for binary response models.
\newblock {\em Econometrica\/}~{\em 61}, 387--421.

\bibitem[\protect\citeauthoryear{Lewbel}{Lewbel}{2000}]{lewbel2000semiparametr%
ic}
Lewbel, A. (2000).
\newblock Semiparametric qualitative response model estimation with unknown
  heteroscedasticity or instrumental variables.
\newblock {\em Journal of Econometrics\/}~{\em 97\/}(1), 145--177.

\bibitem[\protect\citeauthoryear{Lewbel}{Lewbel}{2007}]{lewbel2007endogenous}
Lewbel, A. (2007).
\newblock Endogenous selection or treatment model estimation.
\newblock {\em Journal of Econometrics\/}~{\em 141\/}(2), 777--806.

\bibitem[\protect\citeauthoryear{Lewis}{Lewis}{1974}]{lewis1974comments}
Lewis, H.~G. (1974).
\newblock Comments on selectivity biases in wage comparisons.
\newblock {\em Journal of Political Economy\/}~{\em 82\/}(6), 1145--1155.

\bibitem[\protect\citeauthoryear{Li and Racine}{Li and
  Racine}{2007}]{li2007nonparametric}
Li, Q. and J.~S. Racine (2007).
\newblock {\em Nonparametric Econometrics: Theory and Practice}.
\newblock Princeton University Press.

\bibitem[\protect\citeauthoryear{Liu, Hsiao, Matsumoto, and Chou}{Liu
  et~al.}{2009}]{liu2009maternal}
Liu, E., C.~Hsiao, T.~Matsumoto, and S.~Chou (2009).
\newblock Maternal full-time employment and overweight children: Parametric,
  semi-parametric, and non-parametric assessment.
\newblock {\em Journal of Econometrics\/}~{\em 152\/}(1), 61--69.

\bibitem[\protect\citeauthoryear{Mulligan and Rubinstein}{Mulligan and
  Rubinstein}{2008}]{mulligan2008selection}
Mulligan, C.~B. and Y.~Rubinstein (2008).
\newblock Selection, investment, and women's relative wages over time.
\newblock {\em The Quarterly Journal of Economics\/}~{\em 123\/}(3),
  1061--1110.

\bibitem[\protect\citeauthoryear{Newey}{Newey}{2009}]{newey2009twostep}
Newey, W.~K. (2009).
\newblock Two-step series estimation of sample selection models.
\newblock {\em The Econometrics Journal\/}~{\em 12}, S217--S229.

\bibitem[\protect\citeauthoryear{Powell}{Powell}{2001}]{powell2001semiparametr%
ic}
Powell, J.~L. (2001).
\newblock Semiparametric estimation of censored selection models.
\newblock In C.~Hsiao, K.~Morimune, and J.~L. Powell (Eds.), {\em Nonlinear
  Statistical Modeling}, Chapter~13, pp.\  165--196. Cambridge: Cambridge
  University Press.

\bibitem[\protect\citeauthoryear{Powell, Stock, and Stoker}{Powell
  et~al.}{1989}]{powell1989semiparametric}
Powell, J.~L., J.~H. Stock, and T.~M. Stoker (1989).
\newblock Semiparametric estimation of index coefficients.
\newblock {\em Econometrica\/}~{\em 57\/}(6), 1403--1430.

\bibitem[\protect\citeauthoryear{Racine and Li}{Racine and
  Li}{2004}]{racine2004nonparametric}
Racine, J. and Q.~Li (2004).
\newblock Nonparametric estimation of regression functions with both
  categorical and continuous data.
\newblock {\em Journal of Econometrics\/}~{\em 119\/}(1), 99--130.

\bibitem[\protect\citeauthoryear{Schafgans}{Schafgans}{1998}]{schafgans1998eth%
nic}
Schafgans, M.~M. (1998).
\newblock Ethnic wage differences in {Malaysia}: parametric and semiparametric
  estimation of the {Chinese}-{Malay} wage gap.
\newblock {\em Journal of Applied Econometrics\/}~{\em 13\/}(5), 481--504.

\bibitem[\protect\citeauthoryear{Schafgans}{Schafgans}{2000}]{schafgans2000gen%
der}
Schafgans, M.~M. (2000).
\newblock Gender wage differences in {Malaysia}: Parametric and semiparametric
  estimation.
\newblock {\em Journal of Development Economics\/}~{\em 63\/}(2), 351--378.

\bibitem[\protect\citeauthoryear{Schafgans}{Schafgans}{2004}]{schafgans2004fin%
ite}
Schafgans, M.~M. (2004).
\newblock Finite sample properties for the semiparametric estimation of the
  intercept of a censored regression model.
\newblock {\em Statistica Neerlandica\/}~{\em 58\/}(1), 35--56.

\bibitem[\protect\citeauthoryear{Schafgans and Zinde-Walsh}{Schafgans and
  Zinde-Walsh}{2002}]{schafgans2002intercept}
Schafgans, M.~M. and V.~Zinde-Walsh (2002).
\newblock On intercept estimation in the sample selection model.
\newblock {\em Econometric Theory\/}~{\em 18\/}(1), 40--50.

\bibitem[\protect\citeauthoryear{Shao}{Shao}{2003}]{shao2003}
Shao, J. (2003).
\newblock {\em Mathematical Statistics}.
\newblock Springer-Verlag, New York.

\bibitem[\protect\citeauthoryear{Shen}{Shen}{2013}]{shen2013determinants}
Shen, C. (2013).
\newblock Determinants of health care decisions: Insurance, utilization, and
  expenditures.
\newblock {\em Review of Economics and Statistics\/}~{\em 95\/}(1), 142--153.

\bibitem[\protect\citeauthoryear{Stute}{Stute}{1982}]{stute1982oscillation}
Stute, W. (1982).
\newblock The oscillation behavior of empirical processes.
\newblock {\em The Annals of Probability\/}~{\em 10\/}(1), 86--107.

\bibitem[\protect\citeauthoryear{Tan and Zhang}{Tan and
  Zhang}{2018}]{tan2018root}
Tan, L. and Y.~Zhang (2018).
\newblock Root-n consistency of intercept estimators in a binary response model
  under tail restrictions.
\newblock {\em Econometric Theory\/}~{\em 34\/}(6), 1180--1206.

\end{thebibliography}

\end{document}